\crefname{subsection}{subsection}{subsections}
\theoremstyle{plain}
\newtheorem{theorem}{Theorem}[section]
\newtheorem{plottwist}[theorem]{Plot-twist}
\newtheorem{definition}[theorem]{Definition}
\newtheorem{lemma}[theorem]{Lemma}
\newtheorem{corollary}[theorem]{Corollary}
\newtheorem{openquestion}[theorem]{Open question}
\declaretheoremstyle[qed={$\blacksquare$}]{exampleStyle}
\declaretheorem[style=exampleStyle, sibling=theorem, name=Example, numbered=no]{example*}
\newcommand{\ccite}[2][]{%
    \IfSubStr{#2}{,}{refs.~}{ref.~}%
    \ifthenelse{\equal{#1}{}}{\cite{#2}}{\cite[#1]{#2}}}
\newcommand{\Ccite}[2][]{%
    \IfSubStr{#2}{,}{Refs.~}{Ref.~}%
    \ifthenelse{\equal{#1}{}}{\cite{#2}}{\cite[#1]{#2}}}
\newcommand{\zoo}[1]{%
    \href{%
        https://errorcorrectionzoo.org/c/#1%
    }{\includegraphics[height=.75\baselineskip, valign=c]{%
        figures/zoo_icon}}}
\newcommand{\ZZ}{\mathbb{Z}}
\newcommand{\GGG}{\mathcal{G}}
\newcommand{\MMM}{\mathcal{M}}
\newcommand{\PPP}{\mathcal{P}}
\newcommand{\SSS}{\mathcal{S}}
\newcommand{\one}{\mathds{1}}
\newcommand{\defeq}{\coloneqq}
\newcommand{\after}{\circ}
\newcommand{\allIdentitiesExcept}[1]{I \otimes \ldots \otimes I \otimes #1 \otimes I \otimes \ldots \otimes I}
\newcommand{\isg}[1]{\SSS_{#1}}
\newcommand{\lpg}[1]{N(\isg{#1}) / \isg{#1}}
\newcommand{\sublpg}[1]{N(\SSS) / #1}
\newcommand{\qcode}[1]{\llbracket#1\rrbracket}
\newcommand{\stabCase}[1]{\hyperlink{stabilizer_formalism_case_#1}{\textit{Case #1}}}
\newcommand{\normCase}[1]{\hyperlink{normalizer_formalism_case_#1}{\textit{Case #1}}}
\newcommand{\ol}[1]{\overline{#1}}
\newcommand{\qqquad}{\quad\qquad}
\newcommand{\qqqquad}{\quad\qqquad}
\newcommand{\modPhases}[1]{#1^\star}
\newcommand{\addPhases}[1]{\langle i \rangle #1}
\DeclareMathAlphabet{\mathcal}{OMS}{cmsy}{m}{n}
\begin{document}
    \title{Floquetifying the Colour Code}

\author{
    Alex Townsend-Teague, Julio Magdalena de la Fuente and Markus Kesselring
    \institute{Dahlem Center for Complex Quantum Systems, Freie Universität Berlin, 14195 Berlin, Germany}
    \email{alex.townsend-teague@outlook.com}
}
\def\titlerunning{Floquetifying the Colour Code}
\def\authorrunning{Alex Townsend-Teague, Julio Magdalena de la Fuente \& Markus Kesselring}
\def\copyrightholders{Alex Townsend-Teague, \\ Julio Magdalena de la Fuente \& Markus Kesselring.}
\maketitle
    \begin{abstract}
    \noindent
    Floquet codes are a recently discovered type of quantum error correction code.
    They can be thought of as generalising stabilizer codes and subsystem codes,
    by allowing the logical Pauli operators of the code to vary dynamically over time.
    In this work, we use the ZX-calculus to create new Floquet codes that are
    in a definable sense equivalent to known stabilizer codes.
    In particular, we find a Floquet code that is equivalent to the colour code,
    but has the advantage that all measurements required to implement it are of weight one or two.
    Notably, the qubits can even be laid out on a square lattice.
    This circumvents current difficulties with implementing the colour code fault-tolerantly,
    while preserving its advantages over other well-studied codes,
    and could furthermore allow one to benefit from extra features exclusive to Floquet codes.
    On a higher level, as in \Ccite{ZxFaultTolerancePsiQuantum},
    this work shines a light on the relationship between
    `static' stabilizer and subsystem codes and `dynamic' Floquet codes;
    at first glance the latter seems a significant generalisation of the former,
    but in the case of the codes that we find here,
    the difference is essentially just a few basic ZX-diagram deformations.
\end{abstract}
    \section{Introduction}\label{sec:introduction}

In 2021, Hastings and Haah discovered the honeycomb code~\cite[\zoo{honeycomb}]{DynamicallyGeneratedLogicalQubits}.
At first glance, it looked a lot like a subsystem code~\cite[\zoo{oecc}]{SubsystemCodesOriginalPaper};
it was defined via a sequence of non-commuting Pauli measurements whose individual outcomes were random,
but combined to give deterministic outcomes suitable for use in catching
errors that occur during quantum computation.
But something didn't quite add up. 
Viewed exactly as a subsystem code, it encoded no logical information.
This was because the logical information was instead `dynamically' encoded.
This made it the first member of a new class of codes,
which have come to be called \textit{Floquet codes}~\cite[\zoo{floquet}]{PlanarFloquetCodes}.
Since its publication, it seems a number of people have been holed up in their offices thinking about Floquet
codes, because of late a new one has popped up every month or
so~\cite{AnyonCondensation, FloquetCodesWithoutParents, AutomorphismCodes, XCubeFloquetCode, AndiFixedPointPathIntegrals, FloquetTwistDefectNetworks}.
In particular, in both \Ccite{AnyonCondensation} and \Ccite{FloquetCodesWithoutParents}, a whole family of Floquet
codes is described, of which the honeycomb code is a single member.
We call this family the \textit{condensed colour codes}, as in~\cite{AnyonCondensation}.

\begin{plottwist}
    Before the honeycomb code paper was published,
    another Floquet code had already independently been discovered.
\end{plottwist}

In \Ccite{ZxFaultTolerancePsiQuantum},
the authors write that Hector Bombín had previously discovered an equivalence between
the rotated surface code~\cite[\zoo{rotated_surface}]{RotatedSurfaceCodeOriginalPaper}
and a (then unnamed) condensed colour code.
This equivalence is shown in \Ccite{ZxFaultTolerancePsiQuantum} using the ZX-calculus,
a flexible but rigorous graphical formalism for quantum mechanics~\cite{ZxCalculusOriginalPaper, DodoBook}.
Given any ZX-diagram representing an error correction protocol,
something the ZX-calculus is particularly good at is
identifying ways of rewriting high-weight Pauli measurements as weight-two or weight-one Pauli measurements.
The authors of \Ccite{ZxFaultTolerancePsiQuantum} applied this idea to
a ZX-diagram representing multiple rounds of rotated surface code measurements.
Interestingly, Craig Gidney had himself previously done something very similar in \Ccite{PairwiseMeasurementSurfaceCode}
to find an implementation of the rotated surface code that uses only weight-two measurements.
The only difference is that Gidney applied the idea to a circuit representing a \emph{single} round of
rotated surface code measurements.
In both cases all measurements were reduced to weight-two, but only in the former case was the result a Floquet
code\footnote{%
    Similar in spirit to \Ccite{ZxFaultTolerancePsiQuantum} is \Ccite{AndiFixedPointPathIntegrals};
    both can be viewed as using graphical tensor network approaches to construct new Floquet codes from existing stabilizer codes.
    However, the latter's author approaches things through the lens of \textit{topological spacetime path-integrals},
    and uses a tensor network approach that is closely related to, but isn't exactly the same as, the ZX-calculus.
    Nonetheless, for \textit{topological codes}, we believe the two approaches are equivalent.}.

The colour code~\cite[\zoo{color}]{ColourCodeOriginalPaper}
has certain properties that make it arguably more appealing than the surface code,
such as a higher encoding rate~\cite{ColourCodeLatticeSurgery},
transversal Clifford gates~\cite{ColourCodeOriginalPaper}
and more efficient lattice surgery operations~\cite{LowOverheadColourCode}.
The high-weight measurements naively required for its implementation, however,
are the major obstacle to realising it practically~\cite{ColourCodeFlagQubits}.
So in the same way that \Ccite{ZxFaultTolerancePsiQuantum} `Floquetified' the rotated surface code,
it would be great to have a `Floquetified' colour code -
that is, a Floquet code that is in a definable sense equivalent to the colour code -
in which all measurements are weight-two or less.
Perhaps even more excitingly,
Floquet codes can come equipped with the ability to perform logical Clifford gates
both fault-tolerantly and at no extra effort;
this is discussed in detail in \Ccite{AutomorphismCodes}.
The honeycomb code, for example, naturally performs a fault-tolerant logical Hadamard gate every three timesteps.
Exactly which logical gates can be implemented by a Floquet code in this manner is restricted by
the automorphism group of what condensed matter theorists call the \textit{anyonic defects} of the code.
The honeycomb code can be shown to have exactly one non-trivial such automorphism,
which corresponds exactly to the logical Hadamard gate.
A Floquetified colour code, however, would in principle inherit its automorphism group from the colour code -
this is much richer, containing 72 elements~\cite{YoshidaTopologicalColourCode,ColourCodeBoundariesAndTwists}.
So in such a code,
the set of logical Clifford gates that could potentially be fault-tolerantly implemented in this way is larger.

To this end, we aimed to use the ideas from \Ccite{ZxFaultTolerancePsiQuantum}
to Floquetify the colour code.
We succeeded, finding a Floquet code with period 13 whose qubits can be laid out on a square lattice,
and in which all measurements are weight one or two.
This paper proceeds as follows.
In \Cref{sec:preliminaries}, we introduce the definitions and notation we'll need throughout the paper.
First, we introduce \textit{ISG codes},
of which stabilizer codes~\cite[\zoo{stabilizer}]{GottesmanPhdThesis},
subsystem codes
and Floquet codes are subtypes
(thus far, we're not aware of any universally accepted formal definition of a Floquet code in the literature).
We also import the graphical formalism of \textit{Pauli webs} from \Ccite{ZxFaultTolerancePsiQuantum}
(generalised to \textit{stabilizer flow} in \Ccite{MattTimeDynamics}),
which allows us to reason graphically about stabilizers, logical operators and \textit{detectors}.
In \Cref{sec:floquetifying_422}, we jump in the shallow end by Floquetifying the $\qcode{4, 2, 2}$
code~\cite[\zoo{stab_4_2_2}]{422Code},
demonstrating the key ideas behind this Floquetification process on a simple example.
The deep end awaits in \Cref{sec:floquetifying_colour_code}, where we Floquetify the colour code.
We include many extra details in appendices;
these will be signposted throughout the main text.

    \section{Preliminaries}\label{sec:preliminaries}

We will assume familiarity with
stabilizer codes and the stabilizer formalism~\cite{GottesmanPhdThesis},
as well as the ZX-calculus.
For the uninitiated, good introductory references are \Ccite[Section 10.5]{NielsenChuang}
or \Ccite{GottesmanQecLectureNotes} for the former,
and \Ccite{ZxWorkingScientist} for the latter.
In the appendix, we also include a reminder of how measurement works in the stabilizer formalism -
see \Cref{thm:stabilizer_formalism}.

\subsection{ISG codes}\label{subsec:isg_codes}

We begin by introducing \textit{ISG codes},
where `ISG' stands for \textit{instantaneous stabilizer group},
a term introduced in \Ccite{DynamicallyGeneratedLogicalQubits}.
But first, some notation;
we'll use $\sigma_0 = I, \sigma_1 = X, \sigma_2 = Y$ and $\sigma_3 = Z$ to denote the \textit{Pauli matrices},
and $\PPP_1$ to denote the \textit{single qubit Pauli group} they form under composition.
$\PPP_n$ will then denote the \textit{$n$-qubit Pauli group}
$\PPP_n = \{p_1  \otimes \ldots \otimes p_n \mid \forall j: p_j \in \PPP_1\}$ for $n > 1$.
Every element of this group can be written in the form
$i^\ell (\sigma_{j_1} \otimes \ldots \otimes \sigma_{j_n})$, for $\ell \in \{0, 1, 2, 3\}$.
This $i^\ell$ is often called a \textit{phase}.
One such element we'll use a lot is $(\sigma_j)_k = \allIdentitiesExcept{\sigma_j}$,
which has an $I$ in each tensor factor except the $k$-th, where we insert the Pauli matrix $\sigma_j$.
Another common element is $\one = I \otimes \ldots \otimes I$.
By a slight abuse of notation, we will usually write $i^\ell \one$ or $i^\ell I$ as just $i^\ell$.
The \textit{weight} of any element $i^\ell (\sigma_{j_1} \otimes \ldots \otimes \sigma_{j_n})$ is
the number of Pauli matrices $\sigma_{j_k}$ that aren't $I$.

Rather than defining qubits via Hilbert spaces, we'll stick to group theory;
we'll simply define that we have \textit{a system of $n$ qubits} whenever we have any group $\GGG$ isomorphic to $\PPP_n$.
A particularly important example will be $\GGG = N(\SSS) / \SSS \cong \PPP_{n-r}$,
for any stabilizer group $\SSS \leq \PPP_n$
with \textit{rank} (size of any minimal generating set) $r$~\cite[Section 3.4]{GottesmanQecLectureNotes}.
Recall that a stabilizer group is just any subgroup of $\PPP_n$ that doesn't contain $-\one$,
and this forces it to be Abelian.
Here, $N(\SSS) = \{p \in \PPP_n \mid p\SSS p^{-1} = \SSS \}$ denotes the \textit{normalizer} of $\SSS$ in $\PPP_n$.
So $N(\SSS) / \SSS$ is the quotient group consisting of \textit{left cosets} of $\SSS$ in $N(\SSS)$,
whose elements $p\SSS = \{ps \mid s \in \SSS \}$ will often be denoted $\ol{p}$ for short.
The \textit{weight} of such a coset is the minimum weight over all its elements.
For any groups $\GGG_1, \ldots, \GGG_{\ell}$,
we'll write the product $\GGG_1 \ldots \GGG_{\ell}$ to mean
the group generated by the union of generating sets for $\GGG_1, \ldots, \GGG_{\ell}$.

It can be shown that $\PPP_n$ has presentation
$\PPP_n = \langle i, X_1, Z_1, X_2, Z_2, \ldots, X_n, Z_n \rangle$.
Thus any group $\GGG$ isomorphic to $\PPP_n$ has presentation $\langle \iota, x_1, z_1, x_2, z_2, \ldots, x_n, z_n \rangle$,
for some elements $\iota, x_j, z_j$ in $\GGG$,
with group isomorphism $\GGG \cong \PPP_n$ given by $\iota \mapsto i, x_j \mapsto X_j$ and $z_j \mapsto Z_j$.
In particular, all the $x_j$ and $z_j$ generators obey the same commutativity relations as the Paulis $X_j$ and $Z_j$.
If we think of $\GGG$ as defining $n$ qubits,
we can identify qubit $j$ with the subgroup $\langle \iota, x_j, z_j \rangle \cong \langle i, X_j, Z_j \rangle \cong \PPP_1$.

We can now define an \textit{ISG code}.
Given $n$ qubits, an ISG code is defined entirely by a \textit{measurement schedule} $\MMM$,
which is an ordered list $[\MMM_0, \MMM_1, \ldots]$ of Abelian subgroups of $\PPP_n$.
The schedule $\MMM$ can be finite or infinite.
If it's finite, with length $\ell$, say, then we let the subscript in $\MMM_j$ be modulo $\ell$.
Given any such $\MMM$, there exists a subgroup $\SSS_t$ of $\PPP_n$ for all $t \in \ZZ$ called the \textit{instantaneous stabilizer group} (ISG).
This is defined recursively: for $t < 0$, it's always the trivial group $\{\one\}$.
For $t \geq 0$, $\isg{t}$ is formed from $\isg{t-1}$ by measuring a generating set for $\MMM_t$;
the effect of this can be determined using the stabilizer formalism (\Cref{thm:stabilizer_formalism}).
This is well-defined, in that it doesn't depend on the choice of generating set for $\MMM_t$.
We'll often call $t$ the \textit{timestep} (or just \textit{time}).
At every timestep $t \in \ZZ$, let $r_t$ denote the rank of $\isg{t}$,
and let $k_t \defeq n-r_t$.
Then $\lpg{t} \cong \PPP_{k_t}$.
That is, we can consider ourselves to have a system of $k_t$ qubits.
This is the idea behind an \textit{ISG code}.

\begin{definition}\label{defn:ISG_code}
    An $\qcode{n, k, d}$ \textbf{ISG code} is given by a measurement schedule $\MMM$,
    with the property that for some $T \in \ZZ_{\geq 0}$ and all $t \geq T$,
    the group $\isg{t}$ has some fixed rank $r$.
    We say that for all such timesteps $t \geq T$, the code is \textbf{established}.
    This encodes $k = n-r$ \textbf{logical qubits} whenever $t \geq T$,
    via the \textbf{logical Pauli group} $\lpg{t} \cong \PPP_{k}$.
    The \textbf{distance} $d$ is the minimum weight of any element of $\lpg{t}$, over all $t \geq T$.
    The \textbf{period} is the length $\ell \in \ZZ_{\geq 0} \cup \{\infty\}$ of the list $\MMM = [\MMM_0, \MMM_1, \ldots]$,
    and can be finite or infinite.
\end{definition}

For a slightly longer discussion of this definition, see \Cref{sec:isg_code_remarks}.
As advertised, stabilizer codes, subsystem codes and Floquet codes are types of ISG code.

\begin{definition}\label{defn:stabilizer_code_as_ISG_code}
    A \textbf{stabilizer code} is an ISG code
    $\MMM = [\MMM_0, \MMM_1, \ldots]$ such that
    the group $\MMM_0 \MMM_1 \ldots$ generated by
    the union of generating sets of all $\MMM_t$ in $\MMM$ is itself Abelian.
\end{definition}

A stabilizer code has the property that, after establishment at time $T$,
the ISG $\SSS_t$ is the same for all $t \geq T$.
There thus exist fixed Paulis $x_1, z_1, \ldots, x_k, z_k \in \PPP_n$ such that
$\langle \ol{i}, \ol{x_1}, \ol{z_1}, \ldots, \ol{x_k}, \ol{z_k} \rangle$
is a presentation for $\lpg{t} \cong \PPP_k$ for all $t \geq T$.
This latter property is baked into the next definition, which is admittedly a bit of a mouthful,
and can be skipped by any readers not already familiar with subsystem codes.
Therein, for a group $\GGG$, we let
$Z(\GGG) = \{g \in \GGG \mid gh = hg ~ \forall h \in \GGG \}$ denote its \textit{center},
and $\PPP_k^\circ$ be the `almost Pauli group' $\langle X_1, Z_1, \ldots, X_k, Z_k \rangle$.

\begin{definition}\label{defn:subsystem_code_as_ISG_code}
    A \textbf{subsystem code} is an ISG code $\MMM = [\MMM_0, \MMM_1, \ldots]$ that establishes at time $T$,
    such that the group $\MMM_0 \MMM_1 \ldots$ generated by
    the union of generating sets of all $\MMM_t$ in $\MMM$ satisfies the following:
    letting $\GGG = \addPhases{\MMM_0 \MMM_1 \ldots}$
    and $\SSS$ be a stabilizer group such that $\addPhases{\SSS} = Z(\GGG)$,
    there exist fixed Paulis $x_1, z_1, \ldots, x_k, z_k$ such that
    $\langle x_1 \GGG, z_1 \GGG, \ldots, x_k \GGG, z_k \GGG \rangle$
    is a presentation for $\sublpg{\GGG} \cong \PPP_k^\circ$ and,
    for all $t \geq T$,
    $\langle i \isg{t}, x_1 \isg{t}, z_1 \isg{t}, \ldots, x_k \isg{t}, z_k \isg{t} \rangle$
    is a presentation for $\lpg{t} \cong \PPP_k$.
\end{definition}

Unlike a stabilizer code,
after establishment at time $T$,
the ISG $\SSS_t$ of a subsystem code may change from one timestep to another (while always having the same rank).
However, such a code still has the property that
there exist fixed Paulis that can represent $\lpg{t} \cong \PPP_k$ for all $t \geq T$.
This is what is meant when stabilizer and subsystem codes are labelled \textit{static}.

Our stabilizer code definition above agrees exactly with the usual one;
though our definition requires a measurement schedule $\MMM = [\MMM_0, \MMM_1, \ldots]$ to be specified,
the fact that $\MMM_0 \MMM_1 \ldots$ is Abelian makes this irrelevant.
Our subsystem code definition, however, slightly deviates from the usual one;
here the fact that a measurement schedule is required is very relevant.
We go into more detail on this in \Cref{sec:subsystem_code_remarks}.

\begin{definition}\label{defn:floquet_code}
    A \textbf{Floquet code} is an ISG code with a finite period.
\end{definition}

\begin{figure}[b]
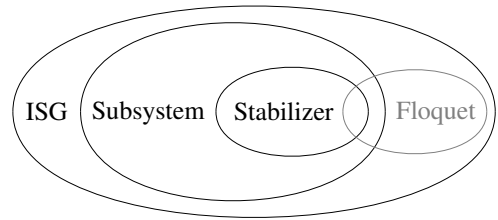

    \centering
    \begin{tabular}{c | c | c | c}
        \hline
        Code type & $\isg{t}$ & $\lpg{t}$ & \textcolor{gray}{Period} \\
        \hline
        Stabilizer & Static & Static & \textcolor{gray}{Finite or infinite} \\
        Subsystem & Dynamic & Static & \textcolor{gray}{Finite or infinite} \\
        \textcolor{gray}{Floquet} & \textcolor{gray}{Dynamic} & \textcolor{gray}{Dynamic} & \textcolor{gray}{Finite} \\
        ISG & Dynamic & Dynamic & \textcolor{gray}{Finite or infinite} \\
        \hline
    \end{tabular}
    \quad
    \scalebox{0.9}{\tikzfig{figures/ISG_codes/venn_diagram}}
    \caption{%
        Table and Venn diagram showing relationships between stabilizer, subsystem, Floquet and ISG codes.
        We attribute little importance to whether an ISG code has finite period,
        hence this column in the table is drawn in grey.
        If this column is ignored, there is no difference between a Floquet code and a general ISG code,
        hence the former's row in the table, as well as its bubble in the Venn diagram, are also drawn in grey.
        The Venn diagram shows that all stabilizer codes are subsystem codes, and all subsystem codes are ISG codes.
        Though all Floquet codes are ISG codes,
        only stabilizer and subsystem codes with finite period are Floquet codes.}
    \label{fig:isg_relationships}
    \vspace{-10pt}
\end{figure}

We do not attribute so much importance to whether or not an ISG code has finite period,
and hence whether it's labelled a Floquet code or not.
Indeed, there are ISG codes that don't fall into any of the three categories above -
examples include the \textit{dynamic tree codes} of \Ccite{FloquetCodesWithoutParents}.
More interesting to us is the fact that for a general ISG code that establishes at time $T$,
there need not exist fixed Paulis $x_1, z_1, \ldots, x_k, z_k \in \PPP_n$ such that
$\langle \ol{i}, \ol{x_1}, \ol{z_1}, \ldots, \ol{x_k}, \ol{z_k} \rangle$
is a presentation for $\lpg{t} \cong \PPP_k$ for all $t \geq T$.
This is what is meant when such codes are labelled \textit{dynamic}.

In \Cref{fig:isg_relationships},
we show a table and a Venn diagram characterising the relationships between these code types,
and in \Cref{sec:isg_code_example} we give a simple example of an ISG code and its evolution.
When working with ISG codes,
calculating the effect on $\lpg{t}$ of measuring a Pauli $p \in \PPP_n$ is paramount.
To this end, a vital tool is a corollary of the stabilizer formalism which we informally call the \textit{normalizer formalism};
we state and prove it in \Cref{sec:normalizer_formalism}.

\subsection{Pauli webs}\label{subsec:stabilizer_flows}

Though we've now defined ISG codes,
we haven't said how to actually detect errors on them,
nor how to perform logical (Pauli) operations.
Both of these can be viewed elegantly in the ZX-calculus via \textit{Pauli webs},
as defined in \Ccite{ZxFaultTolerancePsiQuantum}.
This is analogous to \textit{firing} spiders in \Ccite{BorghansMastersThesis},
and is generalised to \textit{stabilizer flow} in \Ccite{MattTimeDynamics}.
Here we'll only introduce it in a limited and informal way,
since this is all we'll need for \Cref{sec:floquetifying_422,sec:floquetifying_colour_code}.
For a more rigorous discussion, see \Ccite{ZxFaultTolerancePsiQuantum} or \Ccite{MattTimeDynamics}.

Given a \textit{Clifford ZX-diagram} (one in which all spider phases are integer multiples of $\frac{\pi}{2}$),
we'll define an \textit{(unsigned CSS) Pauli web} to be a highlighting of wires green or red
(corresponding to $Z$ and $X$, respectively), according to certain rules.
Essentially, the green highlighted edges correspond to how a $Z$ gate can propagate through the diagram,
and likewise for red edges and the $X$ gate.
Specifically, a highlighted wire can only end at a \textit{Pauli spider}
(one whose phase is an integer multiple of $\pi$), a Hadamard box\footnote{%
    Under the hood, a Hadamard box is actually 
    a composition of three spiders with phases $\pm \frac{\pi}{2}$.
    The fact that an unsigned CSS Pauli web can end here
    might seem to contradict the fact we just said it can only end at Pauli spiders.
    However, this is just a consequence of the limited way in which we've imported Pauli webs here.
    More general (unsigned) Pauli webs can terminate at any \textit{Clifford spider} -
    one with phase $k \frac{\pi}{2}$, for $k \in \ZZ$.},
or an input or output node of the overall diagram;
a green Pauli spider must have an even number of legs highlighted green (and likewise for red Pauli spiders and red edges);
a green Pauli spider must have no legs or every leg highlighted red (and likewise for red Pauli spiders and green edges);
and if one leg of a Hadamard box is highlighted green, the other must be red.
Examples of Pauli webs on small ZX-diagrams are shown below.
Throughout this paper, ZX-diagrams should be read bottom-to-top:
\begin{equation}\label{eq:simple_stabilizer_flows}
    \begin{aligned}
        \includegraphics[width=350pt]{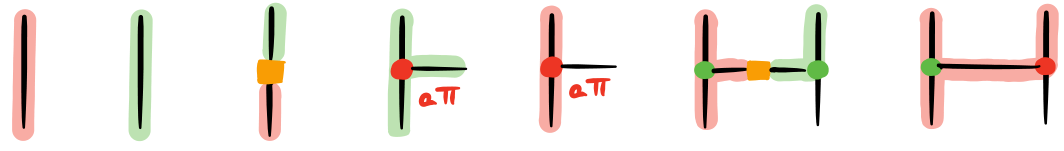}
    \end{aligned}
\end{equation}

\subsubsection{Detectors}\label{subsubsec:detectors}

A detector is a set of measurement outcomes $m_j \in \{-1, 1\}$
whose product is deterministic in the absence of noise~\cite{StimPaper, SparseBlossom}.
We can write them as \textit{formal products}\footnote{%
    By \textit{formal product},
    we mean we forget that symbols like $m_j$ are actually stand-ins for values like $-1$ and $1$,
    and treat the symbols just as objects to be moved around algebraically.
    By taking powers modulo 2, we mean - for example -
    the formal product $m^3$ is the same as $m^1 = m$.}
with powers taken modulo 2.
For example, given a qubit in state $\ket{0}$, a $Z$-basis measurement should deterministically give outcome $m = 1$.
Thus the formal product $m$ is a detector.
On the other hand, if the qubit is in state $\ket{+}$, a $Z$-basis measurement's outcome $m_1$ is completely random.
But a second $Z$-basis measurement should give outcome $m_2$ identical to $m_1$; that is, $m_1 m_2$ should be 1.
So the formal product $m_1 m_2$ is a detector.
Specifically, it detects Pauli $X$ errors -
if one occurs between the first and second measurement, we'll get $m_2 = -m_1$ and hence $m_1 m_2 = -1$.
We say in this case that the detector $m_1 m_2$ is \textit{violated}.
In an ISG code context,
detectors occur whenever we measure a Pauli $p$ such that $p$ or $-p$ is in the ISG $\SSS_t$
(\stabCase{2} in \Cref{sec:normalizer_formalism}).
Following~\cite{MattTimeDynamics}, we'll define an \textit{(unsigned CSS) detecting region} to be
a Pauli web with the additional constraint that no input or output nodes of the overall diagram are incident to highlighted edges.
Now, recall that in the ZX-calculus,
$Z \otimes Z \otimes \ldots \otimes Z$ and $X \otimes X \otimes \ldots \otimes X$ measurements
with outcome $m$ can be represented respectively as:
\begin{equation}\label{eq:zx_CSS_measurements}
    \begin{aligned}
        \includegraphics[width=350pt]{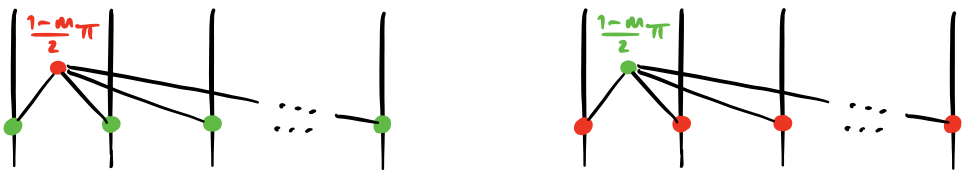}
    \end{aligned}
\end{equation}
In particular, note that the measurement outcome $m$ parametrises a spider phase.
A detecting region then corresponds to a detector $m_1 \ldots m_d$ as follows:
the measurement outcomes $m_j$ in the detector are all those that parametrise
a red spider incident to a green highlighted edge, or a green spider incident to a red highlighted edge.
In fact, throughout this paper we will always be able to post-select;
that is, we can assume all measurement outcomes $m_j$ are 1.
See \Cref{sec:measurement_vs_post_selection} for a longer discussion of this.
Below are some simple detecting regions;
in each diagram, horizontal wires correspond to measurements
(or rather, post-selections; we can thus omit the spider phases that correspond to the measurement outcomes).
The resulting detectors consist of exactly the outcomes of the measurements represented by these horizontal wires.
\begin{equation}\label{eq:simple_detecting_regions}
    \begin{aligned}
        \includegraphics[width=300pt]{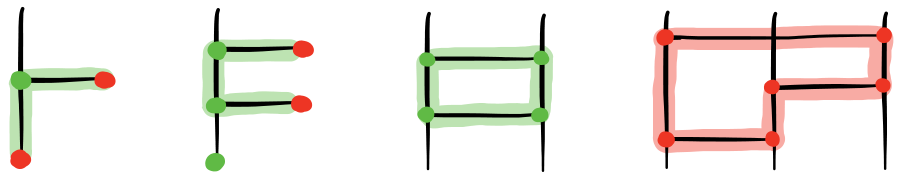}
    \end{aligned}
\end{equation}

\subsubsection{Stabilizers and logical operators}\label{subsubsec:stabilizers_and_logicals}

Given an ISG code,
the stabilizers (elements of $\isg{t}$) and logical operators (members of cosets of $\lpg{t}$)
can also be seen via Pauli webs.
In analogy with a detecting region,
we can define an \textit{(unsigned CSS) stabilizing region} on a ZX-diagram to be a Pauli web in which
none of the diagram's input nodes are incident to a highlighted edge, but at least one output node is.
Supposing the ZX-diagram has $n$ output wires,
the stabilizer corresponding to such a stabilizing region is (up to $\pm1$ sign) the Pauli
$p = \sigma_{j_1} \otimes \ldots \otimes \sigma_{j_n} \in \PPP_n$, where $\sigma_{j_\ell}$ is
$I$ if output wire $\ell$ isn't highlighted, $Z$ if it's highlighted green, and $X$ if it's highlighted red.
For an ISG code with measurement schedule $\MMM = [\MMM_0, \MMM_1, \ldots]$,
we can draw a ZX-diagram that corresponds to measuring a generating set for $\MMM_0$, then $\MMM_1$, and so on.
If we do this up to $\MMM_t$, the non-trivial elements of $\isg{t}$ are exactly the stabilizers derived from the stabilizing regions for this diagram.
Below we show this for timesteps $t \in \{-1, 0, 1\}$ of the distance-two repetition code.
This is a $\qcode{2, 1, 1}$ stabilizer code defined by $\MMM = [\langle Z_1 Z_2 \rangle]$.
Its ISG $\isg{t}$ is thus trivial for $t < 0$ and $\langle m Z_1 Z_2 \rangle$ for $t \geq 0$, for some measurement outcome $m$.
\begin{equation}\label{eq:rep_code_stabilizing_regions}
    \begin{aligned}
        t = -1:\quad
        \includegraphics[width=50pt, valign=c]{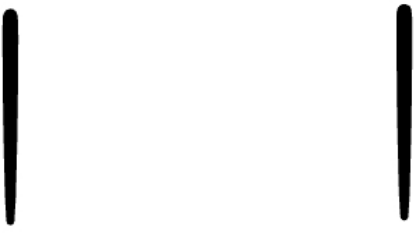}
        \qqqquad
        t = 0:
        \quad
        \includegraphics[width=50pt, valign=c]{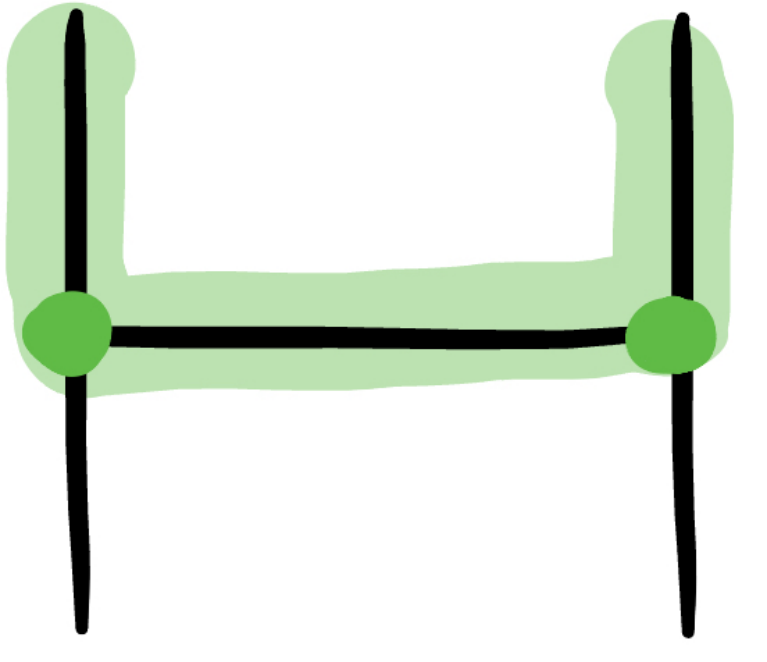}
        \qqqquad
        t = 1:
        \quad
        \includegraphics[width=50pt, valign=c]{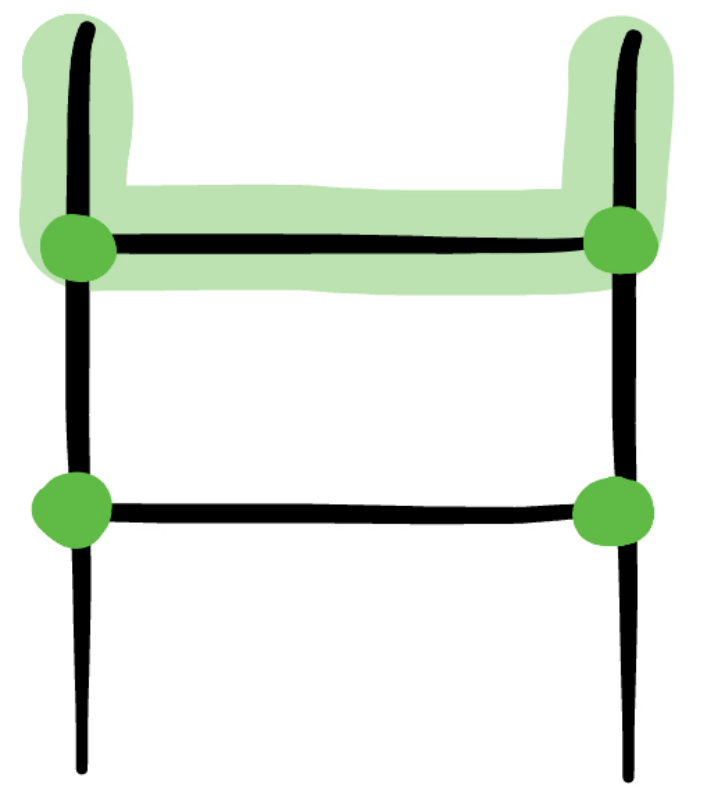}
    \end{aligned}
\end{equation}
Logical operators get a very similar treatment.
We can define an \textit{(unsigned CSS) operating region} to be a Pauli web in which
at least one input \textit{and} output node of the diagram are incident to highlighted edges.
If the diagram has $n$ output legs, the corresponding operator $p \in \PPP_n$ is again found by looking at the output wires,
in exactly the same way as for a stabilizer above.
Given any ISG code,
if we again draw a ZX-diagram that corresponds to sequentially measuring generating sets for $\MMM_0$ up to $\MMM_t$,
then representatives of non-trivial elements of $\lpg{t}$ are exactly the operators derived from the operating regions for this diagram.
The distance-two repetition code has
$\lpg{t} = \langle \ol{i}, \ol{Z_1}, \ol{X_1}, \ol{Z_2}, \ol{X_2} \rangle \cong \PPP_2$ for $t < 0$, and
$\lpg{t} = \langle \ol{i}, \ol{Z_1}, \ol{X_1 X_2} \rangle \cong \PPP_1$ for $t \geq 0$.
Below, we show operating regions at time $t=1$:
\begin{equation}\label{eq:rep_code_operating_regions}
    \begin{aligned}
        t = 1:
        \qquad
        \includegraphics[width=50pt, valign=c]{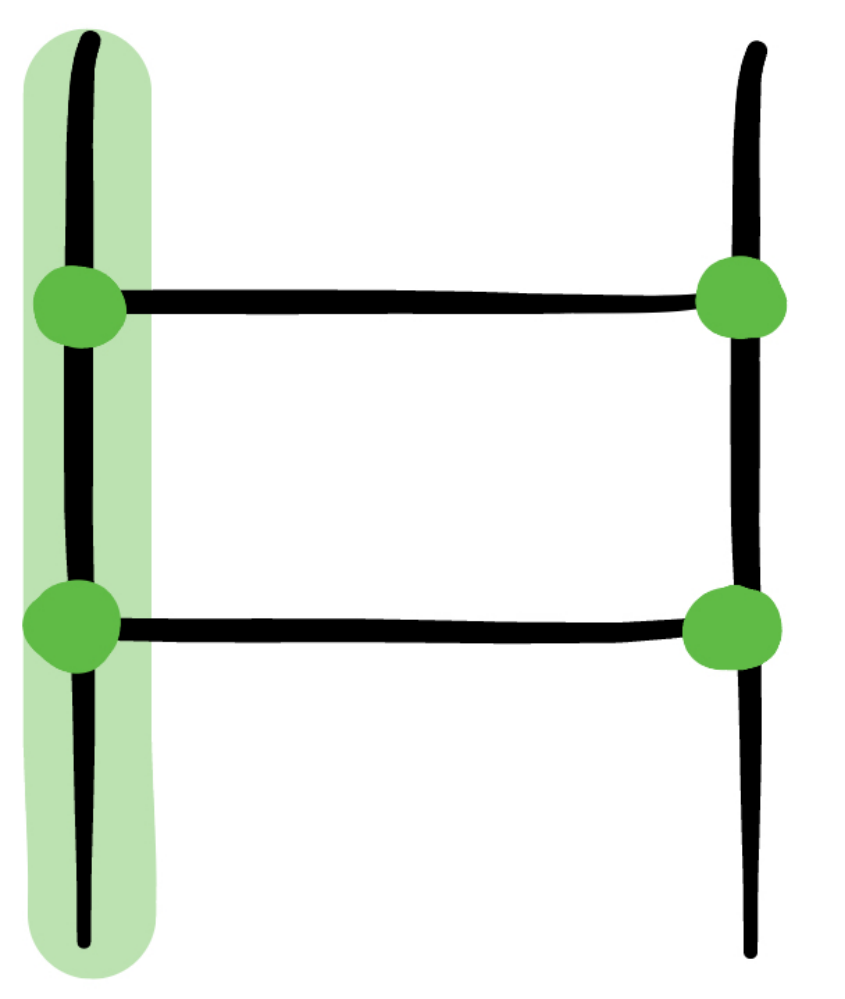}
        \qquad
        \text{and}
        \qquad
        \includegraphics[width=50pt, valign=c]{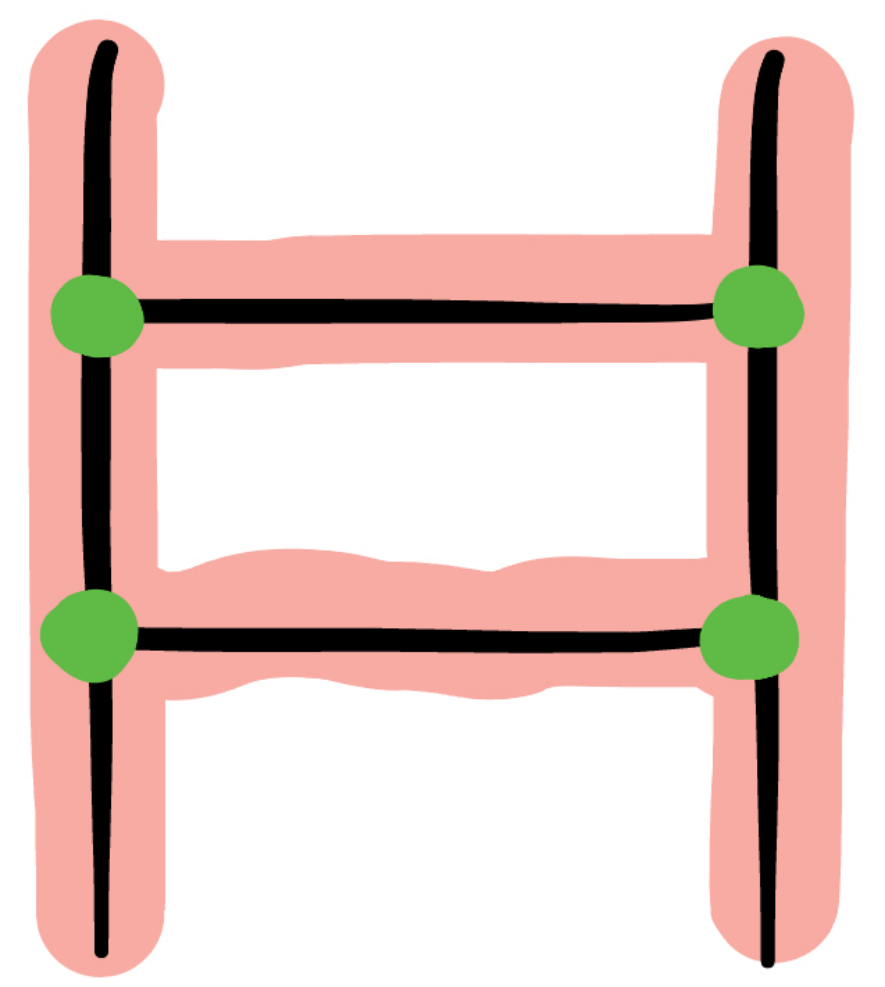}
    \end{aligned}
\end{equation}
We close this preliminary section with the comment that detectors and logical operators together provide an alternative view of an ISG code.
That is, one can think of such a code not as a group-theoretic object,
but as a Clifford ZX-diagram that is suitably covered by detecting regions,
and contains a non-empty set of pairs of operating regions whose corresponding operators satisfy the Pauli commutativity relations.
This corresponds to the unifying view of fault-tolerance put forward recently in \Ccite{ZxFaultTolerancePsiQuantum},
and is in the same spirit as the \textit{spacetime codes} of \Ccite{DelfosseSpacetimeCodes}.

    \section{Floquetifying the $\mathbf{\qcode{4, 2, 2}}$ code}\label{sec:floquetifying_422}

Let's warm up with one of the simplest interesting codes around: the $\qcode{4, 2, 2}$ code.
This is a stabilizer code, which we'll define as
$\MMM = [\langle Z_1 Z_2 Z_3 Z_4 \rangle, \langle X_1 X_2 X_3 X_4 \rangle]$.
The aim of this section is to prove the following:
\begin{theorem}\label{thm:4_2_2_double_hexagon_equivalence}
    The $\qcode{4, 2, 2}$ stabilizer code is equivalent as a ZX-diagram to a $\qcode{12, 2, 2}$ Floquet code with period 6,
    which we call the \textbf{double hexagon code}.
\end{theorem}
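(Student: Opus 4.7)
The plan is to start from the ZX-diagram that represents a single period of the $\qcode{4,2,2}$ code --- a green four-legged spider (for $Z_1Z_2Z_3Z_4$) followed by a red four-legged spider (for $X_1X_2X_3X_4$), each carrying one horizontal wire for its measurement outcome --- and apply spider-unfusion repeatedly to break each high-weight spider into a tree of three-legged spiders connected by internal wires, until every spider represents a measurement of weight at most two. The internal wires introduced by the decomposition get reinterpreted as ancillary qubits of a new ISG code, and the resulting diagram is equivalent to the original as a ZX-diagram by construction, since unfusion is a sound and invertible rewrite in the ZX-calculus.

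Concretely, I would unfuse the green four-spider along a path into three three-legged green spiders, introducing two ancilla wires, and do the analogous unfusion for the red four-spider. Stacking the picture vertically over many periods and identifying internal wires between rounds as persistent ancillae, each period of the resulting diagram involves the $4$ original qubits together with $8$ ancillas, for $n = 12$, and consists of six successive rounds of weight-one or weight-two measurements (three green, three red), giving the claimed period of $6$. Choosing the unfusion symmetrically so that the three-spiders lay out as two hexagonal cycles in the plane is what motivates the name \emph{double hexagon code}.

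From this diagrammatic equivalence I would then read off $k$ and $d$ using the Pauli-web framework of \Cref{subsec:stabilizer_flows}. Because ZX-rewrites send Pauli webs to Pauli webs, detecting, stabilizing and operating regions all lift bijectively under unfusion, and the Floquet code inherits $k = 2$ from the $\qcode{4,2,2}$ code, with representatives of the four non-trivial cosets generating $\lpg{t}$ given by the lifts of $\ol{Z_1 Z_2}$, $\ol{Z_1 Z_3}$, $\ol{X_1 X_2}$, $\ol{X_1 X_3}$. The lift only adds identity tensor factors on the new ancillary qubits, so the weight-two representatives of the original logicals remain weight-two after lifting, and no strictly shorter representative can appear; hence $d = 2$.

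The main obstacle is likely to be bookkeeping rather than deep theory: drawing the ZX rewrites explicitly, keeping careful track of which internal wires become ancilla qubits of the Floquetified code, and verifying that the period is exactly $6$ rather than a proper divisor of $6$ --- equivalently, that the sequence $\isg{0}, \ldots, \isg{5}$ has no smaller sub-period. A secondary subtlety is ensuring that the lifted detecting regions actually cover the whole spacetime diagram (so the Floquetification is a genuine error-detecting ISG code rather than just a sequence of disjoint measurements), and that the ISG does establish after some finite time $T$ with constant rank $r = 10$; both become immediate once the hexagonal tiling has been drawn and one reads off the detecting regions by hand.
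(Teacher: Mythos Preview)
Your high-level plan --- unfuse spiders, reinterpret the resulting wires as a new code --- matches the paper's, but two steps are not right.

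First, the construction. You describe decomposing each weight-four spider into a linear chain of three three-legged spiders, keeping the four original data qubits and treating the new internal wires as eight persistent ancillae. The paper does something more subtle: after unfusing, it \emph{reinterprets which wires are qubit world-lines}, so that the twelve qubits of the new code correspond to diagonal paths through the stacked diagram (the coloured/styled lines in \Cref{fig:4_2_2_rewritten}), not to the original four verticals plus some ancillas. This reinterpretation is what the paper calls ``tilting the time direction'' (see \Cref{eq:timeslice_4_2_2_double_hexagon}), and it is what produces the hexagonal connectivity and the period-$6$ schedule of \Cref{eq:floquetified_4_2_2_measurement_schedule}. Your linear-chain picture with static ancillae is closer in spirit to Gidney's pairwise-measurement surface code, which the introduction explicitly distinguishes from the Floquetification here; in particular your accounting ``three green rounds, three red rounds, so period $6$'' does not obviously yield the paper's schedule, in which each $\MMM_t$ contains \emph{both} weight-one and weight-two measurements and the pattern only closes after tracking world-lines across several periods of the original code.

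Second, and more seriously, your distance argument fails. You assert that the lift ``only adds identity tensor factors on the new ancillary qubits, so the weight-two representatives of the original logicals remain weight-two after lifting, and no strictly shorter representative can appear''. In the paper's construction this is false in both directions: the natural image of a weight-two $\qcode{4,2,2}$ logical under the Pauli-web correspondence is a \emph{weight-four} operator in the double hexagon code (see \Cref{sec:double_hexagon_code}, especially \Cref{fig:double_hexagon_x_1_truncations}), and the paper explicitly warns against concluding the distance is four. One must then multiply by elements of $\isg{t}$ to find weight-two representatives (\Cref{fig:double_hexagon_x_1_truncations_multiplied}), and separately check by hand that no weight-one logical exists at any $t$. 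The Pauli-web bijection transports cosets, not minimum weights, so your one-line argument for $d=2$ does not go through.
\medskip
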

In \Cref{fig:4_2_2_rewritten}, we show three equivalent ZX-diagrams depicting seven timesteps of this code.
Here the grey squares, grey numbers, wire colours and wire styles (solid/dashed) have no meaning in the ZX-calculus;
they're just visual aids.
The grey squares denote timesteps of the $\qcode{4, 2, 2}$ code, and the grey numbers and coloured/styled lines will be explained shortly.
The leftmost diagram is the most natural one;
it shows measurements of $Z_1 Z_2 Z_3 Z_4$ and $X_1 X_2 X_3 X_4$ alternating at each timestep.
The second diagram is obtained from the first by unfusing every spider in the center of a grey square into two spiders,
and unfusing every spider in the corner of a grey square into three spiders.
The third is identical to the second in the ZX-calculus - all we've done is coloured and styled certain wires,
and labelled all one-legged spiders and black wires with an integer.
Now, in the leftmost diagram,
we interpret the four vertical lines as the world-lines of the four qubits of the code.
But we need not do this!
The ZX-diagram remains equivalent if we choose to interpret different wires as qubit world-lines.
This is exactly what the colours and styles in the rightmost diagram are for;
each colour-style pair denotes a different qubit world-line.
Since there are twelve world-lines, we're now viewing this as a system of twelve qubits, rather than four.
\begin{figure}[t]
    \centering
    \includegraphics[width=340pt]{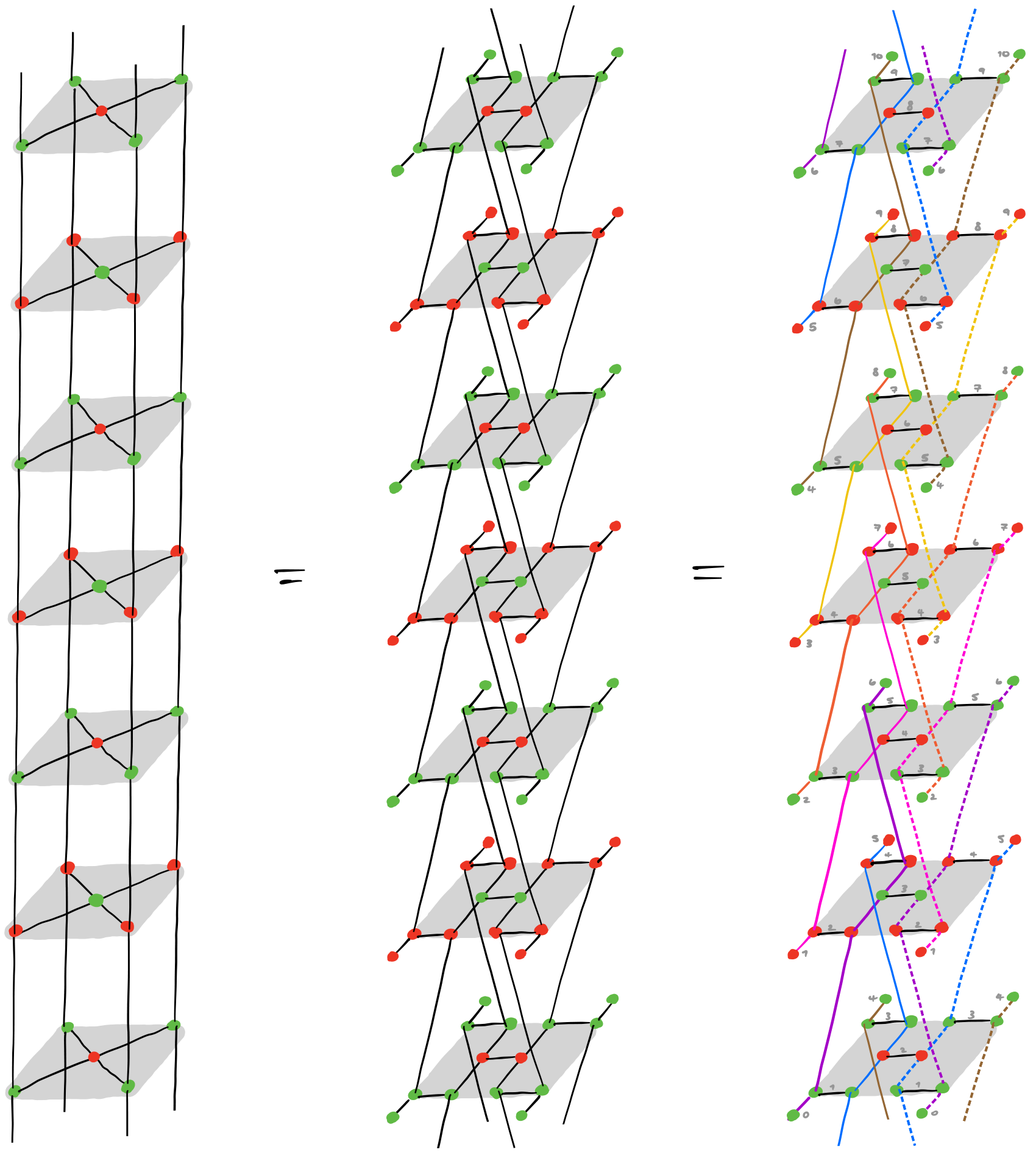}
    \caption{Three equivalent ZX-diagrams for seven timesteps of the $\qcode{4, 2, 2}$ code.}
    \label{fig:4_2_2_rewritten}
\end{figure}

Let's make some observations about this rightmost diagram.
Firstly, if we follow the world-line of any particular qubit up the page,
the integer labels incident to it form an increasing sequence.
For example, starting from the bottom of the diagram and following the solid purple qubit upwards,
the integers incident to it form the sequence $[0, 1, \ldots, 6]$.
We can thus think of these integers as a new set of timesteps for this diagram.
Next, notice that we can interpret all the uncoloured black wires like
\ \includegraphics[height=1\baselineskip, valign=c]{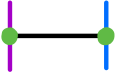}
\ and
\ \includegraphics[height=1\baselineskip, valign=c]{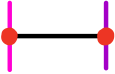}
\ as weight two $Z_u Z_v$ and $X_u X_v$ measurements (respectively) between qubits.
Furthermore, the qubit world-lines only have limited interactions with each other via these measurements.
Specifically, if we define ordered lists
$\mathbf{colours} = [
\mathbf{purple}, ~
\mathbf{pink}, ~
\mathbf{orange}, ~
\mathbf{yellow}, ~
\mathbf{brown}, ~
\mathbf{blue}]$
and
$\mathbf{styles} = [
\mathbf{solid}, ~
\mathbf{dashed}]$,
and let qubit $(i, j)$ denote the qubit with the $i$-th colour and $j$-th style,
where $i$ and $j$ are taken modulo 6 and 2 respectively,
then looking closely we see that qubit $(i, j)$ is only ever involved in
a measurement with the three qubits $(i+1, j), (i-1, j)$ and $(i, j+1)$.
So supposing we now wanted to lay out these qubits on a planar 2D chip,
a natural geometry would be a `double hexagon', as in the rightmost diagram of \Cref{fig:detector_4_2_2_double_hexagon}.

In fact, recalling the diagrammatic equation\
\includegraphics[height=1.1\baselineskip, valign=c]{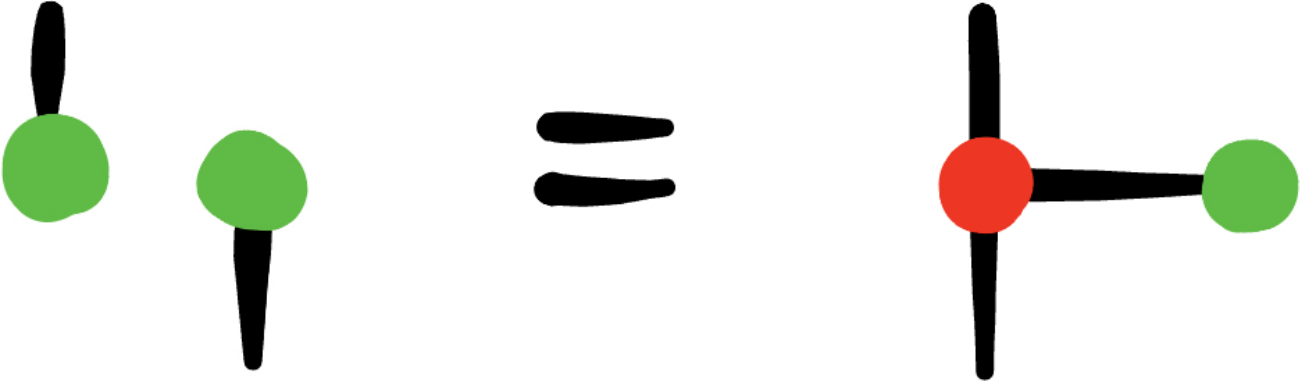}\ ,
which says non-destructive single qubit Pauli measurements disconnect wires,
we can also interpret all one-legged spiders as one half of such a measurement.
This interpretation is valid,
in that the timestep at which the world-line of qubit $(i, j)$ `ends' at a one-legged spider
is the same as the timestep at which it `resumes' via another one-legged spider further up the page.
Finally, we can see that this pattern of colours and styles repeats itself every six timesteps.

\begin{figure}[b]
    \centering
    \includegraphics[width=330pt]{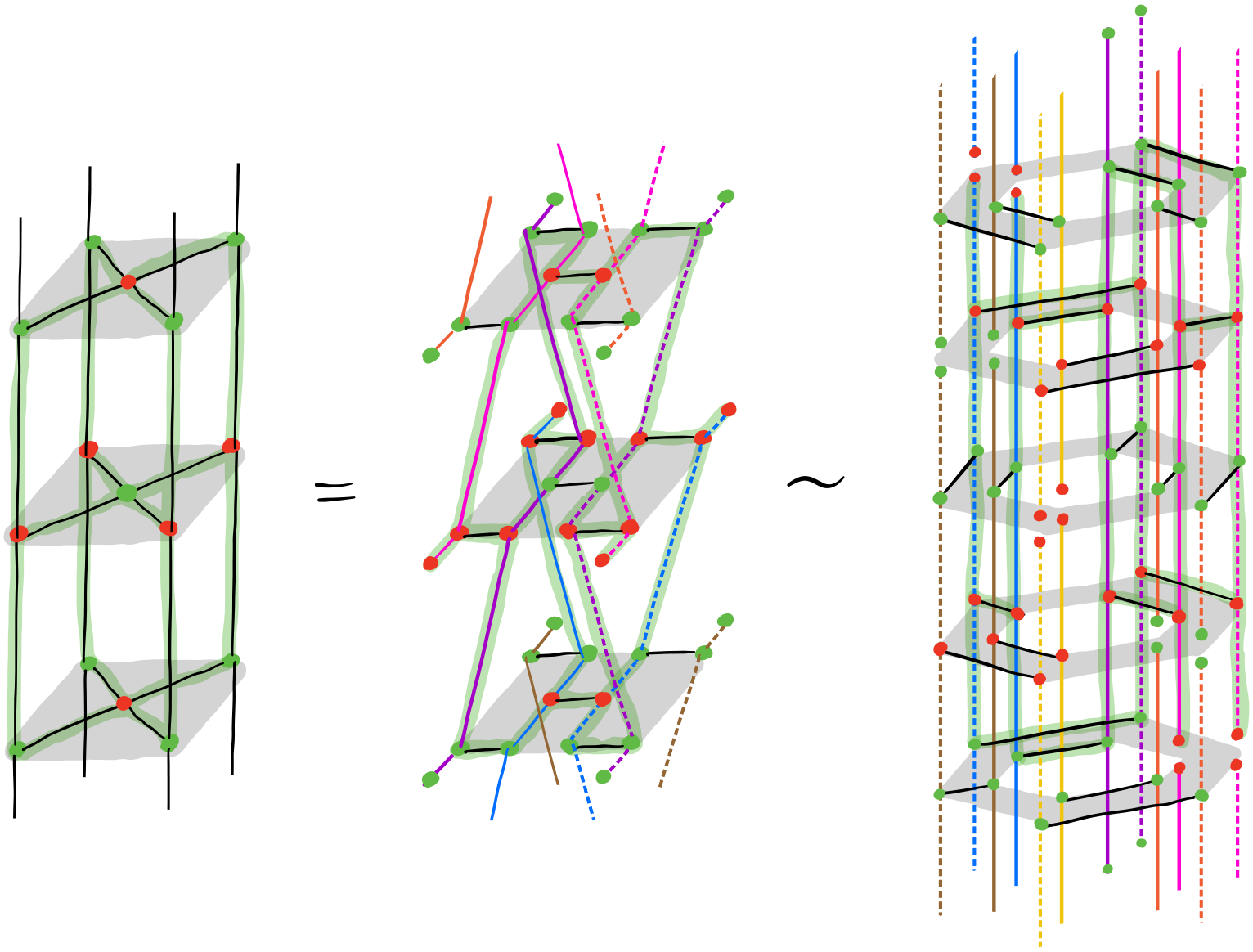}
    \caption{
        A detecting region in the $\qcode{4, 2, 2}$ code and its image in the double hexagon code.
        We use the ($\sim$) symbol between the second and third diagrams rather than an equals sign because,
        although the two codes can be viewed as having equivalent ZX-diagrams,
        these two particular subdiagrams are not equivalent -
        the rightmost one has more input and output wires, for example.}
    \label{fig:detector_4_2_2_double_hexagon}
\end{figure}

From this analysis, we can now interpret the rightmost ZX-diagram as an ISG code;
we can write down the measurements that each qubit undergoes at each timestep,
and consequently we can define the measurement schedule described by this diagram.
We get $\MMM = [\MMM_0, \ldots, \MMM_5]$, where:
\begin{equation}\label{eq:floquetified_4_2_2_measurement_schedule}
    \MMM_t = \stretchleftright[1000]
        {<}
            {\begin{array}{c}
                 X_{(t, 0)}, \\
                 X_{(t, 1)}, \\
                 X_{(t+1, 0)}X_{(t+2, 0)}, \\
                 X_{(t+1, 1)}X_{(t+2, 1)}, \\
                 X_{(t+3, 0)}X_{(t+3, 1)}, \\
                 X_{(t+4, 0)}X_{(t+5, 0)}, \\
                 X_{(t+4, 1)}X_{(t+5, 1)} \\
            \end{array}}
        {>}
    \text{~~if $t$ even,}
    \qquad\qquad
    \MMM_t = \stretchleftright[1000]
        {<}
        {\begin{array}{c}
             Z_{(t, 0)}, \\
             Z_{(t, 1)}, \\
             Z_{(t+1, 0)}Z_{(t+2, 0)}, \\
             Z_{(t+1, 1)}Z_{(t+2, 1)}, \\
             Z_{(t+3, 0)}Z_{(t+3, 1)}, \\
             Z_{(t+4, 0)}Z_{(t+5, 0)}, \\
             Z_{(t+4, 1)}Z_{(t+5, 1)} \\
        \end{array}}
        {>}
    \text{~~if $t$ odd.}
\end{equation}
One can then calculate the group $\isg{t}$, and consequently $\lpg{t}$.
It turns out $\isg{t}$ is established whenever $t \geq T = 3$,
and can be minimally generated by the seven generators of $\MMM_t$,
plus three weight-six Paulis $s_{t-1}, s_{t-2}$ and $s_{t-3}$, where:
\begin{equation}\label{eq:floquetified_4_2_2_s_t}
    s_t = \begin{cases}
              X_{(t, 0)}X_{(t, 1)}X_{(t-1, 0)}X_{(t-1, 1)}X_{(t-2, 0)}X_{(t-2, 1)} &\text{ if $t$ even} \\
              Z_{(t, 0)}Z_{(t, 1)}Z_{(t-1, 0)}Z_{(t-1, 1)}Z_{(t-2, 0)}Z_{(t-2, 1)} &\text{ if $t$ odd} \\
    \end{cases}
\end{equation}

Since we have 10 independent generators on 12 qubits, we can conclude that $\lpg{t} \cong \PPP_2$ for all $t \geq 3$.
This then proves most of \Cref{thm:4_2_2_double_hexagon_equivalence};
namely that the double hexagon code encodes 2 logical qubits and has period 6.
The proof that the distance of the new Floquet code remains two is deferred to \Cref{sec:double_hexagon_code}.

In addition to being able to work with detectors, stabilizers and logical operators algebraically, as above,
the mapping of these objects from the $\qcode{4, 2, 2}$ code to the double hexagon code
can be seen graphically via Pauli webs.
In \Cref{fig:detector_4_2_2_double_hexagon} we show a detecting region in the $\qcode{4, 2, 2}$ code
and its image in the double hexagon code.
From the rightmost diagram of this figure,
and recalling the rules for mapping a detecting region to a detector from \Cref{subsubsec:detectors},
one can see that the corresponding detector in the double hexagon code consists of eight measurements.
Specifically, in the bottom layer, we include two $Z \otimes Z$ measurements between blue and purple qubits,
and two single qubit $Z$ measurements on pink qubits.
In the top layer, we include two $Z \otimes Z$ measurements between purple and pink qubits,
and two single qubit $Z$ measurements on blue qubits.
Since every detecting region in the $\qcode{4, 2, 2}$ code is equivalent to the one on the left of this figure
(up to a space-time translation and exchanging the roles of $Z$ and $X$),
every detecting region in the double hexagon code is equivalent to the one on the right
(again up to a space-time translation and $Z \leftrightarrow X$ interchange).

One could justifiably point out here that we seem to have made things worse;
we've taken a $\qcode{4, 2, 2}$ ISG code and turned it into a $\qcode{12, 2, 2}$ ISG code,
and what's more, each detector now consists of eight measurements rather than two, so would seem to be noisier!
The trade-off is that now every measurement is weight-two or weight-one, rather than weight-four.
As a general rule, the higher the measurement weight, the noisier it will be.
In particular, weight-one and weight-two measurements can be performed natively in some architectures,
whereas higher-weight measurements are implemented via extraction circuits,
which give more opportunities for noise to interfere.

On a higher-level, one can view this Floquetification process as a reinterpretation of the time direction
in a ZX-diagram.
Below, we use two blue prisms (square and hexagonal)
as abstractions of ZX-diagrams for the $\qcode{4, 2, 2}$ code and double hexagon code respectively.
In grey we show how a timeslice in the double hexagon code corresponds to an angled slice of the $\qcode{4, 2, 2}$ code:
\begin{equation}\label{eq:timeslice_4_2_2_double_hexagon}
    \begin{aligned}
        \includegraphics[width=230pt]{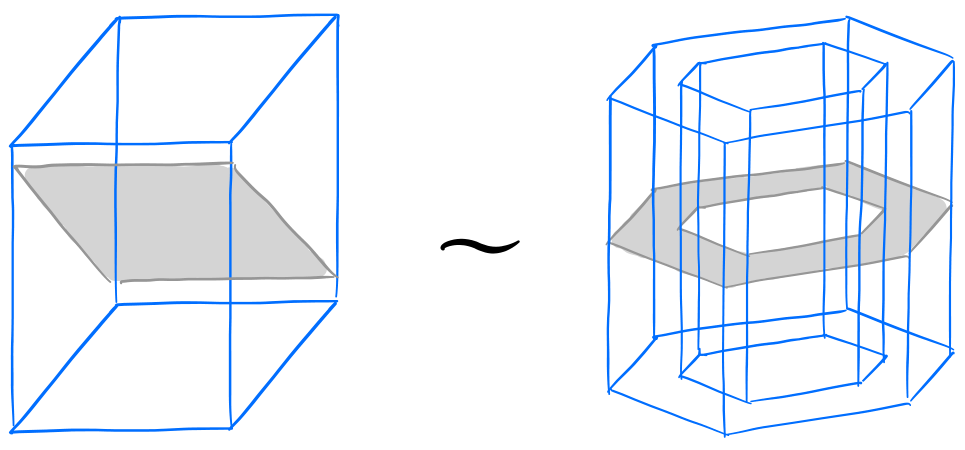}
    \end{aligned}
\end{equation}

    \section{Floquetifying the colour code}\label{sec:floquetifying_colour_code}

We can apply the same ideas as in the last section to stabilizer codes
more interesting than the $\qcode{4, 2, 2}$ code.
In this section, we take this more interesting stabilizer code to be the colour code.
Or, more accurately, we take it to be the \textit{bulk} of the colour code -
i.e.\ ignoring the code's global topology (whether it lives on a torus, or is planar).
A discussion of global topology is deferred to
\Cref{subsec:floquetifying_colour_code_beyond_bulk} at the end of this section,
and continued in detail in \Cref{sec:floquetifying_colour_code_beyond_bulk}.

\subsection{The bulk}\label{subsec:floquetifying_colour_code_bulk}

The $\mathit{6.6.6}$ \textit{colour code} is a stabilizer code defined on a honeycomb lattice, with qubits placed at vertices.
We write $v \in f$ to mean that a vertex $v$ is incident to a hexagonal face $f$.
For any such face $f$, we define weight-six Paulis $X_f = \prod_{v \in f} X_v$ and $Z_f = \prod_{v \in f} Z_v$.
On a torus, the code is defined by the measurement schedule
$\MMM = [\langle \{Z_f: \text{face $f$}\} \rangle,~ \langle \{X_f: \text{face $f$}\} \rangle]$.
On a planar geometry, slightly different Paulis are measured at the boundaries~\cite{ColourCodeBoundariesAndTwists}.

\begin{figure}[t]
    \centering
    \includegraphics[width=\linewidth]{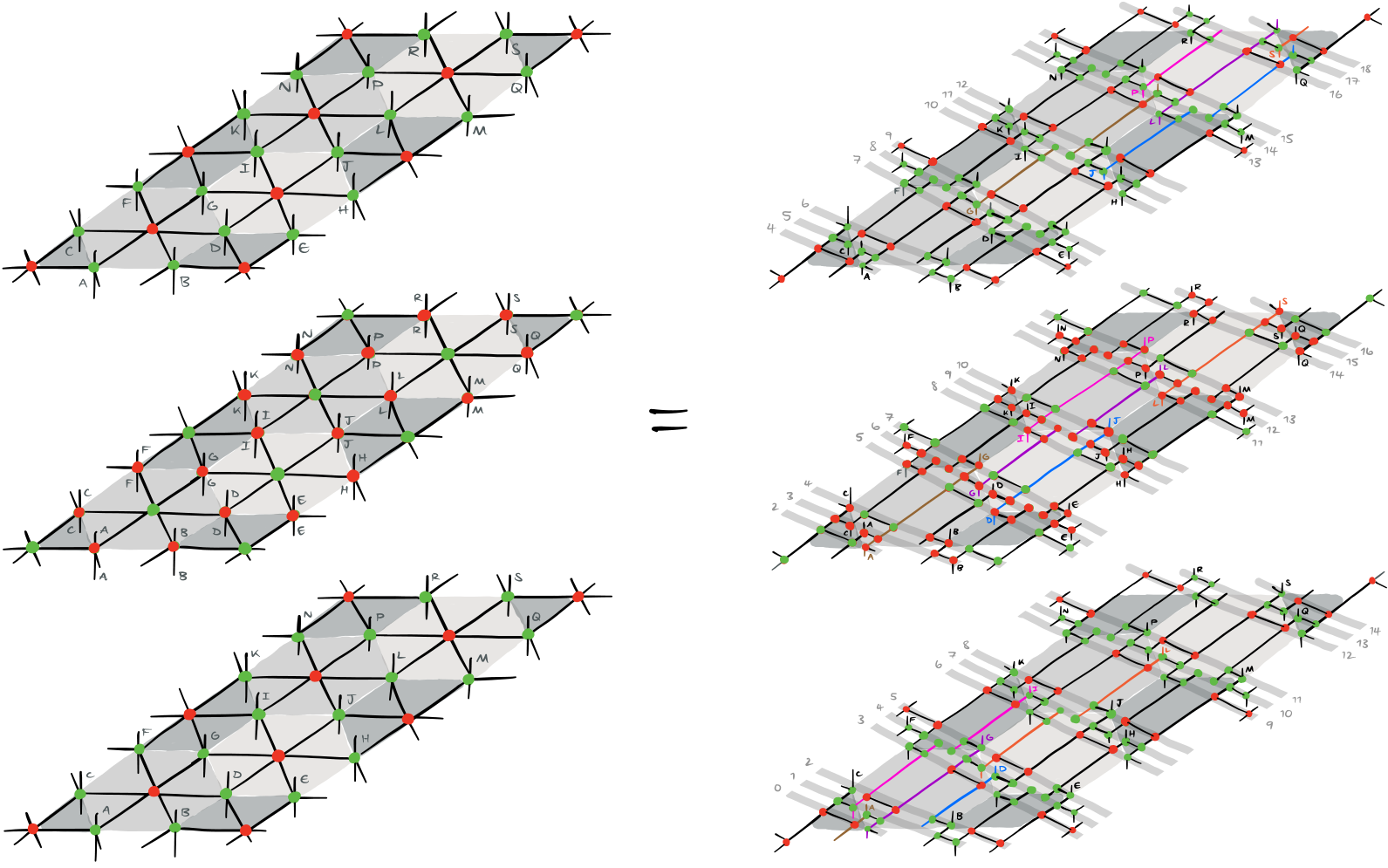}
    \caption{
        Two equivalent ZX-diagrams for a patch of the colour code at three consecutive timesteps.
        Ideally we'd draw vertical wires connecting each column of patches,
        like we were able to do for the $\qcode{4, 2, 2}$ code in the last section,
        but doing this renders the diagrams pretty much unreadable.
        Instead, we draw very small vertical wires going up and/or down from certain spiders, labelled by letters.
        These letters define how a wire going vertically upwards in one layer is actually connected to
        a wire coming vertically downwards from the layer above it;
        wires labelled by the same letter are connected.
    }
    \label{fig:colour_code_rewritten}
\end{figure}

As before, we start with a ZX-diagram of the colour code over multiple timesteps;
see the left hand side of \Cref{fig:colour_code_rewritten}.
We then unfuse every spider into four or five spiders to get the diagram on the right of the figure.
In these diagrams, grey hexagons, bars and integers, as well as letter labels and coloured wires, all have no meaning in the ZX-calculus; they're just visual aids.
In the diagram on the right, wires within grey bars correspond to weight-two measurements,
while all other wires correspond to qubit world-lines.
The focus is on a single qubit's world-line, which we've coloured purple;
we can see that it's only involved in measurements with four other qubits,
which we've coloured orange, pink, blue and brown.
It turns out that all qubit world-lines have this property of only interacting with four other qubits.
Furthermore, these interactions are such that the qubits of the new code can be laid out on a square lattice.
So henceforth we'll label qubits of the new code with a pair of integer coordinates $(x, y)$.

The grey bars labelled by integers denote the timesteps of the new code.
These are well-defined;
picking any qubit world-line and following it up the page
while noting down the integer label of every grey bar it crosses produces an increasing sequence.
For example, doing this for the purple qubit produces the contiguous sequence $[0, 1, 2, \ldots]$.
As before, we can view all one-legged spiders as halves of single-qubit measurements.
The pattern repeats after every two timesteps of the old code (the colour code);
one can see this by noting that the bottom and top of the diagram are identical,
up to a translation in space and an increase by 13 in the labels of the grey bars.
In other words, this new code has a period of 13.

\begin{figure}[t]
    \centering
    \includegraphics[width=450pt]{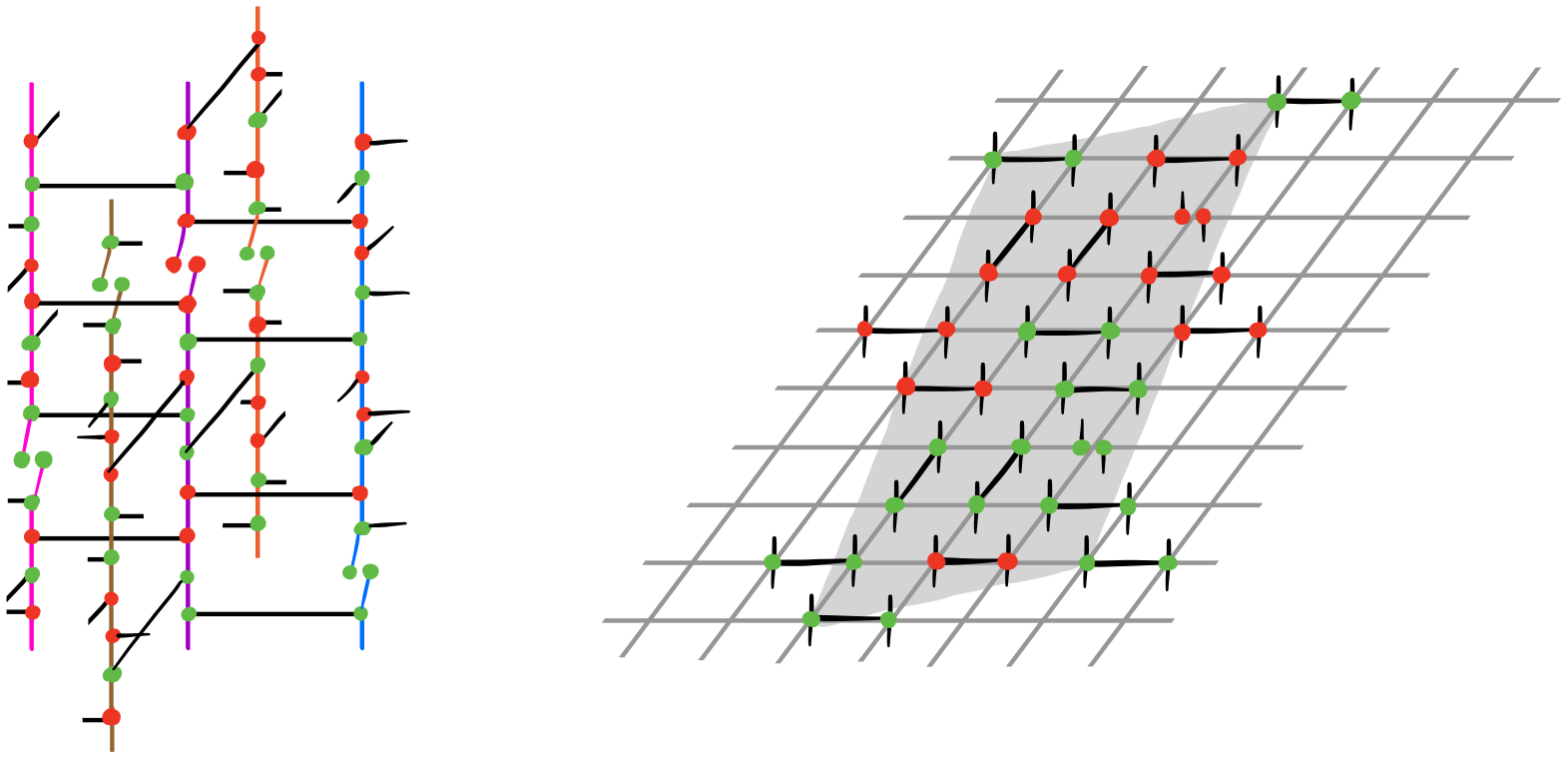}
    \caption{
        Left: measurements undergone by a single qubit and its four nearest neighbours over one full period of 13 timesteps.
        Right: One `tile' of one timestep of the measurement schedule in the bulk of the Floquetified colour code.
        Here the grey lines are \textit{not} ZX-diagram wires -
        they are just visual aids that show this all sits on a square lattice.}
    \label{fig:floquetified_colour_code_measurement_schedule}
\end{figure}
Again, we can now write down the measurements that each qubit undergoes at each timestep.
In \Cref{fig:floquetified_colour_code_measurement_schedule},
we show a ZX-diagram of this, focused on the purple qubit from \Cref{fig:colour_code_rewritten}.
Each qubit undergoes essentially the same pattern of measurements in each period.
Specifically, whatever measurement qubit $(x, y)$ undergoes at timestep $t$,
qubit $(x, y+1)$ undergoes it at time $t-2$, but with the roles of $Z$ and $X$ exchanged.
Likewise for the remaining neighbours $(x+1, y)$, $(x, y-1)$ and $(x-1, y)$,
but at times $t-8$, $t+2$ and $t+8$ respectively.
Knowing this, we can then write down the measurement schedule for the bulk of the new code
(i.e.\ what measurements are happening at any single timestep).
This has a periodic structure; we draw a ZX-diagram for a single timestep $t$ and single `tile' of this on the right of \Cref{fig:floquetified_colour_code_measurement_schedule}.
To see the measurements happening across the whole bulk at this timestep, one should tile these grey rectangles across the square lattice.
That is, whatever measurement qubit $(x, y)$ undergoes at time $t$, qubits $(x+3, y+1)$ and $(x-2, y+8)$ undergo this too.
Then to get the measurements happening at the next timestep, one should translate the measurements from time $t$ by $(-1, -3)$.
That is, whatever measurement qubit $(x, y)$ undergoes at time $t$, qubit $(x-1, y-3)$ undergoes it at time $t+1$.
\begin{figure}[t]
    \centering
    \includegraphics[width=450pt]{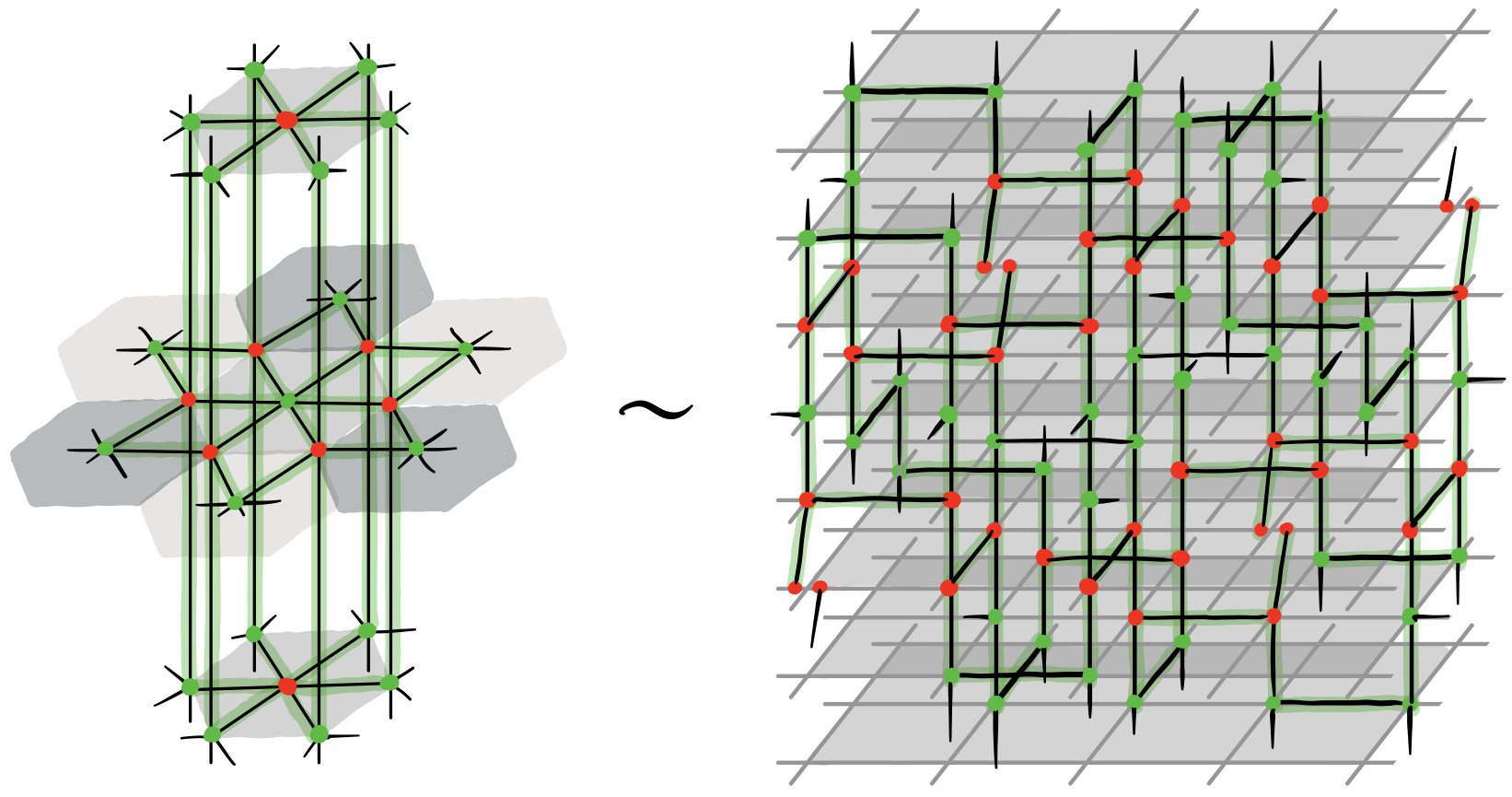}
    \caption{
        A detecting region in the colour code and the corresponding detecting region in the Floquetification.
        Again, grey lines are \textit{not} ZX-diagram wires -
        they are just a visual guide showing the square lattice.}
    \label{fig:detector_floquetified_colour_code}
\end{figure}

Detectors and logical operators can again not only be worked with algebraically,
but also graphically, via Pauli webs.
In \Cref{fig:detector_floquetified_colour_code} we show a detecting region in the colour code bulk and its image in the bulk of the new Floquet code.
The corresponding detector in the new code consists of 14 weight-two measurements and 4 single-qubit measurements, spread over 8 timesteps.
Every detector in the bulk of the new code is identical to this one,
up to a space-time translation and exchanging $Z$ and $X$.
Just as in the colour code, these detecting regions are tiled such that unique errors
violate unique sets of detectors, so decoding can be performed,
though we leave investigating specific decoding strategies to future work.
A similar exercise can be repeated for the logical operators;
one can draw the operating regions corresponding to the known logical operators of the colour code,
and see how these map to operating regions in the Floquetified code,
from which one can write down the new code's logical operators.

\subsection{Beyond the bulk}\label{subsec:floquetifying_colour_code_beyond_bulk}

The Floquetification process described above can be applied directly to planar colour codes,
and will lead to a new code that is itself planar.
But since the boundaries of the original code look different to the bulk,
extra work needs to be done to Floquetify these correctly.
Furthermore, the resulting Floquet code can exhibit a `drifting' behaviour,
which we discuss in more detail in \Cref{subsec:drift}.
There are potential perks of this behaviour - e.g.\ for removing leakage -
but it's also handy to have a code that doesn't drift.
One might think we could get around these two issues by starting with the colour code defined on a torus,
which has no boundaries to worry about.
But viewing our procedure as tilting the time direction in a ZX-diagram for a code,
as described in the last section,
we find we can only give well-defined new timesteps when the code we start with is planar -
this is discussed in more depth in \Cref{subsec:periodic_boundaries}.
Instead, if we want to avoid boundaries,
what we can do is Floquetify only the bulk,
which will give the bulk of a potential new Floquet code,
then see if placing this new bulk on a torus still encodes logical qubits.
In \Cref{subsec:colour_code} we apply the two ideas described above -
we Floquetify a planar colour code,
and place the Floquetified colour code bulk from this section on a torus.

    \section{Conclusion and future work}\label{sec:conclusion}
In this work, we introduced ISG codes,
which describe a large family of codes driven by sequentially measuring sets of Pauli operators;
this includes stabilizer, subsystem\footnote{%
    Up to the caveat that our definition of a subsystem code 
    as in \Cref{defn:subsystem_code_as_ISG_code}
    differs slightly from the usual subsystem code definition, 
    a point which we explore in \Cref{sec:subsystem_code_remarks}.
} and Floquet codes, and more.
We then used the ZX-calculus to find a new ISG code (specifically, a Floquet code) that is equivalent to the colour code,
and can be implemented on a square lattice.
The main disadvantage of the colour code versus the surface code is its high-weight measurements -
our construction removes this obstacle, since all its measurements are of weight one or two.
For it to be a genuine candidate for practical implementation,
we would need to investigate its decoding capabilities, its boundaries and its logical gates -
we leave these for future work.

One direction we find very interesting relates to the latter;
in \Ccite{AutomorphismCodes},
it is shown that Floquet codes can natively implement
certain logical Clifford gates fault-tolerantly at no extra effort.
The set of such implementable gates is restricted by the automorphisms of the code's anyonic defects;
we believe our Floquetified colour code should inherit a rich set of such automorphisms from the colour code.
We would like to verify whether this is the case,
and then see whether this can be leveraged to perform fault-tolerant logical gates.

As it happens, the authors of an upcoming paper~\cite{MargaritaFloquetifiedColourCode} do exactly this
for an independently discovered Floquet code that is also in a definable sense equivalent to the colour code.
Specifically, using the \textit{anyon condensation} framework of \Ccite{AnyonCondensation},
they construct a code that can implement the full logical Clifford group via sequences of measurements of weight at most three.
What's more, a further upcoming paper~\cite{TylerFloquetifiedColourCode}
contains a third independent `Floquetified colour code` construction,
this time by starting with a subsystem code (specifically, that of \Ccite{BombinRubyCodes})
and passing to an \textit{associated ISG code} (\Cref{defn:associated_ISG_code}).
We are excited to learn more about both of these works,
and to think about the connections between our various different constructions.
More generally, understanding our own Floquetified colour code in terms of
topological phases and symmetries of the ZX-diagram representing it
seems an exciting research avenue that bridges between the fields of
diagrammatic calculi, (topological) error correction and condensed matter physics.

Further work is warranted around ISG codes more broadly.
For example, there are interesting questions to be answered around equivalences of such codes,
how best to define a notion of distance on them,
and how subsystem codes as they're usually defined fit into this framework.
Though we go into more detail on these in \Cref{sec:isg_code_remarks,sec:subsystem_code_remarks},
a more thorough investigation would be welcome.
This could perhaps lead to interesting re-evaluations of long-established quantum codes.

The other obvious further research direction would be
to develop the ideas here into a full-blown general-purpose Floquetification algorithm,
which can take as input a stabilizer code (presumably satisfying certain conditions),
and return a Floquetified version of it, with all the advantages and trade-offs that brings;
e.g.\ lower-weight measurements, but more measurements per detector, typically.
One could then compare the performance of these stabilizer codes with their Floquetifications,
to see if any practical advantages emerge in the general case.

On a higher level, this work makes the same point as \Ccite{ZxFaultTolerancePsiQuantum},
in that it suggests \textit{static} stabilizer and subsystem codes are perhaps
not so different from \textit{dynamic} ISG codes (like Floquet codes) after all,
and that a unifying way to think of ISG codes could be as ZX-diagrams that are suitably covered by detecting regions
and contain pairs of operating regions satisfying Pauli commutativity relations.
An avenue for further work would be to develop this perspective further -
for example, by taking it as a starting point for designing new codes.

    \section{Acknowledgements}\label{sec:acknowledgements}
We wish to thank
Daniel Litinski and Fernando Pastawski above all, for chatting with us about
applications of the ZX-calculus to quantum error correction~\cite{ZxFaultTolerancePsiQuantum},
and to David Aasen for discussions about his automorphism codes~\cite{AutomorphismCodes}.
We also thank the reviewers from QPL 2023, who gave very helpful feedback on our initial submission,
as well as Jens Eisert, Peter-Jan Derks and Daniel Litinski,
who gave helpful feedback on the final draft.
We are grateful to the authors of \Ccite{MargaritaFloquetifiedColourCode} and \Ccite{TylerFloquetifiedColourCode}
for discussing their own Floquetified colour code constructions with us.
Alex wishes to thank Drew Vandeth, Anamaria Rojas and all those involved in organising the IBM QEC Summer School
for the inspiring four weeks that sowed the seeds of this project.
Julio wants to thank Tyler Ellison for inspiring discussion on Floquet codes in the colour code phase.
This work was supported by
the Einstein Foundation (Einstein Research Unit on quantum devices),
the DFG (CRC 183),
the Munich Quantum Valley (K8) and the BMBF (RealistiQ, QSolid).

    \newcommand{\etalchar}[1]{$^{#1}$}

    \clearpage
    \appendix
    \section{A corollary of the stabilizer formalism}\label{sec:normalizer_formalism}

Here we state and prove the effect that measuring a Hermitian Pauli $p \in \PPP_n$ has on the group $N(\SSS) / \SSS$,
for any stabilizer group $\SSS \leq \PPP_n$.
We will often refer to this as (measurement in) the \textit{normalizer formalism}.
A few lemmas will be required in order to prove this;
the first two are so fundamental that we will use them without explicitly referencing them.

\begin{lemma}\label{lem:paulis_commute_or_anticommute}
    Any two elements $p, q \in \PPP_n$ either commute or anti-commute.
    That is, $pq = \pm qp$.
\end{lemma}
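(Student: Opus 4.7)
The plan is to reduce the claim to the single-qubit case and then propagate it through tensor factors, since the $n$-qubit Pauli group is built from $\PPP_1$ by tensor product up to an overall scalar phase $i^\ell$.

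First I would handle $n = 1$ by direct inspection of the nine products $\sigma_a \sigma_b$ for $a, b \in \{0,1,2,3\}$: the identity $\sigma_0 = I$ commutes with every $\sigma_b$, while the three non-trivial Paulis satisfy $\sigma_a \sigma_b = -\sigma_b \sigma_a$ whenever $a \neq b$ and $a, b \in \{1,2,3\}$ (concretely $XY = iZ = -YX$, and cyclic variants). So for single-qubit Paulis, $\sigma_a \sigma_b \in \{+1, -1\} \cdot \sigma_b \sigma_a$, and the sign is $+1$ iff one of the two factors is $I$ or the two factors are equal.

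Next I would lift to the general case. Write $p = i^{\ell_p}\, \sigma_{j_1} \otimes \cdots \otimes \sigma_{j_n}$ and $q = i^{\ell_q}\, \sigma_{k_1} \otimes \cdots \otimes \sigma_{k_n}$. The scalar factors $i^{\ell_p}, i^{\ell_q}$ lie in the centre of $\PPP_n$, so they can be pulled past each other freely. Using the interchange law $(A_1 \otimes \cdots \otimes A_n)(B_1 \otimes \cdots \otimes B_n) = (A_1 B_1) \otimes \cdots \otimes (A_n B_n)$, the product $pq$ differs from $qp$ only in each tensor slot by the single-qubit swap of $\sigma_{j_s}$ past $\sigma_{k_s}$. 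By the single-qubit case, each such swap contributes an independent sign $\epsilon_s \in \{+1, -1\}$, and the overall sign relating $pq$ and $qp$ is $\prod_{s=1}^n \epsilon_s \in \{+1, -1\}$. This gives $pq = \pm qp$ as required.

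The only real subtlety is bookkeeping of the phases: one must confirm that the scalar $i^{\ell_p + \ell_q}$ appears identically in $pq$ and $qp$, so that it cancels when taking the ratio and leaves only the $\pm 1$ coming from the tensor-factor swaps. This is automatic from centrality of $i$ in $\PPP_n$, so no real obstacle arises; the lemma is essentially a direct verification, and the proof should be short.
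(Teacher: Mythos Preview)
Your argument is correct: the reduction to single tensor factors via the interchange law, the explicit check on $\PPP_1$, and the observation that the global phase $i^{\ell}$ is central together give $pq = \big(\prod_s \epsilon_s\big) qp$ with each $\epsilon_s \in \{\pm 1\}$.

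Regarding comparison with the paper: there is nothing to compare. The paper states this lemma without proof, describing it (together with the subsequent corollary) as ``so fundamental that we will use them without explicitly referencing them.'' Your write-up therefore supplies what the paper omits, and does so by the standard route.
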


\begin{corollary}\label{cor:stabilizer_normalizer_is_centralizer}
    If $\SSS \leq \PPP_n$ is a stabilizer group, the normalizer $N(\SSS) = \{p \in \PPP_n : p \SSS p^{-1} = \SSS\}$
    is equal to the centralizer $C(\SSS) = \{p \in \PPP_n : \forall s \in \SSS, psp^{-1} = s\}$.
\end{corollary}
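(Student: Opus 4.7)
The plan is to prove the two inclusions $C(\SSS) \subseteq N(\SSS)$ and $N(\SSS) \subseteq C(\SSS)$ separately. The first inclusion is immediate from the definitions: if $p$ commutes with every $s \in \SSS$, then $p s p^{-1} = s \in \SSS$ for every $s$, so $p \SSS p^{-1} \subseteq \SSS$, and applying the same argument with $p^{-1}$ gives the reverse containment. So the entire content of the corollary sits in the reverse inclusion $N(\SSS) \subseteq C(\SSS)$, and this is what I would focus on.

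For the reverse direction, fix $p \in N(\SSS)$ and an arbitrary $s \in \SSS$. By \Cref{lem:paulis_commute_or_anticommute}, either $ps = sp$ or $ps = -sp$, so $p s p^{-1} \in \{s, -s\}$. My plan is to rule out the anti-commuting case by deriving a contradiction with the defining property of a stabilizer group, namely that $-\one \notin \SSS$. Concretely, if $p s p^{-1} = -s$, then $-s \in p \SSS p^{-1} = \SSS$. Since also $s \in \SSS$ and $\SSS$ is closed under products and inverses, the element $s \cdot s^{-1} \cdot (-\one) = -\one$ (equivalently, $s \cdot (-s) = -\one$ using the fact that stabilizer elements square to $\one$) lies in $\SSS$, contradicting $-\one \notin \SSS$. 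Hence $p s p^{-1} = s$, and since $s$ was arbitrary, $p \in C(\SSS)$.

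I do not expect a significant obstacle here; the only subtle point is the small algebraic step that $s$ and $-s$ both lying in $\SSS$ forces $-\one \in \SSS$. This uses that stabilizer elements square to $\one$ (itself a consequence of $-\one \notin \SSS$ together with $s^2 \in \{\pm \one\}$ for any Pauli $s$, already implicit in the preliminaries), so it is essentially a one-line check. The overall argument is short enough that the proof in the paper is likely just the two paragraphs above, perhaps compressed into a single paragraph.
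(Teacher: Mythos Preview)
The paper does not actually supply a proof of this corollary; it is one of the two results introduced as ``so fundamental that we will use them without explicitly referencing them'' and is stated without argument. Your proof is correct and is the standard one. One minor simplification: you do not need to invoke $s^2 = \one$ for the contradiction step, since $\SSS$ being a group already gives $s^{-1} \in \SSS$, whence $(-s)s^{-1} = -\one \in \SSS$ directly.
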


The next two are less elementary; their proofs can be found in \Ccite[Section 3.4]{GottesmanQecLectureNotes}.

\begin{lemma}\label{lem:normalizer_iso_pauli_group}
    If $\SSS \leq \PPP_n$ is a stabilizer group with rank $r$,
    then $N(\SSS)/\SSS \cong \PPP_{k}$, where $k = n - r$.
\end{lemma}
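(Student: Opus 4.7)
The plan is to realise $N(\SSS)/\SSS$ as a symplectic vector space (plus phases) of the expected dimension, and then extract a Pauli presentation from a symplectic basis. First I would pass to the quotient by the centre $\langle i\rangle$: let $V = \PPP_n/\langle i\rangle$, which is an $\mathbb{F}_2$-vector space of dimension $2n$. By \Cref{lem:paulis_commute_or_anticommute}, the commutator in $\PPP_n$ descends to a well-defined, nondegenerate, alternating bilinear form $\omega$ on $V$ (namely $\omega(\bar p,\bar q)=0$ if $p,q$ commute, $1$ otherwise), so $V$ is symplectic.

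Next I would analyse where $\SSS$ and $N(\SSS)$ sit in $V$. Because $-\one\notin\SSS$ and every Pauli squares to $\pm\one$, each $s\in\SSS$ has order $2$, so $\SSS$ embeds into $V$; since $\SSS$ is Abelian its image $L\subseteq V$ is isotropic, and since $\SSS$ has rank $r$ we get $\dim L = r$. By \Cref{cor:stabilizer_normalizer_is_centralizer}, $N(\SSS)=C(\SSS)$, so an element $p\in\PPP_n$ lies in $N(\SSS)$ iff its class in $V$ is $\omega$-orthogonal to all of $L$; that is, the image of $N(\SSS)$ in $V$ is exactly $L^\perp$, which has dimension $2n-r$. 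Consequently $L^\perp/L$ is a symplectic space of dimension $2(n-r)=2k$.

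The main step is to promote this to a group isomorphism $N(\SSS)/\SSS\cong\PPP_k$. I would pick a symplectic basis $\bar x_1,\bar z_1,\ldots,\bar x_k,\bar z_k$ of $L^\perp/L$, lift to representatives $x_j,z_j\in N(\SSS)$, and define a homomorphism $\PPP_k\to N(\SSS)/\SSS$ sending the generators $i,X_j,Z_j$ of $\PPP_k$ to $\ol{i},\ol{x_j},\ol{z_j}$. The defining relations of $\PPP_k$ are preserved because commutation/anticommutation of the $x_j,z_j$ modulo $\SSS$ is dictated by the symplectic pairing on $L^\perp/L$ that we just arranged. A cardinality check confirms this is an isomorphism: $|N(\SSS)| = 4\cdot 2^{2n-r}$ (phases times the size of $L^\perp$) and $|\SSS|=2^r$, so $|N(\SSS)/\SSS| = 4\cdot 2^{2n-2r} = 4^{k+1} = |\PPP_k|$, forcing the surjective map to be a bijection.

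The main obstacle is phase bookkeeping. Modulo $\langle i\rangle$ everything is clean symplectic linear algebra, but to land in $\PPP_k$ rather than its central quotient we must show $\ol{i}\in N(\SSS)/\SSS$ is genuinely central of order $4$; equivalently, that $i^\ell\in\SSS$ forces $\ell=0$. This is precisely the hypothesis $-\one\notin\SSS$ used in the definition of a stabilizer group, and together with the isotropic lift above it guarantees that the chosen generators satisfy no unwanted relations.
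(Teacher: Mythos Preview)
The paper does not prove this lemma at all: it simply imports it, stating that its proof ``can be found in \Ccite[Section 3.4]{GottesmanQecLectureNotes}.'' So there is no paper-side argument to compare against. Your symplectic approach is the standard one and is essentially what Gottesman does: mod out phases, identify $\SSS$ with an isotropic subspace $L$, identify $N(\SSS)=C(\SSS)$ with the preimage of $L^\perp$, and read the structure of $N(\SSS)/\SSS$ off the symplectic quotient $L^\perp/L$ together with the central $\langle i\rangle$.

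One small gap worth patching: when you lift the symplectic basis of $L^\perp/L$ to representatives $x_j,z_j\in N(\SSS)$ and then assert that the relations of $\PPP_k$ are preserved ``because commutation/anticommutation is dictated by the symplectic pairing,'' you are only covering the commutator relations. The relations $X_j^2=Z_j^2=\one$ in $\PPP_k$ require $\ol{x_j}^2=\ol{z_j}^2=\ol{\one}$ in $N(\SSS)/\SSS$, i.e.\ $x_j^2,z_j^2\in\SSS$; since $x_j^2=\pm\one$ and $-\one\notin\SSS$, this forces you to choose \emph{Hermitian} lifts (always possible by multiplying by $i$ if necessary). Once you say that explicitly, the homomorphism is well-defined, your cardinality count goes through, and the argument is complete.
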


\begin{lemma}\label{lem:normalizer_generator_choice}
    If $\SSS \leq \PPP_n$ is a stabilizer group with rank $r$, and $k = n-r$,
    then for any maximally Abelian subgroup $\langle~ \ol{x_1}, \ldots, \ol{x_k} ~\rangle$ of $N(\SSS)/\SSS \cong \PPP_k$ we might choose,
    there exists a second Abelian subgroup $\langle~ \ol{z_1}, \ldots, \ol{z_k} ~\rangle$, such that
    $\langle ~
        \ol{i},~
        \ol{x_1},~
        \ol{z_1},~
        \ldots,~
        \ol{x_k},~
        \ol{z_k} ~
    \rangle$
    is isomorphic to $\PPP_k = \langle i, X_1, Z_1, \ldots, X_k, Z_k \rangle$
    via the map $\ol{i} \mapsto i, \ol{x_j} \mapsto X_j$ and $\ol{z_j} \mapsto Z_j$ for all $j$.
\end{lemma}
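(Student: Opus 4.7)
My plan is to leverage the symplectic structure of the Pauli group. By \Cref{lem:normalizer_iso_pauli_group} we already know $N(\SSS)/\SSS \cong \PPP_k$ as an abstract group, so the task is essentially: given a maximally Abelian subgroup $A = \langle \ol{x_1}, \ldots, \ol{x_k} \rangle$ of $\PPP_k$, find a second maximally Abelian subgroup $B = \langle \ol{z_1}, \ldots, \ol{z_k} \rangle$ that pairs with $A$ like $\langle X_1, \ldots, X_k \rangle$ pairs with $\langle Z_1, \ldots, Z_k \rangle$. The natural arena for this is the quotient $\PPP_k / Z(\PPP_k) \cong \mathbb{F}_2^{2k}$, on which the commutator induces a non-degenerate symplectic form $\omega$: for $p, q \in \PPP_k$, $\omega([p], [q]) = 0$ iff $pq = qp$ and $\omega([p], [q]) = 1$ iff $pq = -qp$.

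First I would argue that the image of $A$ in $\mathbb{F}_2^{2k}$ is a Lagrangian (maximally isotropic) subspace: isotropic because $A$ is Abelian, and $k$-dimensional because $A$ has rank $k$ modulo phases. Then I would apply a symplectic Gram--Schmidt / Witt extension argument: given $k$ linearly independent isotropic vectors $[x_1], \ldots, [x_k]$ in a $2k$-dimensional symplectic space, one can iteratively construct vectors $[z_1], \ldots, [z_k]$ such that $\omega([x_i], [z_j]) = \delta_{ij}$ and $\omega([z_i], [z_j]) = 0$. Concretely, start from any complement of the Lagrangian; for each candidate, subtract off its symplectic inner products with the wrong $x$'s using the correct $x$'s to kill off-diagonal couplings, then subtract its inner products with already-chosen $z$'s using the $x$'s to restore isotropy of the $z$'s. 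Lifting back, this produces Pauli representatives $z_1, \ldots, z_k \in \PPP_k$ with the correct commutation pattern.

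Next I would handle phases. Having fixed $x_j, z_j$, each squares to some element of $\{\pm \one\}$; by replacing $x_j$ with $ix_j$ or $z_j$ with $iz_j$ as needed, I can ensure $x_j^2 = z_j^2 = \one$, i.e.\ the representatives are Hermitian. Then the map $\ol{i} \mapsto i$, $\ol{x_j} \mapsto X_j$, $\ol{z_j} \mapsto Z_j$ respects the defining commutation relations and squares of the standard generators of $\PPP_k$; since the presentation $\PPP_k = \langle i, X_1, Z_1, \ldots, X_k, Z_k \rangle$ is defined by exactly these relations, this extends to a surjective homomorphism, and counting orders ($|\PPP_k| = 4^{k+1}$ on both sides, finite) forces it to be an isomorphism.

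The main obstacle is the phase bookkeeping. The symplectic picture is clean over $\mathbb{F}_2$, but lifting it back to $\PPP_k$ is where one must be careful: independence in $\mathbb{F}_2^{2k}$ only gives independence modulo the center, so I have to verify that the $\ol{z_j}$'s I construct do not collapse together or onto $\ol{i}$ after quotienting by $\SSS$, and that the square/Hermiticity adjustments do not secretly break the commutation relations I just arranged. Everything else is essentially symplectic linear algebra over $\mathbb{F}_2$.
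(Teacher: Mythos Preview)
The paper does not actually prove this lemma; it defers to \cite[Section 3.4]{GottesmanQecLectureNotes}. Your symplectic approach---passing to $\PPP_k/Z(\PPP_k) \cong \mathbb{F}_2^{2k}$, identifying the given Abelian subgroup with a Lagrangian, and running a symplectic Gram--Schmidt to produce a dual Lagrangian---is exactly the standard argument and is essentially what Gottesman does in the cited reference, so your plan is both correct and aligned with the source the paper relies on.

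A small remark on the ``obstacle'' you flag: the phase bookkeeping is less delicate than you suggest. Independence of $[z_1],\ldots,[z_k]$ in $\mathbb{F}_2^{2k}$ is precisely independence of $\ol{z_1},\ldots,\ol{z_k}$ modulo the center $\langle \ol{i}\rangle$, which is all you need; multiplying a representative by $i$ does not touch its image in $\mathbb{F}_2^{2k}$, so your Hermiticity fix cannot disturb the commutation pattern you have already arranged. Also, it is slightly cleaner to define the homomorphism in the direction $\PPP_k \to N(\SSS)/\SSS$ (sending $i \mapsto \ol{i}$, $X_j \mapsto \ol{x_j}$, $Z_j \mapsto \ol{z_j}$), since then you only need to check that the defining relations of $\PPP_k$ hold among the images---which is exactly what your symplectic construction guarantees---and the order count via \Cref{lem:normalizer_iso_pauli_group} finishes.
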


First, we'll remind ourselves of how measurement works in the \textit{stabilizer formalism}.

\begin{theorem}[Measurement in the stabilizer formalism]\label{thm:stabilizer_formalism}
    Suppose we have a stabilizer group $\SSS = \langle s_1, \ldots, s_r \rangle \leq \PPP_n$
    with rank $r$, and let $k = n-r$.
    Measuring a Hermitian Pauli $p$ produces a \textit{measurement outcome} $m \in \{-1, 1\}$
    and a new stabilizer group $\SSS' \leq \PPP_n$.
    We have three cases:

    \hypertarget{stabilizer_formalism_case_1}{\textbf{Case 1}}
    (only possible when $r < n$): $p$ commutes with all generators $s_j$ but $\pm p \notin \SSS$.
    In this case, the measurement outcome $m \in \{1, -1\}$ is random,
    and $\SSS' = \langle m p, s_1, s_2, \ldots, s_r \rangle$.

    \hypertarget{stabilizer_formalism_case_2}{\textbf{Case 2:}}
    $p$ commutes with all generators $s_j$ and $\pm p \in \SSS$.
    Here the outcome $m$ is deterministically $\pm 1$, and $\SSS' = \SSS$.

    \hypertarget{stabilizer_formalism_case_3}{\textbf{Case 3:}}
    $p$ anti-commutes with at least one $s_j$.
    In fact, it can be shown that we can always pick a generating set such that $p$ anti-commutes with exactly one generator $s_1$,
    and commutes with the remaining generators $s_2, \ldots, s_r$.
    In this case, the outcome $m$ is again random, and $\SSS' = \langle m p, s_2, \ldots, s_r \rangle$.
\end{theorem}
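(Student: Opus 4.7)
The plan is to prove each case using the standard machinery of the Born rule together with the codespace projector $\Pi = \frac{1}{2^r}\prod_{j=1}^r (I + s_j)$ and the two measurement projectors $P_\pm = \frac{1}{2}(I \pm p)$, which are well-defined because $p$ is Hermitian and Pauli, hence $p^2 = \one$. Throughout, I would use the facts that $\SSS$ consists of commuting Hermitian Paulis, that $\Pi$ is a projector onto the joint $+1$ eigenspace of $\SSS$, that $\Pi s = \Pi$ for every $s \in \SSS$, and the property $\operatorname{Tr}(q) = 0$ for any Pauli $q \neq \pm \one$.

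The easiest step is \stabCase{2}. If $p = \epsilon s$ for some $\epsilon \in \{+1, -1\}$ and $s \in \SSS$, then $\Pi p = \epsilon \Pi s = \epsilon \Pi$, so $P_\epsilon \Pi = \Pi$ and $P_{-\epsilon}\Pi = 0$. By the Born rule, the outcome is deterministically $m = \epsilon$, and the post-measurement codespace projector is unchanged; hence $\SSS' = \SSS$.

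Next I would handle \stabCase{1}. Since $p$ commutes with every $s_j$, the operators $P_\pm$ commute with $\Pi$, and $P_+ \Pi P_+ + P_- \Pi P_- = \Pi$. The probability of obtaining outcome $m$ on a state inside the codespace is $\operatorname{Tr}(P_m \Pi)/\operatorname{Tr}(\Pi)$. Expanding $\Pi$ as $2^{-r}$ times a sum of products of the $s_j$, the only terms contributing to the trace with $P_m = \frac{1}{2}(\one + mp)$ are those proportional to $\one$; because $\pm p \notin \SSS$, the term $mp$ never combines with a stabilizer element to give $\one$, so $\operatorname{Tr}(P_m \Pi) = \frac{1}{2}\operatorname{Tr}(\Pi)$ and each outcome occurs with probability $\tfrac{1}{2}$. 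The new codespace projector is proportional to $P_m \Pi = \frac{1}{2^{r+1}}(\one + mp)\prod_j(\one + s_j)$, which is exactly the projector for $\SSS' = \langle mp, s_1, \ldots, s_r\rangle$. This group is a valid stabilizer group because $mp$ is independent of the $s_j$ (as $\pm p \notin \SSS$) and commutes with all of them, and $-\one \notin \SSS'$ because $-\one$ would require $mp$ to equal a product of the $s_j$ up to sign, contradicting $\pm p \notin \SSS$.

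The main work is \stabCase{3}, where I would first justify the claim that the generating set can be chosen so exactly one generator anticommutes with $p$. Pick any generator $s_1$ with $p s_1 = -s_1 p$; for each other generator $s_j$ with $ps_j = -s_j p$, replace $s_j$ by $s_1 s_j$, which commutes with $p$ while preserving the group. Now $s_1$ anticommutes with $p$ and all other $s_j$ commute with $p$. Then $s_1 P_m = P_{-m} s_1$, so $\operatorname{Tr}(P_m \Pi) = \operatorname{Tr}(P_m s_1 \Pi) = \operatorname{Tr}(s_1 P_{-m} \Pi) = \operatorname{Tr}(P_{-m} \Pi)$, giving each outcome probability $\tfrac{1}{2}$. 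For the post-measurement state, I would compute $P_m \Pi P_m$: the factor $(\one + s_1)$ inside $\Pi$ is killed by the sandwich because $P_m s_1 P_m = \tfrac{1}{2}(\one + mp)s_1 P_m = \tfrac{1}{2} s_1 (\one - mp)P_m = 0$, so the $s_1$ generator drops out. The remaining factors $(\one + s_j)$ for $j \geq 2$ pass through $P_m$ unchanged because those $s_j$ commute with $p$. This yields the projector for $\SSS' = \langle mp, s_2, \ldots, s_r\rangle$, which has rank $r$ as needed. The main subtlety here — and the step I expect to require the most care — is verifying that after the generator replacement the new generating set still generates the original $\SSS$ (straightforward, since $s_1 s_j \in \SSS$ and inverting the substitution recovers $s_j$) and that $mp$ is independent of $s_2, \ldots, s_r$, so that $\SSS'$ has the claimed rank $r$ with $-\one \notin \SSS'$.
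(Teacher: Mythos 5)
The paper never actually proves \Cref{thm:stabilizer_formalism}: it is stated in the appendix purely as a reminder of a standard result (readers are pointed to \Ccite[Section 10.5]{NielsenChuang} and \Ccite{GottesmanQecLectureNotes} earlier in the text), and only the corollary \Cref{thm:normalizer_formalism} is proved. So there is no in-paper argument to compare against; your proposal has to stand on its own, and it essentially does. The projector computation is the standard one: \stabCase{2} via $\Pi p = \epsilon\Pi$; \stabCase{1} via commuting projectors plus tracelessness of non-identity Paulis; \stabCase{3} via the generator substitution $s_j \mapsto s_1 s_j$, followed by the observation that $P_m s_1 P_m = 0$ kills the $(\one + s_1)$ factor of $\Pi$ while $P_m$ itself supplies the $(\one + mp)$ factor. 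The independence and $-\one \notin \SSS'$ checks you flag are the right ones and go through as you indicate; for \stabCase{3}, independence of $mp$ from $s_2, \ldots, s_r$ follows because $p$ anticommutes with $s_1$ whereas every element of $\langle s_2, \ldots, s_r \rangle$ commutes with it.

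One point deserves care. In \stabCase{1} you compute the outcome probability as $\operatorname{Tr}(P_m\Pi)/\operatorname{Tr}(\Pi)$, i.e.\ for the maximally mixed state on the codespace. For an arbitrary pure state $\ket{\psi}$ in the codespace the probability is $\tfrac{1}{2}(1 + m\bra{\psi} p \ket{\psi})$, and since $\ol{p}$ is a non-trivial logical operator this expectation value depends on the encoded logical state and need not vanish (e.g.\ $\SSS = \{\one\}$, $p = Z_1$, $\ket{\psi} = \ket{0}$ gives a deterministic outcome). So ``random'' in \stabCase{1} must be read as ``not determined by $\SSS$'', or equivalently as uniformly random for the maximally mixed codespace state --- which is how the stabilizer formalism is normally interpreted, and is exactly what your trace computation proves. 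No such caveat is needed in \stabCase{3}: your identity $\operatorname{Tr}(P_m\Pi) = \operatorname{Tr}(P_{-m}\Pi)$ adapts verbatim to any codespace state via $s_1\ket{\psi} = \ket{\psi}$, so there the outcome is genuinely uniformly random.
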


We can then define measurement in the normalizer formalism via the same three cases.

\begin{theorem}[Measurement in the normalizer formalism]\label{thm:normalizer_formalism}
    Given a stabilizer group $\SSS = \langle s_1, \ldots, s_r \rangle \leq \PPP_n$ with rank $r$,
    and letting $k = n-r$, we know that:
    \begin{equation}\label{eq:normalizer_formalism_presentation}
        N(\SSS) / \SSS = \langle ~
            \ol{i},~
            \ol{x_1},~
            \ol{z_1},~
            \ol{x_2},~
            \ol{z_2},~
            \ldots,~
            \ol{x_k},~
            \ol{z_k} ~ \rangle
        \cong \PPP_k
    \end{equation}
    for some Paulis $x_j$ and $z_j$ in $N(S)$.
    Measuring a Hermitian Pauli $p \in \PPP_n$
    produces a measurement outcome $m \in \{1, -1\}$ and a new stabilizer group $\SSS'$;
    we can describe the new logical Pauli group $N(\SSS') / \SSS'$ in terms of the old one $N(\SSS) / \SSS$
    according to the same three cases:

    \hypertarget{normalizer_formalism_case_1}{\textbf{Case 1}}
    (only possible when $r < n$): $p$ commutes with all generators $s_j$ but $\pm p \notin \SSS$.
    Equivalently, $\ol{p}$ is an element of $N(\SSS) / \SSS$ other than $\ol{\one}$ or $-\ol{\one}$.
    By choosing a different generating set for $N(\SSS) / \SSS$ if needed,
    we may assume without loss of generality that $\ol{p} = \ol{x_1}$.
    We then find that:
    \begin{equation}\label{eq:normalizer_formalism_case_1}
        N(\SSS') / \SSS' = \langle ~
            \ol{i},~
            \ol{x_2},~
            \ol{z_2},~
            \ldots,~
            \ol{x_k},~
            \ol{z_k} ~ \rangle
        \cong \PPP_{k-1}
    \end{equation}
    where $\SSS'$ is now $\langle mp, s_1, \ldots, s_r \rangle$.

    \hypertarget{normalizer_formalism_case_2}{\textbf{Case 2:}}
    $p$ commutes with all generators $s_j$ and $\pm p \in \SSS$.
    Equivalently, $\ol{p}$ is either $\ol{\one}$ or $-\ol{\one}$ in $N(\SSS) / \SSS$.
    This case is trivial: $\SSS' = \SSS$, hence $N(\SSS') / \SSS' = N(\SSS) / \SSS$.

    \hypertarget{normalizer_formalism_case_3}{\textbf{Case 3:}}
    $p$ anti-commutes with at least one $s_j$.
    Equivalently, $\ol{p} \notin N(\SSS) / \SSS$.
    Without loss of generality, we assume $p$ anti-commutes with $s_1$ and commutes with $s_2, \ldots, s_r$.
    Then by multiplying representatives of generators of $N(\SSS) / \SSS$ by $s_1$ if needed,
    we can also assume $x_1, z_1, \ldots, x_k, z_k$ all commute with $p$.
    We then find that:
    \begin{equation}\label{eq:normalizer_formalism_case_3}
        N(\SSS') / \SSS' = \langle ~
            \ol{i},~
            \ol{x_1},~
            \ol{z_1},~
            \ol{x_2},~
            \ol{z_2},~
            \ldots,~
            \ol{x_k},~
            \ol{z_k} ~ \rangle
        \cong \PPP_{k}
    \end{equation}
    where $\SSS'$ is now $\langle mp, s_2, \ldots, s_r \rangle$.
\end{theorem}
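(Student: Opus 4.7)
The plan is to handle the three cases separately, in each case first invoking \Cref{thm:stabilizer_formalism} to get an explicit generating set for $\SSS'$, then applying \Cref{lem:normalizer_iso_pauli_group} to pin down the isomorphism class of $N(\SSS')/\SSS'$, and finally exhibiting representatives of the claimed generators and checking they have the right commutation relations and together span a subgroup of the right size. Case 2 is immediate, since \Cref{thm:stabilizer_formalism} gives $\SSS' = \SSS$ and hence $N(\SSS')/\SSS' = N(\SSS)/\SSS$ unchanged. The other two cases are more substantive.

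For Case 1, I would first justify the ``WLOG $\ol{p} = \ol{x_1}$'' step by appealing to \Cref{lem:normalizer_generator_choice}: since $\ol{p}$ is a non-trivial, Hermitian, order-two element of $N(\SSS)/\SSS \cong \PPP_k$, we may complete $\{\ol{p}\}$ to a maximally Abelian set $\langle \ol{p}, \ol{x_2}, \ldots, \ol{x_k}\rangle$ and then invoke the lemma to produce a partner family $\ol{z_1}, \ldots, \ol{z_k}$, relabelling $\ol{p}$ as $\ol{x_1}$. Then \Cref{thm:stabilizer_formalism} says $\SSS' = \langle mp, s_1, \ldots, s_r\rangle$ has rank $r+1$, so by \Cref{lem:normalizer_iso_pauli_group} the new logical group is isomorphic to $\PPP_{k-1}$, i.e.\ has order $4^k$. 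Using \Cref{cor:stabilizer_normalizer_is_centralizer} we have $N(\SSS') = N(\SSS) \cap C(p)$, and the representatives $x_2, z_2, \ldots, x_k, z_k$ all commute with $p$ (since in $N(\SSS)/\SSS \cong \PPP_k$ they commute with $\ol{x_1} = \ol{p}$), hence live in $N(\SSS')$. A count then shows $\langle \ol{i}, \ol{x_2}, \ol{z_2}, \ldots, \ol{x_k}, \ol{z_k}\rangle$ already has order $4^k$ inside $N(\SSS')/\SSS'$, so it is everything; the Pauli commutation relations are inherited from $N(\SSS)/\SSS$.

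For Case 3, \Cref{thm:stabilizer_formalism} gives $\SSS' = \langle mp, s_2, \ldots, s_r\rangle$, still of rank $r$, so \Cref{lem:normalizer_iso_pauli_group} yields $N(\SSS')/\SSS' \cong \PPP_k$ of order $4^{k+1}$. The core of the case is the ``WLOG'' adjustment of representatives: each $x_j$ or $z_j$ either commutes or anti-commutes with $p$, and if it anti-commutes we replace it by $s_1 x_j$ (respectively $s_1 z_j$), which now commutes with $p$ because $s_1$ anti-commutes with $p$; this does not change the coset mod $\SSS$, nor does it alter any mutual commutation relations since $s_1 \in \SSS \leq N(\SSS)$. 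The modified representatives lie in $N(\SSS)\cap C(p) = N(\SSS')$, and together with $\ol{i}$ generate a subgroup of $N(\SSS')/\SSS'$ that, by counting, again exhausts the full $4^{k+1}$ elements. The bookkeeping obstacle I anticipate in both Cases 1 and 3 is verifying that the proposed cosets remain independent modulo the new (enlarged or rearranged) $\SSS'$ -- this reduces to the standard fact that in $\PPP_k$ no non-trivial product of $X_1, X_2, Z_2, \ldots, X_k, Z_k$ equals $\one$ or $X_1$ alone, which rules out any accidental collapse.
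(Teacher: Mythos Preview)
Your approach is essentially the same as the paper's: in each nontrivial case you justify the WLOG step via \Cref{lem:normalizer_generator_choice}, check the proposed representatives land in $N(\SSS')$, verify they remain independent modulo $\SSS'$, and conclude by matching cardinality against \Cref{lem:normalizer_iso_pauli_group}. The paper spells out the independence check as an explicit contradiction argument rather than a count, but the content is identical.

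One small slip to fix: in Case~3 you write $N(\SSS)\cap C(p) = N(\SSS')$, but this is false --- for instance $p$ itself lies in $N(\SSS')$ (it commutes with $mp, s_2,\ldots,s_r$) yet $p\notin N(\SSS)$ since it anti-commutes with $s_1$. Only the inclusion $N(\SSS)\cap C(p)\subseteq N(\SSS')$ holds, and fortunately that inclusion is all your argument actually uses. (In Case~1 the equality $N(\SSS') = N(\SSS)\cap C(p)$ \emph{is} correct, since there $\SSS' = \langle mp\rangle\,\SSS$.)
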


\begin{proof}
    Let's first note that since $p$ is Hermitian, we have $p^2 = \one$.
    Thus if $\ol{p}$ is in $N(\SSS)/\SSS$, it must be of order 2.
    This is important because it means
    that the three cases above are disjoint and cover all possibilities;
    $\ol{p}$ is either in $N(\SSS)/\SSS$ or it isn't,
    and if it is, it's either $\ol{\one}$, $-\ol{\one}$ or another order-2 element.
    In the following, we will occasionally need to be verbose
    as to whether $\ol{q}$ denotes $q\SSS \in N(\SSS)/\SSS$ or $q\SSS' \in N(\SSS')/\SSS'$.

    We start with \normCase{1}.
    If $p$ commutes with all generators of $\SSS$ but $\pm p \notin \SSS$,
    then by definition this means $\ol{p}$ is an order-2 element of $N(\SSS)/\SSS$ other than $\ol{\one}$ or $-\ol{\one}$.
    The converse also holds, hence our use of the word `equivalently' above was justified.
    By \Cref{lem:normalizer_generator_choice}, we can always choose a presentation
    $\langle ~
        \ol{i},~
        \ol{x_1},~
        \ol{z_1},~
        \ldots,~
        \ol{x_k},~
        \ol{z_k} ~
    \rangle$
    for $N(\SSS)/\SSS$ such that $\ol{x_1} = \ol{p}$
    and the Pauli commutativity relations are satisfied by all $\ol{x_j}$ and $\ol{z_j}$.
    Since $\SSS'$ has minimal presentation $\langle mp, s_1, \ldots, s_r \rangle$,
    \Cref{lem:normalizer_iso_pauli_group} tells us that $N(\SSS') / \SSS'$ is isomorphic to $\PPP_{k-1}$.

    We then claim that a presentation for $N(\SSS')/\SSS'$ is
    $\langle
        \ol{i},~
        \ol{x_2},~
        \ol{z_2},~
        \ldots,~
        \ol{x_k},~
        \ol{z_k}
    \rangle$.
    Let's first check this is well-defined, in that the representative $q$ of each generator $\ol{q}$ is in $N(\SSS')$.
    This follows from the fact that each $q$ was in $N(\SSS)$, so commutes with $s_1, \ldots, s_r$,
    and because each $\ol{q}$ commuted with $\ol{x_1} = \ol{p}$ in $N(\SSS)/\SSS$, hence $q$ commutes with $mp$.
    Next, we check these generators remain independent in $N(\SSS')/\SSS'$.
    Similarly, this follows from the fact that they were
    independent of one another and of $\ol{p} = \ol{x_1}$ in $N(\SSS) / \SSS$.
    Specifically, let's assume for a contradiction that a generator $\ol{q}$ can be written
    in terms of the remaining $2(k-1)$ generators $\ol{q_1},~ \ldots,~ \ol{q_{2(k-1)}}$
    as a word $\ol{w} \coloneqq \ol{q_1^{a_1} \ldots q_{2(k-1)}^{a_{2(k-1)}}}$.
    Then this would imply $q = ws'$ for some $s' \in \SSS'$.
    This $s'$ can in turn be written as $(mp)^{b_0} s_1^{b_1} \ldots s_r^{b_r}$.
    But noting that $m = i^{1-m}$,
    we can define $w' = w (i^{1-m}p)^{b_0}$ as a new word over representatives of generators of $N(\SSS)/\SSS$,
    and can also define $s = s_1^{b_1} \ldots s_r^{b_r} \in \SSS$, so that $q = w's$.
    But then this says $\ol{q} = \ol{w'}$ in $N(\SSS)/\SSS$, contradicting the fact that
    the generators ${\ol{p}, \ol{q}, \ol{q_1}, \ldots, \ol{q_{2(n-k-1)}}}$ were independent in $N(\SSS)/\SSS$.

    Finally, we check that they obey the Pauli commutativity relations in $N(\SSS')/\SSS'$;
    this follows directly from the fact that they did so in $N(\SSS) / \SSS$.
    Thus this presentation is a subgroup of $N(\SSS') / \SSS'$ isomorphic to $\PPP_{k-1}$.
    But since $N(\SSS') / \SSS'$ is itself isomorphic to $\PPP_{k-1}$,
    this presentation must be exactly $N(\SSS') / \SSS'$, as claimed.

    Next up is \normCase{2}, where $p$ commutes with all generators $s_j$ and $\pm p \in \SSS$.
    By definition, if $\pm p \in \SSS$ then $\ol{p} = \pm \ol{\one}$ in $N(\SSS)/\SSS$, and vice versa,
    so again our use of `equivalently' in the theorem statement was justified.
    Then there's nothing left to prove; since $\SSS' = \SSS$ in this case,
    we know $N(\SSS')/\SSS' = N(\SSS)/\SSS$.

    Finally, in \normCase{3}, we know $p$ anti-commutes with at least one generator of $\SSS$.
    Thus it cannot be in $N(\SSS)$, and hence $\ol{p} \notin N(\SSS)/\SSS$.
    Again, the converse also holds, justifying our use of `equivalently' above.
    We may assume without loss of generality that $\SSS = \langle s_1, \ldots, s_r \rangle$
    where $p$ anti-commutes with $s_1$ but commutes with the remaining generators.
    We have a presentation
    $\langle ~
        \ol{i},~
        \ol{x_1},~
        \ol{z_1},~
        \ldots,~
        \ol{x_{k}},~
        \ol{z_{k}} ~
    \rangle$
    for $N(\SSS)/\SSS$.
    If any generator $q\SSS = \ol{q} \in \{\ol{x_1},~ \ol{z_1},~ \ldots,~ \ol{x_{k}},~ \ol{z_{k}}\}$
    is such that its representative $q$ anti-commutes with $p$,
    we can pick a new representative $qs_1 \in q\SSS$ that must commute with $p$.
    Hence without loss of generality we may assume all representatives $x_1,~ z_1,~ \ldots,~ x_{k},~ z_{k}$
    commute with $p$.
    Since $\SSS'$ has minimal presentation $\langle mp, s_2, \ldots, s_r \rangle$,
    we know $N(\SSS')/\SSS'$ is isomorphic to $\PPP_{k}$.

    We claim that a presentation for $N(\SSS')/\SSS'$ is
    $\langle
        \ol{i},~
        \ol{x_1},~
        \ol{z_1},~
        \ldots,~
        \ol{x_{k}},~
        \ol{z_{k}}
    \rangle$.
    To prove this, we repeat similar steps as for \normCase{1} above;
    we first show this is well-defined, in that each generator representative is in $N(\SSS')$.
    This follows from the fact that each one was in $N(\SSS)$, so commutes with $s_2, \ldots, s_r$,
    and from the assumption above that we picked them to commute with $p$, and hence $mp$ too.
    Then we must show these generators remain independent.
    To see this, suppose for a contradiction that some generator $\ol{q}$
    can be written in terms of the other $2k$ generators $\ol{q_1}, \ldots, \ol{q_{2k}}$,
    as some word $\ol{w} \coloneqq \ol{q_1^{a_1} \ldots q_{2k}^{a_{2k}}}$, for $a_j \in \{0, 1\}$.
    This would mean $q = ws'$, for some $s' \in \SSS'$.
    This $s'$ can itself be written as $(mp)^{b_1} s_2^{b_2}, \ldots, s_r^{b_r}$.
    If $b_1 = 0$, then this contradicts the fact that the generators ${\ol{q}, \ol{q_1}, \ldots, \ol{q_{2k}}}$
    were independent in $N(\SSS)/\SSS$.
    But equally if $b_1 = 1$, then $s_1$ anticommutes with $ws'$ but not $q$,
    which is again a contradiction, since $q = ws'$.
    So the generators must all be independent.
    Finally, we must prove the generators still satisfy the Pauli commutation relations;
    this follows directly from the fact that they did so in $N(\SSS)/\SSS$.
    We can conclude this presentation is a subgroup of $N(\SSS')/\SSS'$ isomorphic to $\PPP_{n-k}$,
    and since $N(\SSS')/\SSS'$ is itself isomorphic to $N(\SSS')/\SSS'$,
    it must be that this presentation is the whole of $N(\SSS')/\SSS'$, as claimed.
\end{proof}
    \section{ISG code example}\label{sec:isg_code_example}

Here we give a simple example of an ISG code
and the evolution over time of its ISG $\isg{t}$ and logical Pauli group $\lpg{t}$.
Once we know the ISG's evolution, we can then immediately write down detectors for the code too.
We do all this algebraically, using the stabilizer and normalizer formalisms of \Cref{sec:normalizer_formalism}.
We will consider the ISG code defined by the schedule
$\MMM = [\langle X_1 X_2,~ X_3 X_4 \rangle, \langle  Z_1 Z_3,~ Z_2 Z_4 \rangle]$.
Those who've seen a bit of error correction before may recognise this as the distance-two Bacon-Shor subsystem code,
which can in turn be thought of as a subsystem implementation of the distance-two rotated surface code.

\subsection{Instantaneous stabilizer group}\label{subsec:isg_example_isg}

The initial ISG $\isg{-1}$ is the trivial group $\{\one\}$, as ever.
Using the stabilizer formalism we can see how $\isg{t}$ evolves.
In the zero-th round we measure $X_1 X_2$ and $X_3 X_4$, so we are in \stabCase{1} for both;
we get random outcomes $m_{X_1 X_2}^{(0)}$ and $m_{X_3 X_4}^{(0)}$ respectively,
and the ISG immediately afterwards is:
\begin{equation}\label{eq:eg:ISG_0}
\isg{0} = \langle m_{X_1 X_2}^{(0)} X_1 X_2,~ m_{X_3 X_4}^{(0)} X_3 X_4 \rangle
\end{equation}
By multiplying the second generator by the first,
we get a new presentation for the same group that'll be more convenient for figuring out what $\isg{1}$ is in a second:
\begin{equation}\label{eq:eg:ISG_0'}
\isg{0} = \langle m_{X_1 X_2}^{(0)} X_1 X_2,~ m_{X_1 X_2}^{(0)} m_{X_3 X_4}^{(0)} X_1 X_2 X_3 X_4 \rangle
\end{equation}
In the next round we measure $Z_1 Z_3$ and $Z_2 Z_4$.
First, consider $Z_1 Z_3$.
Since it anticommutes with $X_1 X_2$ but commutes with $X_1 X_2 X_3 X_4$, we're in \stabCase{3};
$m_{X_1 X_2}^{(0)} X_1 X_2$ is removed from the generating set of $\isg{1}$,
and replaced by $m_{Z_1 Z_3}^{(1)}Z_1 Z_3$, where $m_{Z_1 Z_3}^{(1)}$ is the random outcome of measuring $Z_1 Z_3$.
Next, consider $Z_2 Z_4$:
since it commutes with $Z_1 Z_3$ and $X_1 X_2 X_3 X_4$ but neither it nor its negation is in $\isg{0}$,
we're in \stabCase{1}.
So $m_{Z_2 Z_4}^{(1)}Z_2 Z_4$ is added as a generator of $\isg{1}$,
where $m_{Z_2 Z_4}^{(1)}$ is the random outcome of measuring $Z_2 Z_4$.
So altogether:
\begin{equation}\label{eq:eg:ISG_1}
\isg{1} = \langle
m_{Z_1 Z_3}^{(1)} Z_1 Z_3,~
m_{Z_2 Z_4}^{(1)} Z_2 Z_4,~
m_{X_1 X_2}^{(0)} m_{X_3 X_4}^{(0)} X_1 X_2 X_3 X_4 \rangle
\end{equation}
For every subsequent timestep $t > 1$, the ISG $\SSS_t$ will have rank 3.
That is, this code is established after $T=1$ timesteps.
Let's spell this out for $t=2$, wherein we measure $X_1 X_2$ and $X_3 X_4$ again.
We'll first use a different generating set for $\isg{1}$, like we did just a second ago:
\begin{equation}\label{eq:eg:ISG_1'}
\isg{1} = \langle
m_{Z_1 Z_3}^{(1)} Z_1 Z_3,~
m_{Z_1 Z_3}^{(1)} m_{Z_2 Z_4}^{(1)} Z_1 Z_2 Z_3 Z_4,~
m_{X_1 X_2}^{(0)} m_{X_3 X_4}^{(0)} X_1 X_2 X_3 X_4 \rangle
\end{equation}
We measure $X_1 X_2$, which anticommutes with $Z_1 Z_3$, commutes with the other generators,
and neither it nor its negation is in $\isg{1}$.
So we're in \stabCase{3};
we get random outcome $m_{X_1 X_2}^{(2)}$,
and $m_{X_1 X_2}^{(2)} X_1 X_2$ replaces $m_{Z_1 Z_3}^{(1)} Z_1 Z_3$ as a generator.
In particular, the product
$m_{X_1 X_2}^{(2)} m_{X_1 X_2}^{(0)} m_{X_3 X_4}^{(0)} X_3 X_4$ of
$m_{X_1 X_2}^{(2)} X_1 X_2$ and
$m_{X_1 X_2}^{(0)} m_{X_3 X_4}^{(0)} X_1 X_2 X_3 X_4$
is now in $\isg{1}$.
So when we then measure $X_3 X_4$, we're in \stabCase{2};
we get deterministic outcome $m_{X_3 X_4}^{(2)} = m_{X_1 X_2}^{(2)} m_{X_1 X_2}^{(0)} m_{X_3 X_4}^{(0)}$,
and $\isg{2}$ remains unchanged:
\begin{equation}\label{eq:eg:ISG_2}
\isg{2} = \langle
m_{X_1 X_2}^{(2)} X_1 X_2,~
m_{Z_1 Z_3}^{(1)} m_{Z_2 Z_4}^{(1)} Z_1 Z_2 Z_3 Z_4,~
m_{X_1 X_2}^{(0)} m_{X_3 X_4}^{(0)} X_1 X_2 X_3 X_4 \rangle
\end{equation}
By multiplying the third generator by the first and using
$m_{X_3 X_4}^{(2)} = m_{X_1 X_2}^{(2)} m_{X_1 X_2}^{(0)} m_{X_3 X_4}^{(0)}$,
we get the following more convenient presentation:
\begin{equation}\label{eq:eg:ISG_2'}
\isg{2} = \langle
m_{X_1 X_2}^{(2)} X_1 X_2,~
m_{X_3 X_4}^{(2)} X_3 X_4,~
m_{Z_1 Z_3}^{(1)} m_{Z_2 Z_4}^{(1)} Z_1 Z_2 Z_3 Z_4 \rangle
\end{equation}
And indeed this has rank 3 again, as promised.
If we continued this sort of analysis, we'd see that we get:
\begin{equation}\label{eq:eg:isg_code_isg_full}
    \isg{t} = \begin{cases}
        \{\one\} &\text{if } t < 0 \\[3pt]
        \langle
        m_{X_1 X_2}^{(0)} X_1 X_2,~
        m_{X_3 X_4}^{(0)} X_3 X_4
        \rangle &\text{if } t = 0 \\[3pt]
        \langle
        m_{Z_1 Z_3}^{(t)} Z_1 Z_3,~
        m_{Z_3 Z_4}^{(t)} Z_2 Z_4,~
        m_{X_1 X_2}^{(0)} m_{X_3 X_4}^{(0)} X_1 X_2 X_3 X_4
        \rangle &\text{if $t > 0$ odd} \\[3pt]
        \langle
        m_{X_1 X_2}^{(t)} X_1 X_2,~
        m_{X_3 X_4}^{(t)} X_3 X_4,~
        m_{Z_1 Z_3}^{(1)} m_{Z_3 Z_4}^{(1)} Z_1 Z_2 Z_3 Z_4
        \rangle &\text{if $t > 0$ even} \\
    \end{cases}
\end{equation}

Note that the ISG here is \textit{dynamic} even after establishment at $T=1$;
it changes from one timestep to the next.
As pointed out in the table in \Cref{fig:isg_relationships},
this is characteristic of a subsystem code.
In fact, we can be even more specific;
for any subsystem code,
the way in which the ISG changes between timesteps (after establishment)
is more commonly called \textit{gauge fixing} -
we give more detail on this in \Cref{subsec:gauge_fixing}.

\subsection{Detectors}\label{subsec:isg_example_detectors}

By analysing the evolution of the ISG of the code,
we also uncover the code's detectors.
Specifically, anytime we measure a Pauli $p$ and find ourselves in $\stabCase{2}$,
we learn a detector of the code.
Moreover, a generating set of detectors of the code can be found this way.
To see why this is, recall that a detector is defined to be
a set of measurement outcomes whose product is deterministic in the absence of noise.
In any ISG code, at any timestep $t$, every element of the ISG $\isg{t}$ is of the form
$m_{p_1}^{(t_1)} \ldots m_{p_\ell}^{(t_\ell)} p_1 \ldots p_\ell$,
where each $p_j$ is a Pauli and $m_{p_j}^{(t_j)}$ is the outcome of measuring $p_j$ at timestep $t_j$.
If at time $t$ we measure some Pauli $p$ and land in $\stabCase{2}$,
it means that $\pm p$ was already in $\isg{t}$.
In other words, $\pm p = m_{p_1}^{(t_1)} \ldots m_{p_\ell}^{(t_\ell)} p_1 \ldots p_\ell$.
So the measurement outcome $m_p^{(t)}$ is deterministically equal to $m_{p_1}^{(t_1)} \ldots m_{p_\ell}^{(t_\ell)}$;
that is, the product $m_p^{(t)} m_{p_1}^{(t_1)} \ldots m_{p_\ell}^{(t_\ell)}$ is deterministic (it's 1, in this case).
Hence the formal product $m_p^{(t)} m_{p_1}^{(t_1)} \ldots m_{p_\ell}^{(t_\ell)}$ is a detector.

For example, in the analysis above, the measurement of $X_3 X_4$ at time $t=2$ was handled by $\stabCase{2}$.
We found that the measurement outcome $m_{X_3 X_4}^{(2)}$ was deterministically equal to
the product $m_{X_1 X_2}^{(2)} m_{X_1 X_2}^{(0)} m_{X_3 X_4}^{(0)}$ of previous measurements.
Thus the formal product $m_{X_1 X_2}^{(2)} m_{X_3 X_4}^{(2)} m_{X_1 X_2}^{(0)} m_{X_3 X_4}^{(0)}$ is a detector.
Indeed, were we to continue this analysis,
we would find detectors:
\begin{equation}\label{eq:isg_example_detectors}
    \begin{aligned}
        m_{X_1 X_2}^{(t)} m_{X_3 X_4}^{(t)} m_{X_1 X_2}^{(t-2)} m_{X_3 X_4}^{(t-2)}
        & \text{\quad at every even timestep } t \geq 2 \\
        m_{Z_1 Z_3}^{(t)} m_{Z_2 Z_4}^{(t)} m_{Z_1 Z_3}^{(t-2)} m_{Z_2 Z_4}^{(t-2)}
        & \text{\quad at every odd timestep } t \geq 2 \\
    \end{aligned}
\end{equation}

The reason we said this gives us a \textit{generating set} of detectors is because detectors form a group.
Suppose we have two detectors $m_1 \ldots m_u$ and $m_1' \ldots m_v'$.
Since by definition these products are deterministic,
the total product $m_1 \ldots m_u m_1' \ldots m_v'$ must be deterministic too.
Hence the formal product $m_1 \ldots m_u m_1' \ldots m_v'$ is a detector.
The group's unit is the trivial formal product,
which corresponds to an empty set of measurement outcomes.
We can write this as $\prod_{m_j \in \emptyset} m_j = 1$.
Every detector is then its own inverse; as a formal product whose powers are taken mod $2$, we have
$m_1 \ldots m_u m_1 \ldots m_u = m_1^2 \ldots m_u^2 = m_1^0 \ldots m_u^0 = 1$.

\subsection{Logical Pauli group}\label{subsec:isg_example_lpg}

Tracking the evolution of the logical Pauli group $\lpg{t}$ has a very similar feel as for $\isg{t}$,
but there's a little bit more to picking the right presentation for the group now,
in order to be able to apply the normalizer formalism.

As a reminder, we're considering the schedule
$\MMM = [\langle X_1 X_2,~ X_3 X_4 \rangle, \langle  Z_1 Z_3,~ Z_2 Z_4 \rangle]$.
Since the initial ISG $\isg{-1}$ is trivial,
the most obvious presentation for the initial logical Pauli group is:
\begin{equation}\label{eq:isg_code_ops_initial}
\lpg{-1} = \langle~ \ol{i},~ \ol{X_1},~ \ol{Z_1},~ \ol{X_2},~ \ol{Z_2},~ \ol{X_3},~ \ol{Z_3},~ \ol{X_4},~ \ol{Z_4} ~\rangle
\end{equation}
But at $t=0$ we measure $X_1 X_2$ and $X_3 X_4$,
so in order to use the normalizer formalism, we need a different presentation.
First, we consider measuring $X_1 X_2$.
We know we'll be in \normCase{1};
we can see this either by observing that $\ol{X_1 X_2}$ is an element of $\lpg{-1}$ other than $\ol{\one}$ or $-\ol{\one}$,
or equivalently by observing that $X_1 X_2$ is not in $\isg{-1}$ but commutes with all its generators
(vacuously, because there are none).
To apply the formalism, we can first multiply $\ol{X_1}$ by $\ol{X_2}$,
so that $\ol{X_1 X_2}$ is a generator.
One might think that we're then good to go, but we must be careful:
this multiplication means the generators no longer satisfy the Pauli commutativity relations.
To account for this, we can multiply $\ol{Z_2}$ by $\ol{Z_1}$, to get:
\begin{equation}\label{eq:isg_code_ops_initial'}
    \lpg{-1} = \langle
        \ol{i},~
        \ol{X_1 X_2},~
        \ol{Z_1},~
        \ol{X_2},~
        \ol{Z_1 Z_2},~
        \ol{X_3},~
        \ol{Z_3},~
        \ol{X_4},~
        \ol{Z_4}
    \rangle
\end{equation}
Now we can apply the formalism!
Doing this simply removes $\ol{X_1 X_2}$ and $\ol{Z_1}$ from the generating set.
We can repeat something similar for the measurement of $X_3 X_4$;
namely, we can multiply $\ol{X_3}$ by $\ol{X_4}$ and $\ol{Z_4}$ by $\ol{Z_3}$,
then apply \normCase{1} to get:
\begin{equation}\label{eq:isg_code_ops_0}
    \lpg{0} = \langle~ \ol{i},~ \ol{X_2},~ \ol{Z_1 Z_2},~ \ol{X_4},~ \ol{Z_3 Z_4} ~\rangle
\end{equation}
where $\isg{0} = \langle m_{X_1 X_2}^{(0)} X_1 X_2,~ m_{X_3 X_4}^{(0)} X_3 X_4 \rangle$.
At the next timestep $t=1$, we first measure $Z_1 Z_3$.
This puts us in \normCase{3}, and transforms $\isg{0}$ to
$\isg{0}' = \langle m_{Z_1 Z_3}^{(1)} Z_1 Z_3,~ m_{X_1 X_2}^{(0)} m_{X_3 X_4}^{(0)} X_1 X_2 X_3 X_4 \rangle$.
Since all representatives of the generators of $\lpg{0}$ already commute with $Z_1 Z_3$,
we don't need to change our presentation to be able to apply the formalism.
We get:
\begin{equation}\label{eq:isg_code_ops_0'}
N(\isg{0}') / \isg{0}' = \langle~ \ol{i},~ \ol{X_2},~ \ol{Z_1 Z_2},~ \ol{X_4},~ \ol{Z_3 Z_4} ~\rangle
\end{equation}
The other measurement we make at timestep $t=1$ is $Z_2 Z_4$.
This means we're in \normCase{1}, but we need an alternative presentation first,
in which $\ol{Z_2 Z_4}$ is a generator.
We can do this by multiplying $\ol{Z_3 Z_4}$ by $\ol{Z_1 Z_2}$ to get $\ol{Z_1 Z_2 Z_3 Z_4}$,
then, since $m_{Z_1 Z_3}^{(1)} Z_1 Z_3 \in \isg{0}'$, we can pick a new representative $m_{Z_1 Z_3}^{(1)} Z_2 Z_4$.
If necessary, we can multiply this by $\ol{i}^2$ to get just $\ol{Z_2 Z_4}$, for convenience.
But now our generators don't satisfy the Pauli commutativity relations again.
Fortunately we can fix this by multiplying $\ol{X_2}$ by $\ol{X_4}$,
picking a new representative $m_{X_1 X_2}^{(0)} m_{X_3 X_4}^{(0)} X_1 X_3$,
and for convenience multiplying by $\ol{i}^2$ if needed to get just $\ol{X_1 X_3}$.
Altogether, this gives:
\begin{equation}\label{eq:isg_code_ops_0''}
    N(\isg{0}') / \isg{0}' = \langle~ \ol{i},~ \ol{X_1 X_3},~ \ol{Z_1 Z_2},~ \ol{X_4},~ \ol{Z_2 Z_4} ~\rangle
\end{equation}
We can now directly apply \normCase{1}; the resulting new logical Pauli group is:
\begin{equation}\label{eq:isg_code_ops_1}
\lpg{1} = \langle~ \ol{i},~ \ol{X_1 X_3},~ \ol{Z_1 Z_2} ~\rangle
\end{equation}
And this is how things stay: for every measurement $p$ performed at any future timestep $t > 1$,
we're either in \textit{Case 2} or \textit{Case 3} - and when it's the latter,
the given representatives of generators of $\lpg{1}$ even commute with $p$ already.
Hence:
\begin{equation}\label{eq:eg:isg_code_ops_full}
\lpg{t} = \begin{cases}
        \langle~
            \ol{i},~
            \ol{X_1},~ \ol{Z_1},~
            \ol{X_2},~ \ol{Z_2},~
            \ol{X_3},~ \ol{Z_3},~
            \ol{X_4},~ \ol{Z_4} ~
        \rangle \cong \PPP_4
        &\text{if } t < 0 \\
        \langle~
            \ol{i},~
            \ol{X_2},~ \ol{Z_1 Z_2},~
            \ol{X_4},~ \ol{Z_3 Z_4} ~
        \rangle \cong \PPP_2
        &\text{if } t = 0 \\
        \langle~
            \ol{i},~
            \ol{X_1 X_3},~ \ol{Z_1 Z_2} ~
        \rangle \cong \PPP_1
        &\text{if $t > 0$}
    \end{cases}
\end{equation}

We can thus see that the logical Pauli group is \textit{static} after becoming established at time $T=1$.
Again, as noted in the table in \Cref{fig:isg_relationships}, this is characteristic of a subsystem code.
    \section{Remarks on ISG codes}\label{sec:isg_code_remarks}

Here we comment on some aspects of ISG codes that we didn't have space for in the main text.
First, we note that $|\lpg{t}|$ is a non-increasing sequence in $t$.
That is, the number of qubits encoded in an ISG code can only ever stay the same or decrease, but never increase.
One can see this from the normalizer formalism of the previous section;
$\lpg{t}$ is initially isomorphic to $\PPP_n$, so encodes $n$ qubits,
then evolves exclusively by measuring Paulis $p \in \PPP_n$.
Such a measurement can only ever cause the number of encoded qubits to
decrease by one (\hyperlink{normalizer_formalism_case_1}{\textit{Case 1}})
or stay the same (\hyperlink{normalizer_formalism_case_2}{\textit{Cases 2 and 3}}).

Second, we note that, for all $t$, $\isg{t}$ has rank $r \iff \lpg{t} \cong \PPP_{n-r}$.
Thus when we defined `establishment' as being
the existence of a $T$ such that $\isg{t}$ has fixed rank $r$ for all $t \geq T$,
we could equally have defined it in terms of logical qubits,
as a $T$ such that $\lpg{t} \cong \PPP_k$ for fixed $k = n - r$ for all $t \geq T$.

We confess that we have used a very naive definition of distance;
namely, the minimum weight of any element of $\lpg{t}$ over all $t \geq T$.
This only considers problematic Pauli operators within a single timestep,
and fails to take into account that there can be combinations of Pauli operators over multiple timesteps
which anti-commute with logical operators but do not violate any detectors.
Our definition should perhaps be downgraded to a \textit{spacelike distance},
and a better definition of distance could instead be directly related to detecting and operating regions.

An interesting unanswered question concerns when two ISG codes should be considered equivalent.
For example, suppose we have an ISG code
$\MMM = [\ldots, \MMM_t, \MMM_{t+1}, \ldots]$.
If $\MMM_t \MMM_{t+1}$ is Abelian
(i.e.\ all Paulis in $\MMM_t$ and $\MMM_{t+1}$ commute with each other),
should this be considered equivalent to the ISG code
$\MMM' = [\ldots, \MMM_t \MMM_{t+1}, \ldots]$?
Certainly the ISGs $\isg{t+1}$ of $\MMM$ and $\isg{t}'$ of $\MMM'$ will be identical,
and thus so too the logical Pauli groups $\lpg{t+1}$ and $N(\isg{t}') / \isg{t}'$.
If we think the answer to this question is yes,
then any stabilizer code $\MMM = [\MMM_0, \MMM_1, \ldots, \MMM_t, \ldots]$
is equivalent to the period-1 code $\MMM = [\MMM_0 \MMM_1 \ldots \MMM_t \ldots]$.
In particular, this would make every stabilizer code a Floquet code,
demanding we update our Venn diagram from \Cref{fig:isg_relationships} to:
\begin{equation}\label{eq:venn_diagram_alternative}
    \scalebox{0.9}{\tikzfig{figures/ISG_codes/venn_diagram_alternative}}
\end{equation}
Other cases also arise -
should the periodic ISG code $\MMM = [\MMM_0, \MMM_1, \ldots, \MMM_{\ell-2}, \MMM_{\ell-1}]$
be considered equivalent to one like $\MMM' = [\MMM_1, \MMM_2, \ldots, \MMM_{\ell-1}, \MMM_0]$,
which has the same measurement sequence only cyclically permuted?
We would be interested to hear more opinions on this matter.

One could also point out that `establishment' isn't a strictly necessary part of the definition of an ISG code.
For example, one could imagine a measurement schedule $\MMM = [\MMM_0, \MMM_1, \ldots]$ such that
the sequence $|\lpg{t}|$ never becomes constant,
but decreases so slowly that one can still use a subset of the encoded logical qubits to perform quantum computation.
We would be very interested to hear of any non-trivial such $\MMM$.
Finally, we remark that the definition could (and probably should) be extended such that,
in addition to Pauli measurements, 
Clifford operations are also allowed to be applied to the qubits of the code. 
    \section{Remarks on subsystem codes}\label{sec:subsystem_code_remarks}

Here, we give more detail on the relationship between subsystem codes and ISG codes.
In \Cref{defn:subsystem_code_as_ISG_code}, we defined a subsystem code as a type of ISG code,
and confessed that this definition actually deviates slightly from the usual one.
We now discuss this deviation at length.
We first introduce new notation -
for any group $\GGG \leq \PPP_n$,
we use $\modPhases{\GGG}$ to denote $\GGG / \langle i \rangle$, the same group modulo phases -
and we remind ourselves of previous notation;
for groups $\GGG_1, \ldots, \GGG_{\ell} \leq \PPP_n$,
we write the product $\GGG_1 \ldots \GGG_{\ell}$ to mean
the group generated by the union of generating sets for $\GGG_1, \ldots, \GGG_{\ell}$,
and we let $\PPP_k^\circ$ denote the `almost Pauli group' $\langle X_1, Z_1, \ldots, X_k, Z_k \rangle$,
which satisfies $\addPhases{\PPP_k^\circ} = \PPP_k$.

At a high level, subsystem codes are
stabilizer codes in which some logical qubits aren't used to store logical information.
So the formalism doesn't give rise to new codes,
but rather gives rise to more flexible ways of correcting errors on these codes~\cite{BaconShorCodeOriginalPaper}.
These unused logical qubits are referred to as \textit{gauge qubits}.
Much like stabilizer codes can be defined entirely by a stabilizer group,
subsystem codes can be defined entirely by a subgroup $\GGG \leq \PPP_n$ of the Pauli group.
The only requirement on this $\GGG$ is that it must contain $i$ (and hence need not be Abelian).
We must also pick a presentation
$\langle i, s_1, \ldots, s_R \rangle$ of $Z(\GGG)$
such that $\SSS = \langle s_1, \ldots, s_R \rangle$ is a stabilizer group,
but the actual choice is arbitrary.
Crucially, at no point above was a measurement schedule required,
which makes this definition more general than ours.
Throughout the rest of this section, we will forget our \Cref{defn:subsystem_code_as_ISG_code};
whenever we mention a subsystem code,
we now just mean a gauge group $\GGG$.

It was originally envisioned that, in order to use a subsystem code for error correction,
one would repeatedly measure a generating set for $\SSS$~\cite{PoulinStabilizerFormalismSubsystemCodes}.
Later, it was pointed out that actually one could potentially measure
a generating set for a group $\GGG'$ satisfying $\addPhases{\GGG'} = \GGG$,
and combine the outcomes in order to infer the measurement outcomes of
a generating set for $\SSS$~\cite{BombinRubyCodes}.
Since the generators of such a $\GGG'$ are typically of lower-weight than those of $\SSS$,
this provides a significant practical advantage,
and is thus generally how subsystem codes are now envisioned to be implemented.
But since $\GGG$ can be non-Abelian,
we can't necessarily measure a full generating set for $\GGG'$ at once.
Instead, if choosing this route, we must specify a schedule
in which to measure mutually commuting sets of generators.
This leads us to define an \textit{associated ISG code} for a subsystem code:

\begin{definition}\label{defn:associated_ISG_code}
    Given a subsystem code $\GGG$, an \textbf{associated ISG code} for it is
    an ISG code $\MMM_\GGG = [\MMM_0, \MMM_1, \ldots]$ that establishes at some time $T \geq 0$,
    such that $\addPhases{\MMM_0 \MMM_1 \ldots} = \GGG$,
    and the ISG $\SSS_t$ satisfies $\modPhases{\SSS} \leq \modPhases{\SSS_t}$ for all $t \geq T$.
\end{definition}

In the other direction, we can define the \textit{associated subsystem code} for an ISG code:

\begin{definition}\label{defn:}
    Given an ISG code $\MMM = [\MMM_0, \MMM_1, \ldots]$,
    we say its \textbf{associated subsystem code} is defined by the gauge group
    $\GGG_\MMM = \addPhases{\MMM_0 \MMM_1 \ldots}$.
\end{definition}

Note that a single subsystem code $\GGG$ can have many associated ISG codes $\MMM_\GGG$,
but an ISG code $\MMM$ has a unique associated subsystem code $\GGG_\MMM$.
As an example, in \Cref{sec:isg_code_example} we said we looked at the distance-two Bacon-Shor code,
which is normally defined via the gauge group
$\GGG = \langle i, X_1 X_2, X_3 X_4, Z_1 Z_3, Z_2 Z_4 \rangle$.
However, since this group is non-Abelian,
if we choose to implement it by measuring generators of a $\GGG'$ satisfying $\addPhases{\GGG'} = \GGG$
(the obvious choice being $\GGG' = \langle X_1 X_2, X_3 X_4, Z_1 Z_3, Z_2 Z_4 \rangle$),
then we require a measurement schedule.
In that section, we thus actually looked at the associated ISG code
$\MMM_\GGG = [\langle X_1 X_2, X_3 X_4 \rangle, \langle Z_1 Z_3, Z_2 Z_4 \rangle]$.
Let's also point out quickly that,
while any partition of generators of $\GGG$ into mutually commuting subsets produces an ISG code,
some fall foul of the definition of an \textit{associated} ISG code for $\GGG$.
The schedule
$\MMM_\GGG = [\langle X_1 X_2 \rangle, \langle Z_2 Z_4 \rangle, \langle X_3 X_4 \rangle, \langle Z_1 Z_3 \rangle]$
is one such example for the Bacon-Shor code above.
Here the ISG at any time $t$ is (up to phases) exactly the group $\MMM_t$ whose generating set was just measured.
That is, letting $m_p^{(t)}$ denote the outcome of measuring Pauli $p$ at time $t \geq 0$, we have:
\begin{equation}\label{eq:subsystem_code_bad_isg}
    \isg{t} = \begin{cases}
        \langle m_{X_1 X_2}^{(t)} X_1 X_2 \rangle &\text{if } t = 0 \text{ mod } 4 \\[3pt]
        \langle m_{Z_2 Z_4}^{(t)} Z_2 Z_4 \rangle &\text{if } t = 1 \text{ mod } 4 \\[3pt]
        \langle m_{X_3 X_4}^{(t)} X_3 X_4 \rangle &\text{if } t = 2 \text{ mod } 4 \\[3pt]
        \langle m_{Z_1 Z_3}^{(t)} Z_1 Z_3 \rangle &\text{if } t = 3 \text{ mod } 4 \\
    \end{cases}
\end{equation}
Since $\modPhases{\SSS}$ = $\modPhases{Z(\GGG)} = \langle \ol{X_1 X_2 X_3 X_4}, \ol{Z_1 Z_3 Z_2 Z_4} \rangle$,
this ISG code doesn't satisfy $\modPhases{\SSS} \leq \modPhases{\SSS_t}$ for any $t$,
so isn't an associated ISG code for $\GGG$.

We are very interested in precisely pinning down the relationship between a subsystem code $\GGG$
and any associated ISG code $\MMM_\GGG$ for it.
To this end, we state below one result and one open question.
For a subsystem code $\GGG$, the logical (almost!) Pauli group is defined as $\sublpg{\GGG}$,
and is isomorphic to $\PPP_k^\circ$, for some $k \in \ZZ_{\geq 0}$.
This $k$ is defined to be the number of logical qubits of the subsystem code.
The following theorem effectively says that the logical qubits of a subsystem code are
a subset of the logical qubits of any associated ISG code.
More precisely:

\begin{theorem}\label{thm:subsystem_logical_qubits_are_ISG_qubits}
    For any subsystem code $\GGG$ and any associated ISG code $\MMM_\GGG$,
    there exist fixed Paulis $x_1, z_1, \ldots, x_k, z_k$ such that
    $\langle x_1 \GGG, z_1 \GGG, \ldots, x_k \GGG, z_k \GGG \rangle$
    is a presentation for $\sublpg{\GGG} \cong \PPP_k^\circ$,
    and, for all $t$,
    $\langle i \isg{t}, x_1 \isg{t}, z_1 \isg{t}, \ldots, x_k \isg{t}, z_k \isg{t} \rangle$
    is a subgroup of $\lpg{t}$ isomorphic to $\PPP_k$.
\end{theorem}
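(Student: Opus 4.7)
The plan is to take $x_j, z_j$ to be \emph{bare} logical operators of the subsystem code, i.e.\ to lie in the centralizer $C(\GGG) = \{p \in \PPP_n : pg = gp \text{ for all } g \in \GGG\}$, and to exhibit this one choice as a valid set of logical representatives for every instantaneous stabilizer group $\isg{t}$. The first half of the conclusion---the presentation of $\sublpg{\GGG}$ by $x_j\GGG, z_j\GGG$---is essentially a repackaging of the standard subsystem-code fact (a gauge-group analogue of \Cref{lem:normalizer_generator_choice}) that bare representatives can be chosen in $C(\GGG)$ satisfying exact Pauli commutation relations, using $C(\GGG)\cdot\GGG = N(\SSS)$ together with $C(\GGG)\cap\GGG = Z(\GGG)$. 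This is the main technical input; the paper does not isolate it as a lemma, so some care is needed to state or cite it.

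The real work is the second half. I would first note the containment $\isg{t} \leq \GGG$: every generator of $\isg{t}$ has the form $\pm p$ for some Pauli $p \in \MMM_s$ with $s \leq t$, and since $\addPhases{\MMM_0\MMM_1\ldots} = \GGG$ already contains all phases, every such $\pm p$ lies in $\GGG$. Combined with $x_j, z_j \in C(\GGG)$, this immediately gives $x_j, z_j \in C(\isg{t}) = N(\isg{t})$ by \Cref{cor:stabilizer_normalizer_is_centralizer}, so $x_j \isg{t}$ and $z_j \isg{t}$ are honest elements of $\lpg{t}$.

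For the isomorphism with $\PPP_k$, the Pauli commutation relations transfer directly to the cosets: $x_j$ and $z_j$ anticommute as Paulis, so $x_j z_j \isg{t} = -z_j x_j \isg{t}$, and these are distinct because $\isg{t}$ does not contain $-\one$; the commuting relations pass trivially. For independence, suppose a word $w = i^a x_1^{a_1} z_1^{b_1} \cdots x_k^{a_k} z_k^{b_k}$ with $a \in \{0,1,2,3\}$ and $a_j, b_j \in \{0,1\}$ lies in $\isg{t}$. Then $w \in \GGG$, so $w\GGG$ is trivial in $\sublpg{\GGG}$; since $i \in \GGG$, this reduces to $x_1^{a_1} z_1^{b_1} \cdots \GGG = \GGG$, and by the faithful presentation of $\sublpg{\GGG} \cong \PPP_k^\circ$ (the first half of the theorem), all $a_j = b_j = 0$. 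What survives is $i^a \in \isg{t}$, and since $\isg{t}$ contains neither $\pm i$ nor $-\one$, we get $a=0$. A count of the $4 \cdot 4^k = |\PPP_k|$ canonical forms then closes the argument.
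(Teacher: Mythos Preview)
Your proposal is correct and follows essentially the same route as the paper: choose bare logical representatives $x_j,z_j\in C(\GGG)$, use $\isg{t}\leq\GGG$ to conclude both $C(\GGG)\subseteq N(\isg{t})$ and that distinct cosets modulo $\GGG$ remain distinct modulo $\isg{t}$. The paper does in fact isolate the key input you flag as missing---it states (just after \Cref{defn:bare_dressed_logicals}) that every coset of $\sublpg{\GGG}$ contains a bare representative, and proves in \Cref{lem:bare_logicals} that the non-trivial bare logicals are exactly $C(\GGG)-\GGG$; your explicit verification of the Pauli relations and independence in $\lpg{t}$ is a bit more detailed than the paper's one-line coset argument, but the structure is the same.
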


We prove the statement in two steps.
First we show that every non-trivial coset of the subsystem code's logical Pauli group $\sublpg{\GGG}$
has a representative - a \textit{bare logical} -
which is also a non-trivial logical operator for any associated ISG code at all times.
Secondly, we show that inequivalent non-trivial logical operators of the subsystem code are also
inequivalent non-trivial logical operators of any associated ISG code.
Both statements rely on the fact that, for any subsystem code $\GGG$ and associated ISG code
$\MMM_\GGG = [\MMM_0, \MMM_1, \ldots]$,
the ISG is always a subgroup of the gauge group: $\forall t \in \ZZ, \isg{t} \leq \GGG$.
This can be seen directly from the fact that $\SSS_t$ is formed by
measuring generating sets for groups $\MMM_t, \MMM_{t-1}, \ldots, \MMM_0 \leq \GGG$.
Thus every element of $\SSS_t$ is a product of elements of $\GGG$,
multiplied by some product $\pm 1$ of measurement outcomes - both of which are again in $\GGG$.
We start in earnest by importing the following lemma separately,
so that we can refer to it again later:

\begin{lemma}[\cite{PoulinStabilizerFormalismSubsystemCodes}]\label{lem:gauge_group_structure}
    For any subsystem code $\GGG \leq \PPP_n$, the gauge group can be written as
    $\GGG = \SSS \langle i \rangle \langle x_1, z_1, \ldots, x_K, z_K \rangle$,
    for some Paulis $x_1, z_1, \ldots, x_K, z_K \in \PPP_n$,
    with $\langle x_1, z_1, \ldots, x_K, z_K \rangle$ isomorphic to $\PPP_K^\circ$.
\end{lemma}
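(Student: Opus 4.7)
The plan is to exhibit the decomposition by building the $x_j, z_j$ via a symplectic basis argument on $\GGG$ modulo its center, then lifting back to $\PPP_n$ while carefully handling phases.

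First I would identify the center: since $\GGG$ contains $i$ and every pair of Paulis commutes or anti-commutes, the condition $g \in Z(\GGG)$ is equivalent to $g$ commuting with every generator, so $Z(\GGG) = \langle i \rangle \SSS$. Next I would look at the quotient $\GGG / Z(\GGG)$. Elements of $\PPP_n$ all square to $\pm 1$ and any two elements either commute or anti-commute, so the quotient is an elementary abelian 2-group, and the commutator map $[\ol{g}, \ol{h}] \in \{\pm 1\}/(\pm 1) \cong \ZZ_2$ descends to a well-defined, non-degenerate (by construction of the center) alternating bilinear form on $\GGG / Z(\GGG) \cong \ZZ_2^{2K}$ for some $K$.

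Next, I would apply symplectic Gram--Schmidt to this bilinear form to produce a basis $\ol{x_1}, \ol{z_1}, \ldots, \ol{x_K}, \ol{z_K}$ satisfying the standard symplectic relations (i.e.\ $x_j$ anti-commutes with $z_j$, and all other pairs commute), and pick arbitrary Pauli lifts $x_j, z_j \in \GGG \leq \PPP_n$. By construction, these already obey the Pauli commutativity relations of $\PPP_K^\circ$ as elements of $\PPP_n$. The remaining subtlety is that a Pauli $p \in \PPP_n$ satisfies $p^2 \in \{\pm 1\}$, and we need $x_j^2 = z_j^2 = I$ in order to get an isomorphism with $\PPP_K^\circ = \langle X_1, Z_1, \ldots, X_K, Z_K \rangle$ (rather than a group containing extra phases). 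Since $i \in \GGG$, if some representative satisfies $p^2 = -I$ then $(ip)^2 = I$ and $ip$ is still in the same coset of $Z(\GGG)$, so we may replace $p$ by $ip$. This gives Hermitian lifts $x_j, z_j$ with the desired Pauli relations, from which the isomorphism $\langle x_1, z_1, \ldots, x_K, z_K \rangle \cong \PPP_K^\circ$ follows from the universal presentation of $\PPP_K^\circ$ by generators and the Pauli relations.

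Finally I would verify the product decomposition. One inclusion is immediate: $\SSS, \langle i \rangle$, and each $x_j, z_j$ all lie in $\GGG$. For the reverse, take any $g \in \GGG$. Its image $\ol{g}$ in $\GGG / Z(\GGG)$ can be expanded in the symplectic basis as a product $\prod_j \ol{x_j}^{a_j} \ol{z_j}^{b_j}$, so $g = w \cdot c$ for some word $w$ in the $x_j, z_j$ and some $c \in Z(\GGG) = \langle i \rangle \SSS$; hence $g \in \SSS \langle i \rangle \langle x_1, z_1, \ldots, x_K, z_K \rangle$, as required.

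The main obstacle is the phase-cleanup step: one needs to be sure both that the lifts of a symplectic basis can be chosen to be Hermitian simultaneously (which is fine because rescaling one lift by $i$ doesn't affect the commutation relations with the others) and that the resulting group really is isomorphic to $\PPP_K^\circ$ rather than $\PPP_K$; this is where keeping $\langle i \rangle$ as a separate factor in the statement is essential.
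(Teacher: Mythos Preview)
The paper does not actually prove this lemma; it is imported verbatim from \cite{PoulinStabilizerFormalismSubsystemCodes} and used as a black box in the proof of \Cref{thm:subsystem_logical_qubits_are_ISG_qubits}. So there is no ``paper's own proof'' to compare against.

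That said, your argument is the standard one and is correct. The only slip is notational: when you write ``the commutator map $[\ol{g}, \ol{h}] \in \{\pm 1\}/(\pm 1) \cong \ZZ_2$'', the quotient $\{\pm 1\}/(\pm 1)$ is trivial, which is not what you mean. The point is that the commutator $[g,h] = ghg^{-1}h^{-1} \in \{\pm 1\}$, computed in $\GGG$, is well-defined on cosets of $Z(\GGG)$ (since central elements commute with everything), giving a map $\GGG/Z(\GGG) \times \GGG/Z(\GGG) \to \{\pm 1\} \cong \ZZ_2$ that is non-degenerate by construction. One other step worth spelling out is why the surjection $\PPP_K^\circ \to \langle x_1, z_1, \ldots, x_K, z_K \rangle$ is injective: since $\PPP_K^\circ$ is a 2-group, any non-trivial kernel would meet its center $\{\pm I\}$, but $-I$ maps to $-I \neq I$ in $\PPP_n$ because $x_1$ and $z_1$ anti-commute. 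With those two clarifications your proof is complete.
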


The group $\langle x_1, z_1, \ldots, x_K, z_K \rangle \cong \PPP_K^\circ$ above corresponds exactly to
the $K$ gauge qubits of the subsystem code.
Next we define bare and logical operators.

\begin{definition}\label{defn:bare_dressed_logicals}
    Given a subsystem code $\GGG$,
    a logical operator (a member of a coset of $\sublpg{\GGG}$)
    is \textbf{bare} if it commutes with all gauge qubit logical operators $x_1, z_1, \ldots, x_K, z_K$,
    and \textbf{dressed} if not.
\end{definition}

Every logical operator coset in $\sublpg{\GGG}$ contains at least one bare logical operator.
We now relate these bare logicals to the \textit{centralizer}
$C(\GGG) = \{p \in \PPP_n \mid \forall g \in \GGG, pg = gp \}$ of $\GGG$ in $\PPP_n$ -
i.e.\ all the Pauli operators that commute with all elements of $\GGG$.

\begin{lemma}\label{lem:bare_logicals}
    Given a subsystem code $\GGG$,
    the set of all non-trivial bare logical operators is exactly $C(\GGG) - \GGG$.
    \begin{proof}
        A logical operator is a member of a coset of $\sublpg{\GGG}$,
        so - ignoring the group structure and just thinking in terms of sets - the non-trivial logical operators are exactly the set $N(\SSS) - \GGG$.
        For stabilizer groups like $\SSS$, the normalizer and centralizer coincide: $N(\SSS) = C(\SSS)$~\cite[Section 3.2]{GottesmanQecLectureNotes}.
        So a logical operator commutes with all elements of $\SSS$.
        A \textit{bare} logical operator by definition also commutes with all gauge qubit logical operators
        $x_1, z_1, \ldots, x_K, z_K$.
        From \Cref{lem:gauge_group_structure} we know $\GGG = \SSS \langle i \rangle \langle x_1, z_1, \ldots, x_K, z_K \rangle$,
        and since every Pauli operator commutes with $i$,
        we deduce that a bare logical operator commutes with all of $\GGG$.
        So all non-trivial bare logical operators are in $C(\GGG) - \GGG$.
        The converse can also be seen to hold;
        every element of $C(\GGG) - \GGG$ is a non-trivial bare logical operator.
    \end{proof}
\end{lemma}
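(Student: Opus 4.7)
My plan is to prove the two set inclusions $\{\text{non-trivial bare logicals}\} \subseteq C(\GGG) - \GGG$ and $C(\GGG) - \GGG \subseteq \{\text{non-trivial bare logicals}\}$ by unpacking the definitions and leaning on \Cref{lem:gauge_group_structure}. First, I would set up what the non-trivial logical operators are, as raw elements of $\PPP_n$ (rather than cosets). A coset $p\GGG \in \sublpg{\GGG}$ is non-trivial precisely when $p \notin \GGG$, so the set of non-trivial logical operators (as elements) is $N(\SSS) - \GGG$. I would then invoke the standard fact that for any stabilizer group, $N(\SSS) = C(\SSS)$ (which follows immediately from \Cref{lem:paulis_commute_or_anticommute} and \Cref{cor:stabilizer_normalizer_is_centralizer}); hence the non-trivial logicals are exactly $C(\SSS) - \GGG$.

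For the forward inclusion, take $p$ a non-trivial bare logical. By definition, $p \in C(\SSS) - \GGG$ and, in addition, $p$ commutes with each $x_j, z_j$. Then using \Cref{lem:gauge_group_structure} to write $\GGG = \SSS \langle i \rangle \langle x_1, z_1, \ldots, x_K, z_K \rangle$, and noting that $p$ automatically commutes with the central phase $i$, I conclude $p$ commutes with every generator of $\GGG$, hence $p \in C(\GGG)$. Since $p \notin \GGG$, we have $p \in C(\GGG) - \GGG$, as required.

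For the reverse inclusion, take $p \in C(\GGG) - \GGG$. Since $\SSS \leq \GGG$, clearly $p \in C(\SSS) = N(\SSS)$, and since $p \notin \GGG \supseteq \SSS$, this means $p$ is a non-trivial logical operator. Moreover, each $x_j, z_j \in \GGG$, so $p$ commutes with all of them by assumption, making it a bare logical by \Cref{defn:bare_dressed_logicals}.

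The argument is almost entirely definitional once \Cref{lem:gauge_group_structure} is in hand; the only mildly subtle point is keeping straight the distinction between logical operators \emph{as cosets} (elements of $\sublpg{\GGG}$) and \emph{as Pauli elements} (representatives in $N(\SSS)$), since the statement of the lemma implicitly uses the latter view. I expect no real obstacle beyond being careful that ``non-trivial'' means $p \notin \GGG$ (not merely $p \notin \SSS$), which is exactly what matches $C(\GGG) - \GGG$ rather than $C(\GGG) - \SSS$.
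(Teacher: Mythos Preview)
Your proposal is correct and follows essentially the same approach as the paper: unpack the definitions, identify the non-trivial logicals as $N(\SSS) - \GGG = C(\SSS) - \GGG$, and use the decomposition $\GGG = \SSS \langle i \rangle \langle x_1, z_1, \ldots, x_K, z_K \rangle$ from \Cref{lem:gauge_group_structure} to pass between commuting with $\SSS$ and the gauge generators on one hand and commuting with all of $\GGG$ on the other. If anything, your treatment of the reverse inclusion is more explicit than the paper's, which simply asserts that ``the converse can also be seen to hold.''
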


We now tie this all together.

\begin{proof}[Proof of \Cref{thm:subsystem_logical_qubits_are_ISG_qubits}]
    A non-trivial subsystem code logical operator is an element of $C(\GGG) - \GGG$, by \Cref{lem:bare_logicals}.
    Since $\SSS_t \leq \GGG$ for all $t$, we have $C(\GGG) \leq C(\SSS_t)$,
    and hence $C(\GGG) - \GGG \leq C(\SSS_t) - \SSS_t$.
    If we again use the fact that centralizers and normalizers coincide for stabilizer groups like $\SSS_t$,
    we see that any non-trivial bare logical operator is
    a member of $N(\SSS_t) - \SSS_t$, and is hence a non-trivial logical operator of the ISG code at all timesteps.

    It remains to show that inequivalent logical operators of the subsystem code correspond to
    inequivalent logical operators of any associated ISG code.
    Let $p,q \in N(\SSS) - \GGG$ be inequivalent logicals of the subsystem code -
    i.e.\ $p\GGG \neq q\GGG$ in $\sublpg{\GGG}$.
    Since $\isg{t}$ is a subgroup of $\GGG$ for all $t \in \ZZ$,
    we can directly infer that $p\isg{t} \neq q\isg{t}$.
    So we're done; we've shown that choosing \textit{bare} logical representatives $x_1, z_1, \ldots, x_k, z_k$
    of $\langle x_1 \GGG, z_1 \GGG, \ldots, x_k \GGG, z_k \GGG \rangle = \sublpg{\GGG}$
    gives us a presentation
    $\langle i \isg{t} x_1 \isg{t}, z_1 \isg{t}, \ldots, x_k \isg{t}, z_k \isg{t} \rangle \cong \PPP_k$
    of a subgroup of $\lpg{t}$
    for all $t \in \ZZ$.
\end{proof}

As a corollary, we can now bound the distance of any associated ISG code.
For any subsystem code $\GGG$,
let $d_\GGG$ be its \textit{bare distance}; that is,
the minimum weight of any bare logical operator of the code.
Then the distance $d$ of any associated ISG code must be at most $d_\GGG$.
Similarly, letting $k_\GGG$ be the number of logical qubits in the subsystem code,
the number $k$ of logical qubits of any associated ISG code must be at least $k_\GGG$.
For this latter bound, it is natural to ask whether we can always find an associated ISG code which saturates it -
in other words,
can we always find an associated ISG code whose logical qubits are effectively exactly those of the subsystem code?

\begin{openquestion}\label{openq:subsystem_code_as_ISG_code_open_question}
    For every subsystem code $\GGG$,
    does there always exist an associated ISG code $\MMM_\GGG$ that establishes at time $T$,
    and fixed Paulis $x_1, z_1, \ldots, x_k, z_k$, such that
    $\langle x_1 \GGG, z_1 \GGG, \ldots, x_k \GGG, z_k \GGG \rangle$
    is a presentation for $\sublpg{\GGG} \cong \PPP_k^\circ$,
    and, for all $t$,
    $\langle i \isg{t}, x_1 \isg{t}, z_1 \isg{t}, \ldots, x_k \isg{t}, z_k \isg{t} \rangle$
    is a presentation for $\lpg{t} \cong \PPP_k$?
\end{openquestion}

For many subsystem codes, we can find such an associated ISG.
In the small Bacon-Shor code of \Cref{sec:isg_code_example}, for example,
we analysed the associated ISG code
$[\langle X_1 X_2, X_3 X_4 \rangle, \langle Z_1 Z_3, Z_2 Z_4 \rangle]$,
and found that after establishment it had logical Pauli group
$\lpg{t} = \langle i \isg{t}, X_1 X_3 \isg{t}, Z_1 Z_2 \isg{t} \rangle$ -
see \Cref{eq:eg:isg_code_ops_full}.
The same representatives generate
$\sublpg{\GGG} = \langle X_1 X_3 \GGG, Z_1 Z_2 \GGG \rangle$ too.
For some other subsystem codes, however,
we have not managed to find such an associated ISG code -
examples include Bombín's \textit{topological subsystem code} introduced in \Ccite[Section IV]{BombinRubyCodes}.
We plan to investigate this subsystem code through the lens of ISG codes in future work.

\subsection{Gauge fixing}\label{subsec:gauge_fixing}

We close this section with a word on what \textit{gauge fixing} means from an ISG code perspective.
For any subsystem code $\GGG \leq \PPP_n$ implemented by
measuring generators of some $\GGG'$ satisfying $\addPhases{\GGG'} = \GGG$,
gauge fixing is the name given to tracking what happens to the group stabilizing the system of $n$ qubits
as we measure these generators.
The name comes from the following:
as shown in \Cref{lem:gauge_group_structure},
$\GGG$ can be written as $\SSS \langle i \rangle \langle x_1, z_1, \ldots, x_K, z_K \rangle$,
where the subgroup $\langle x_1, z_1, \ldots, x_K, z_K \rangle \leq \GGG$
isomorphic to $\PPP_k^\circ$ is the group of logical operators for the gauge qubits.
When we measure one of these logical operators,
we are then said to be fixing the overall system into an eigenstate of one of these gauge qubits.
Hence: \textit{gauge fixing}.
If we instead view these measurements of generators of $\GGG'$ as
implementing an associated ISG code for $\GGG$,
gauge fixing is just another way of saying `tracking the ISG $\isg{t}$ over time'.
Note that any associated ISG code will potentially only gauge fix a subset of the gauge qubits.
This would then lead to a higher number of logical qubits in the ISG code compared to the subsystem code,
a distinctive feature of Floquet codes~\cite{DynamicallyGeneratedLogicalQubits,FloquetCodesWithoutParents,TylerFloquetifiedColourCode}.

    \section{Double hexagon code}\label{sec:double_hexagon_code}

We give here a little more detail on the double hexagon code of \Cref{sec:floquetifying_422},
showing that it still has distance two, like the $\qcode{4, 2, 2}$ code that it Floquetifies.
In \Cref{fig:double_hexagon_x_1,fig:double_hexagon_z_1},
we show operating regions corresponding to logical operators $\ol{x_1}$ and $\ol{z_1}$ respectively.
Recall (\Cref{subsubsec:stabilizers_and_logicals}) that given an operating region,
the corresponding logical operator is found by looking at which output legs are highlighted with which colours.
For example, in the rightmost diagram of \Cref{fig:double_hexagon_x_1},
we show seven timesteps $t \in \{0, 1, \ldots, 6\}$ of the period-six double hexagon code.
So looking at the output wires of this diagram tells us a representative of the logical $\ol{x_1}$ at timestep $t=6$.
Namely, re-using the notation that assigns label $(i, j)$ to the qubit with the $i$-th colour and $j$-th style,
where $i$ and $j$ are taken modulo 6 and 2 respectively, and the ordered lists of colours and styles are
$[
\mathbf{purple}, ~
\mathbf{pink}, ~
\mathbf{orange}, ~
\mathbf{yellow}, ~
\mathbf{brown}, ~
\mathbf{blue}]$
and
$[
\mathbf{solid}, ~
\mathbf{dashed}]$,
we see the corresponding logical operator at this timestep is
$X_{(0, 0)} X_{(0, 1)} X_{(5, 0)} X_{(5, 1)}$.

To see the corresponding logical operator at a different timestep $t$,
we can just truncate the diagram after timestep $t$ and again look at the highlighted output edges.
This is shown for timesteps $t \in \{0, 1, 2\}$ in \Cref{fig:double_hexagon_x_1_truncations} below;
the corresponding operators are
$X_{(0, 0)} X_{(0, 1)} X_{(5, 0)} X_{(5, 1)}$ after timesteps $t=0$ and $t=1$, and
$X_{(2, 0)} X_{(2, 1)} X_{(1, 0)} X_{(1, 1)}$ after timestep $t=2$.
In fact, truncating the rightmost diagram of \Cref{fig:double_hexagon_x_1} after \textit{any} timestep $t$
gives an operating region whose corresponding $\ol{x_1}$ representative has weight four.
The same is true for all possible truncations of the $\ol{z_1}$ operating region in \Cref{fig:double_hexagon_z_1},
and would also be true if we repeated this process for the other logical operator cosets $\ol{x_2}$ and $\ol{z_2}$.
That is, truncating after timestep $t$ would always give us an operating region whose corresponding operator had weight four.
It would be tempting to conclude, therefore, that the double hexagon code has distance four,
and that Floquetifying the $\qcode{4, 2, 2}$ code has increased its distance.
But this would be wrong.

The reason is that while this process gives us logical operators in the Floquetification,
it doesn't necessarily give us minimum-weight ones,
even though we started with minimum-weight operating regions of the $\qcode{4, 2, 2}$ code in
the leftmost diagrams of \Cref{fig:double_hexagon_x_1,fig:double_hexagon_z_1}.
Indeed, one can find stabilizers at a given timestep $t$ with which to multiply these logical operators,
in order to reduce their weight to two.
One can do this algebraically,
making use of the fact that we already worked out the ISG $\SSS_t$ of the double hexagon code for all $t$
(see \Cref{eq:floquetified_4_2_2_s_t} and the preceding sentence).
Alternatively, one can do this graphically,
using the fact that Pauli webs on the same diagram form a group
and can thus be multiplied together~\cite[Appendix A.3]{MattTimeDynamics}.
Graphically, multiplying together two (unsigned CSS) Pauli webs just means
laying one diagram on top of the other,
up to the rule that if a wire is highlighted twice by the same colour,
these cancel out and the wire becomes unhighlighted\footnote{%
    If laying one diagram on top of another results in a wire being highlighted both red and green,
    then the resulting Pauli web is no longer CSS, in the language of this paper.
    It is, however, a valid Pauli web according to the more general definition used in
    \Ccite{ZxFaultTolerancePsiQuantum,MattTimeDynamics}.}.
In \Cref{fig:double_hexagon_x_1_truncations_multiplied} below,
we thus multiply the weight-four operating regions by appropriate stabilizing regions
to get weight-two operating regions.

Since this code is fairly small, one can use these ideas to check by hand that for all $t$,
no logical operator at time $t$ has weight one.
Hence, since representatives of weight two exist for the logical operator coset $\ol{x_1}$,
the code has distance two, as promised.


\begin{figure}[p]
    \centering
    \includegraphics[width=450pt]{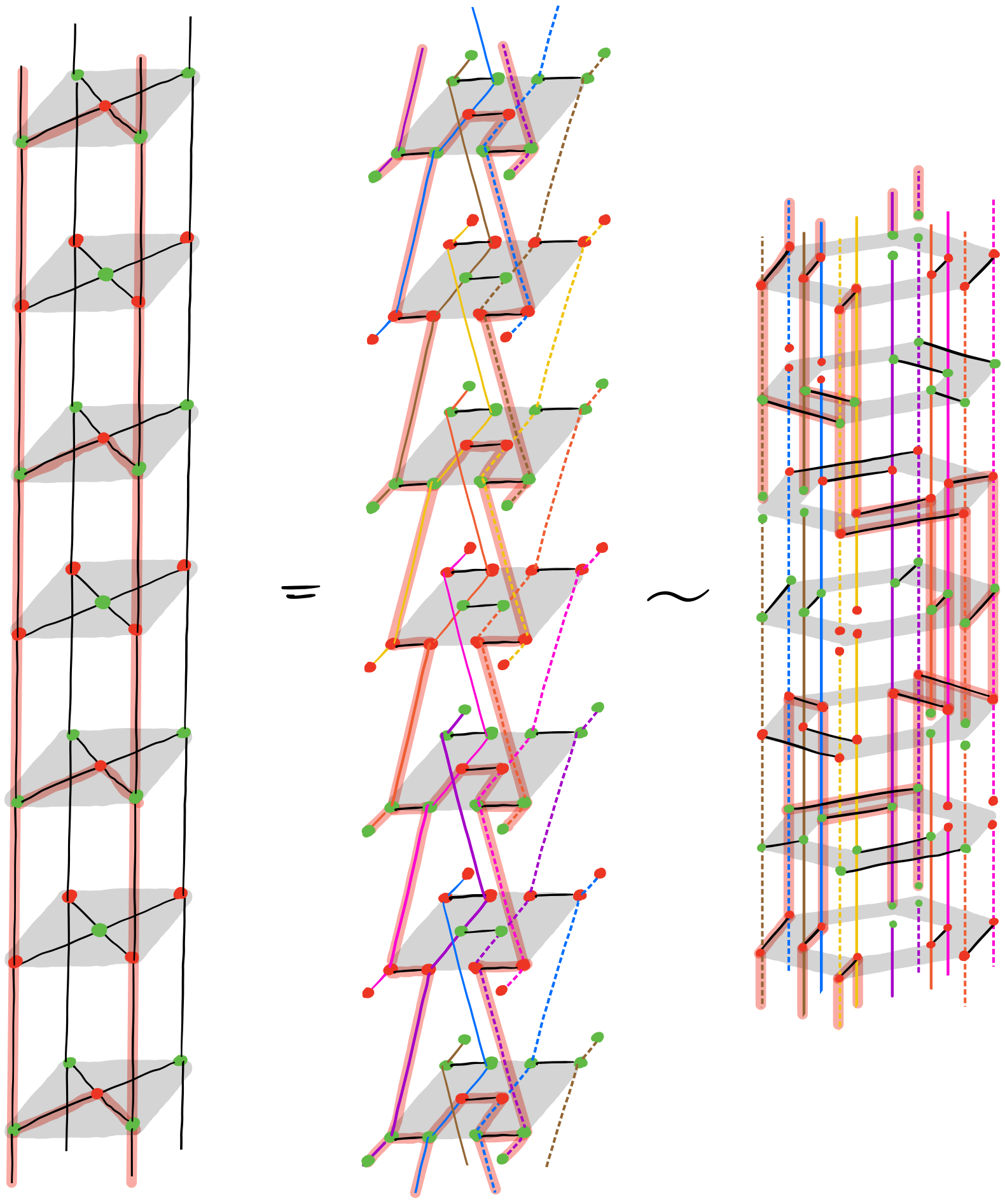}
    \caption{%
        An operating region corresponding to a representative of the logical $\ol{x_1}$ in
        the $\qcode{4, 2, 2}$ code (left and middle)
        and in its Floquetification, the double hexagon code (right).}
    \label{fig:double_hexagon_x_1}
\end{figure}

\begin{figure}[p]
    \centering
    \includegraphics[width=450pt]{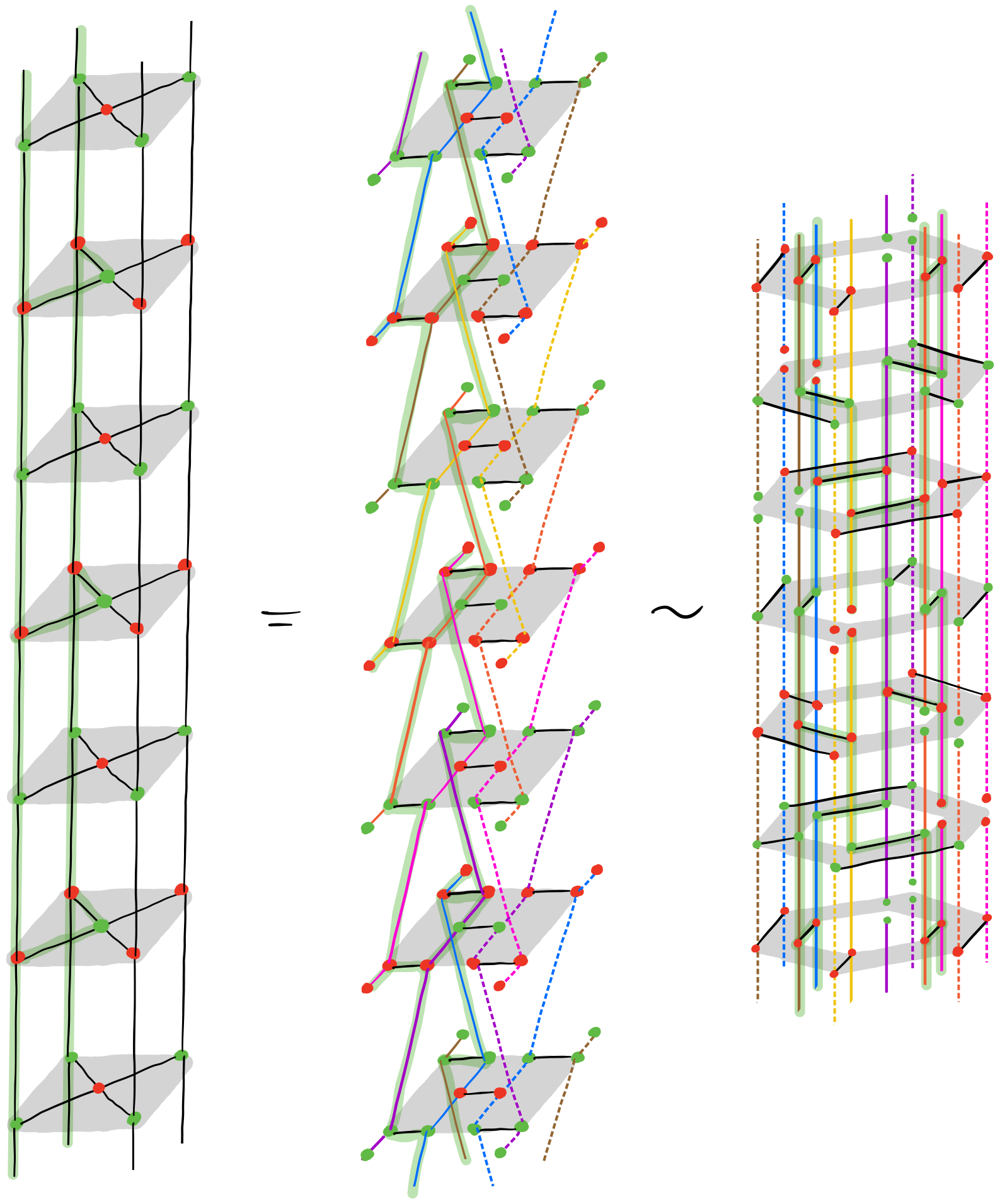}
    \caption{%
        An operating region corresponding to a representative of the logical $\ol{z_1}$ in
        the $\qcode{4, 2, 2}$ code (left and middle)
        and in its Floquetification, the double hexagon code (right).}
    \label{fig:double_hexagon_z_1}
\end{figure}

\begin{figure}[t]
    \centering
    \includegraphics[width=450pt, valign=c]{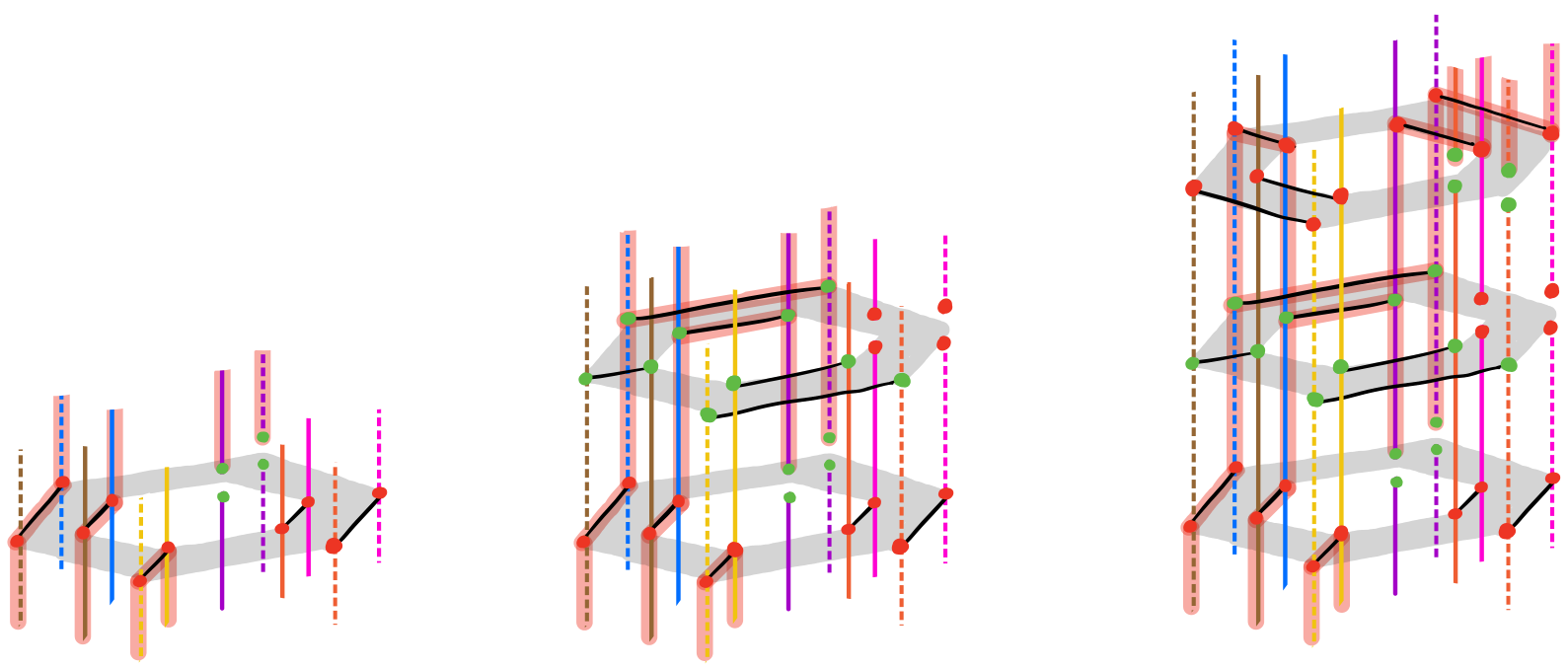}
    \caption{%
        The double hexagon code after timesteps $t=0$, $t=1$ and $t=2$ respectively,
        annotated with an operating region corresponding to a representative of the logical $\ol{x_1}$.}
    \label{fig:double_hexagon_x_1_truncations}
\end{figure}

\begin{figure}[b]
    \centering
    \includegraphics[width=430pt]{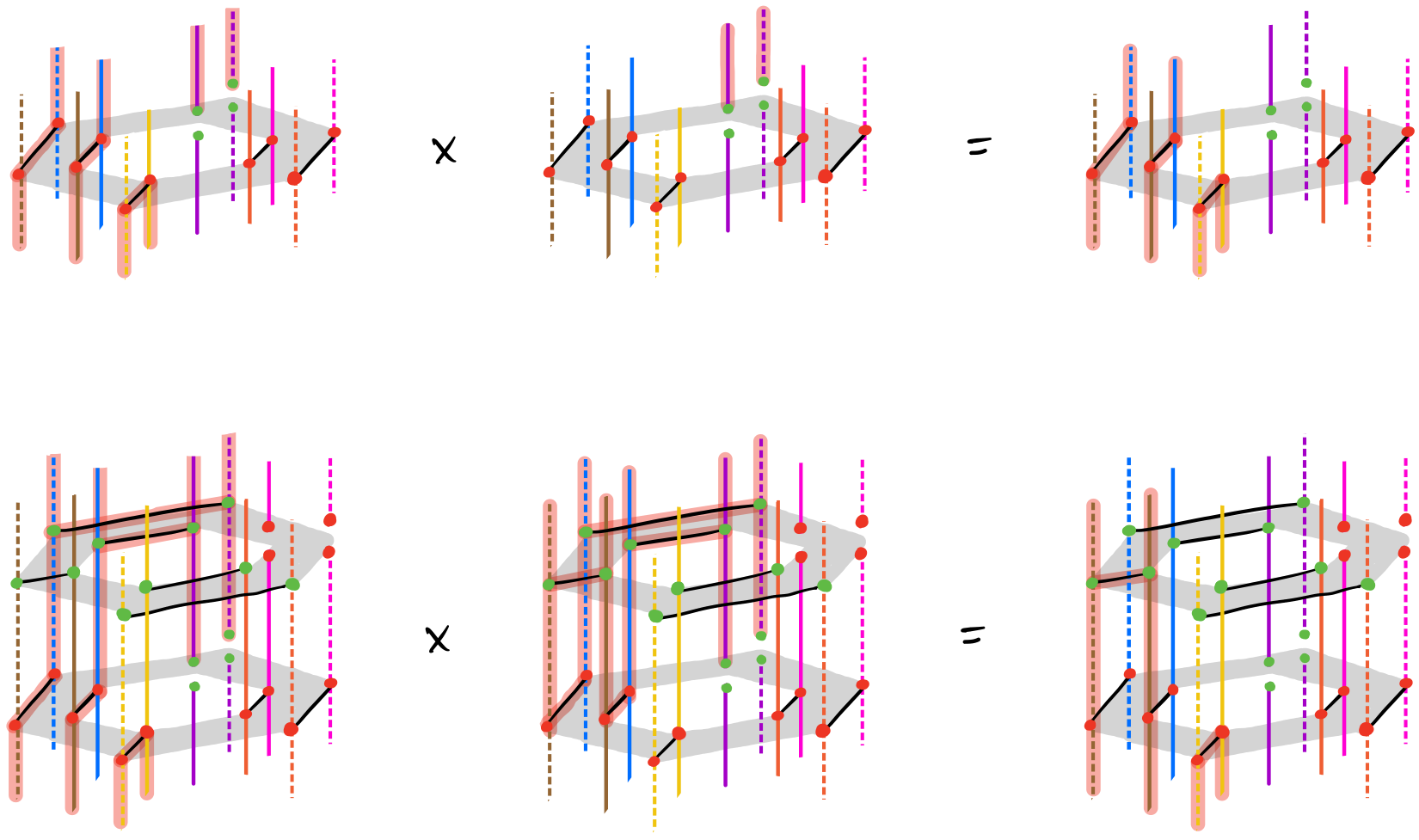}
    \caption{%
        For timesteps $t=0$ and $t=1$ respectively,
        we multiply the weight-four operating regions of \Cref{fig:double_hexagon_x_1_truncations}
        by appropriate stabilizing regions, which gives us weight-two operating regions.
        This corresponds exactly to composition of the corresponding Pauli operators -
        for example, the topmost graphical equation corresponds to the algebraic equation
        $X_{(0, 0)} X_{(0, 1)} X_{(5, 0)} X_{(5, 1)} \after X_{(0, 0)} X_{(0, 1)} = X_{(5, 0)} X_{(5, 1)}$.}
    \label{fig:double_hexagon_x_1_truncations_multiplied}
\end{figure}

    \clearpage
    \section{Floquetifying the colour code: beyond the bulk}\label{sec:floquetifying_colour_code_beyond_bulk}

In this section, we expand on what we mean by a code exhibiting \textit{drift},
and sketch some ideas as to how it might be avoided.
Then we describe what goes wrong when attempting to Floquetify a non-planar code,
before turning our attention back to our specific example of the colour code.

\subsection{Drift}\label{subsec:drift}

In a drifting Floquet code, at any timestep a new qubit might need to be initialised and entangled into the code,
or an old qubit might be measured out and not used again.
Were such a code to be implemented on hardware that required qubits to be in fixed positions
(e.g.\ on a superconducting chip),
the code might appear to slowly drift across the hardware in some direction.
For example, if the qubits of the code roughly form a square,
and qubits on the top boundary of this square are being regularly measured out,
while on the bottom boundary new qubits are regularly being entangled in,
then the code as a whole will seem to drift downwards.
Viewing our Floquetification procedure as tilting the time direction in a ZX-diagram
shows that drift is a natural consequence - see the left hand side of \Cref{fig:drifting_spacetime_volume}.

\begin{figure}[b]
    \centering
    \includegraphics[width=430pt]{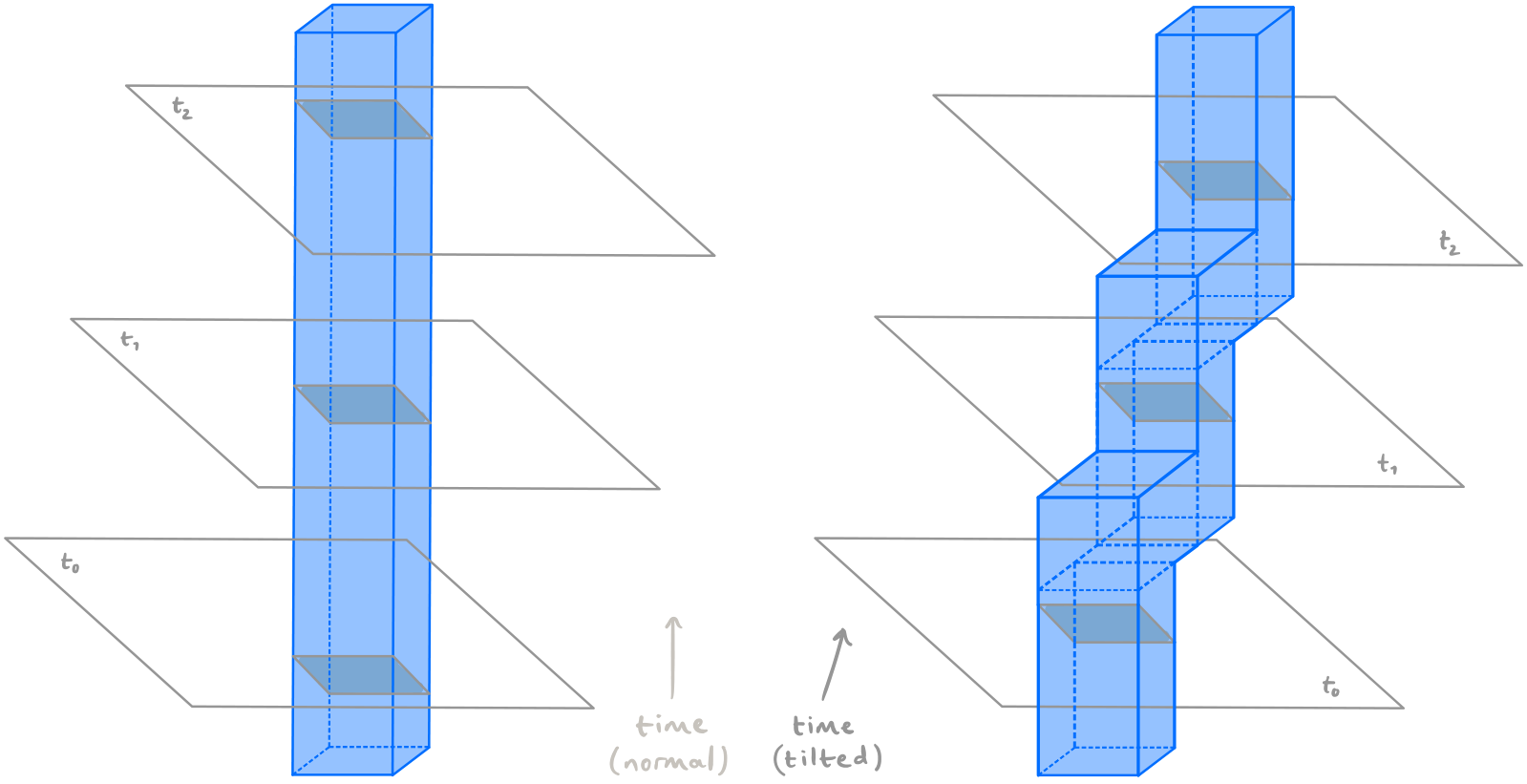}
    \caption{%
        Here blue prisms represent ZX-diagrams of codes,
        while grey planes are timeslices $t_0 < t_1 < t_2$ after the time direction has been tilted.
        The shaded grey squares show the intersection of the code's ZX-diagram with a particular timestep.
        On the left, we see that taking a stabilizer code
        and tilting the time direction to get an equivalent Floquet code naturally causes drift -
        notice how the shaded grey intersections change position within the timeslices $t_0$, $t_1$ and $t_2$.
        On the right, we show that one way to avoid drifting boundaries is to Floquetify a code that is itself moving;
        here the grey intersections are always in the same place within the timeslices $t_0$, $t_1$ and $t_2$.}
    \label{fig:drifting_spacetime_volume}
\end{figure}

There are many ways we can try to avoid drifting boundaries.
In fact, we saw one of them already,
when we Floquetified the $\qcode{4, 2, 2}$ code in \Cref{sec:floquetifying_422}.
Ordinarily, the resulting code would exhibit drift,
in exactly the fashion visualised on the left hand side of \Cref{fig:drifting_spacetime_volume}.
However, our Floquetification re-used qubits cyclically;
once the pair of qubits on the top boundary of the code were measured out,
they were immediately re-entangled into the bottom boundary of the code at the next timestep -
see the rightmost diagram of \Cref{fig:4_2_2_rewritten}.
In this sense, we made our code drift around in a circle, which gives it its double hexagon shape.
While this works nicely for the very small $\qcode{4, 2, 2}$ code,
it's not a good general solution from a practical perspective,
since for hardware with qubits in fixed positions,
an unusual connectivity would be required in order to implement a large code that drifts circularly.
(For other types of hardware, where qubits can be moved around and re-used,
this might be less of a problem).

Another potential solution is to Floquetify a code that is already moving,
so as to cancel out the drift that results from Floquetifying it.
For example, the rotated surface code can be grown outwards from one boundary (the top one, say),
then contracted from the opposite one (the bottom),
such that it appears to move in a particular direction (upwards) -
see, for example, \Ccite[Figure 40(c)]{GameOfSurfaceCodes}.
The same idea can be used to move other stabilizer and subsystem codes.
If such a code can be moved in an appropriate direction and at an appropriate speed,
the resulting Floquetification will remain put;
this is sketched abstractly on the right hand side of \Cref{fig:drifting_spacetime_volume}.

Other systematic ways one might hope to obtain non-drifting boundaries involve changing the measurement schedule.
Suppose a drifting Floquetification has schedule
$\MMM = [\MMM_0, \MMM_1, \ldots, \MMM_{\ell-1}]$.
Consider a new Floquet code with schedule
$\MMM = [\MMM_0, \MMM_1, \ldots, \MMM_{\ell-1}, \MMM_{\ell-1}, \MMM_{\ell-2}, \ldots, \MMM_0]$.
One could hope that this new code drifts in one direction in timesteps $0 \leq t < \ell$,
but then drifts in exactly the opposite direction in timesteps $\ell \leq t < 2\ell$,
and so overall stays within a bounded region,
while still having the required error correcting properties.

\subsection{Periodic boundaries}\label{subsec:periodic_boundaries}

Recall once again that we can interpret our Floquetification procedure as tilting the time direction in a ZX-diagram.
If we attempt to do this for a code with periodic boundary conditions,
this tilting can have the effect of making time run circularly.
The diagram below illustrates the problem:
a `thickened' torus $T^2 \times I$ is shown in blue
(faces of the cube marked with the same letters $A$ and $B$ are glued together),
and three planes through it are drawn in different shades of grey.
The thickened torus is an abstraction for a ZX-diagram showing the time evolution of a code that lives on a torus,
while the three grey planes are potential new timeslices after Floquetifying this code.
However, it's impossible to label these three planes with unique integers
such that a line perpendicular to them passes through them an increasing sequence:
\begin{equation}\label{eq:toric_spacetime_volume_timeslice}
    \begin{aligned}
        \includegraphics[width=100pt]{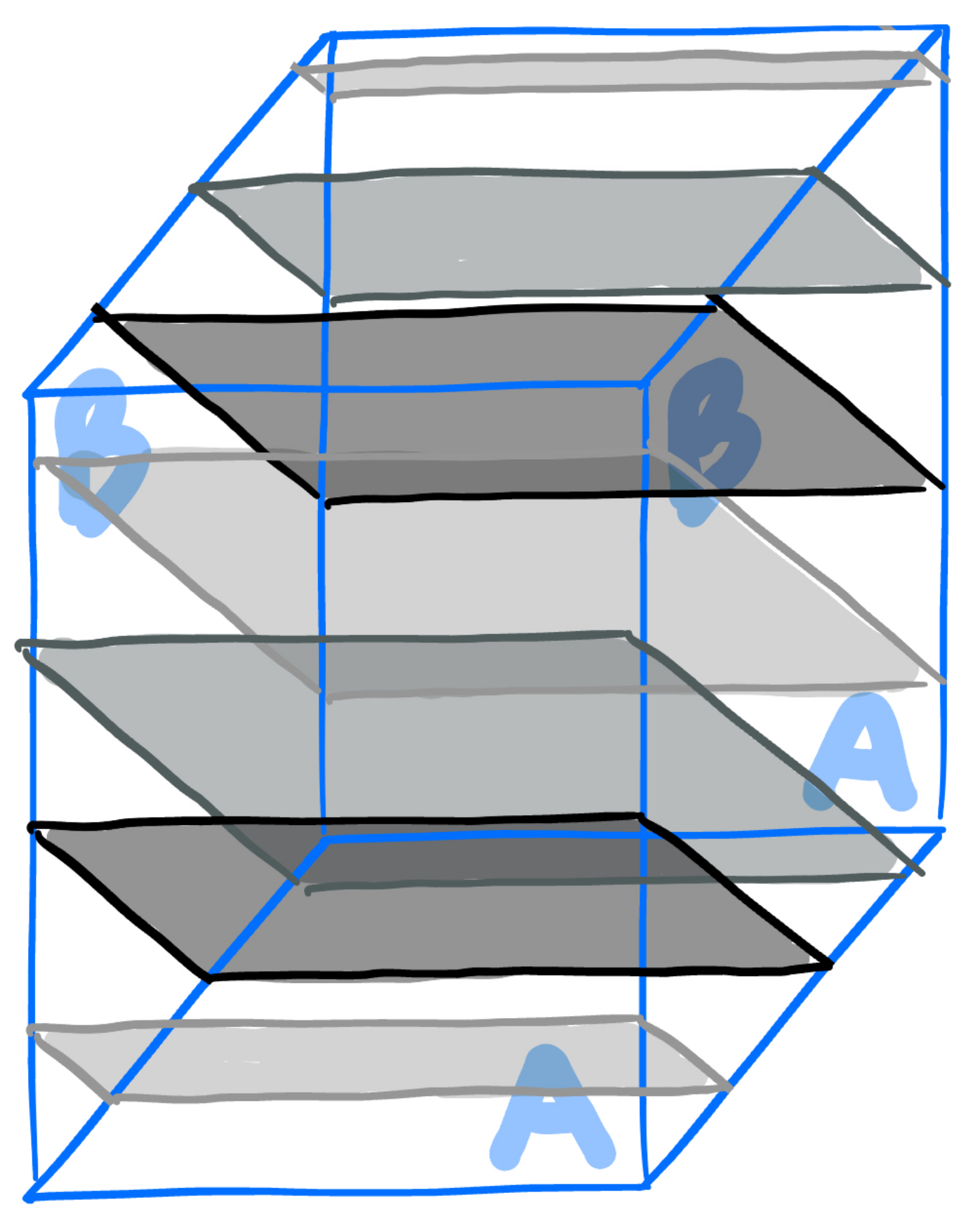}
    \end{aligned}
\end{equation}

A possible alternative strategy in the case that we're given a code that lives on a torus is the following:
one can first try to Floquetify a patch of the bulk,
by reinterpreting which wires in the code's ZX-diagram represent qubit world-lines.
After doing this for as large a portion of the bulk as needed,
one can then try to impose periodic boundary conditions on this Floquetified bulk patch,
and see whether the resulting ZX-diagram encodes the expected logical qubits.

\subsection{The colour code}\label{subsec:colour_code}

Having discussed drift and periodic boundaries in general,
we now discuss strategies for completing our Floquetification of the colour code bulk from the main text.
Option one is to Floquetify a planar colour code, which will give us a Floquet code with drifting boundaries.
We show this explicitly in
\Cref{fig:floquetified_planar_colour_code_zx_rewritten,fig:floquetified_planar_colour_code_drift,fig:floquetified_planar_colour_code_zx}
below, for a small colour code that lives on a parallelogram,
but the idea extends to a planar colour code of any size.

\begin{figure}[h!]
    \centering
    \includegraphics[width=450pt]{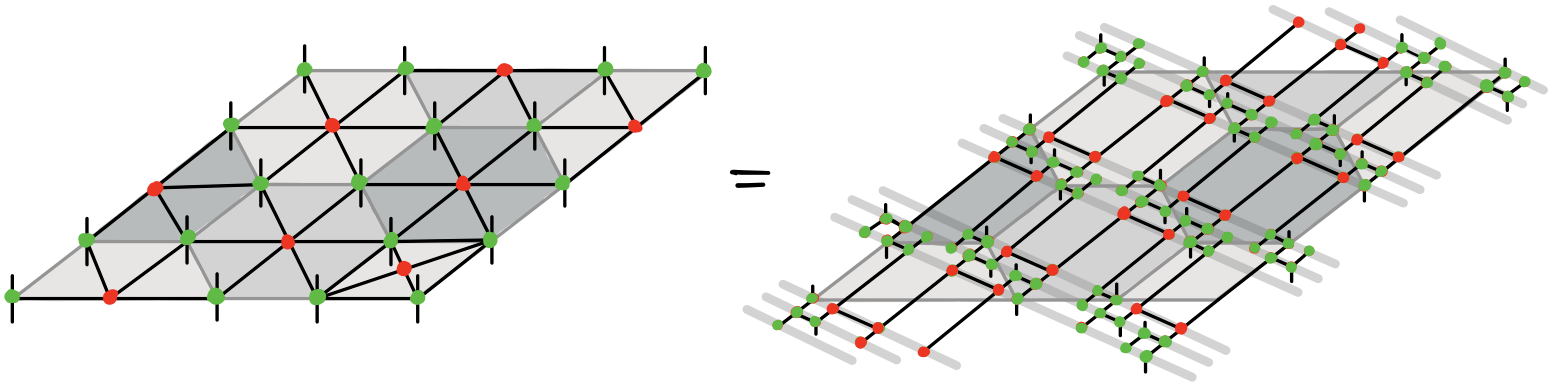}
    \caption{%
        Two equivalent ZX-diagrams for one round of a small planar colour code on a parallelogram,
        in which all $Z \otimes Z \otimes \ldots \otimes Z$ measurements are performed.
        As in \Cref{fig:colour_code_rewritten}, the grey highlighted bars have no meaning in the ZX-calculus,
        and instead denote timesteps of the corresponding Floquetified code.}
    \label{fig:floquetified_planar_colour_code_zx_rewritten}
\end{figure}

\begin{figure}[h!]
    \centering
    \hspace{50pt}
    \includegraphics[width=400pt]{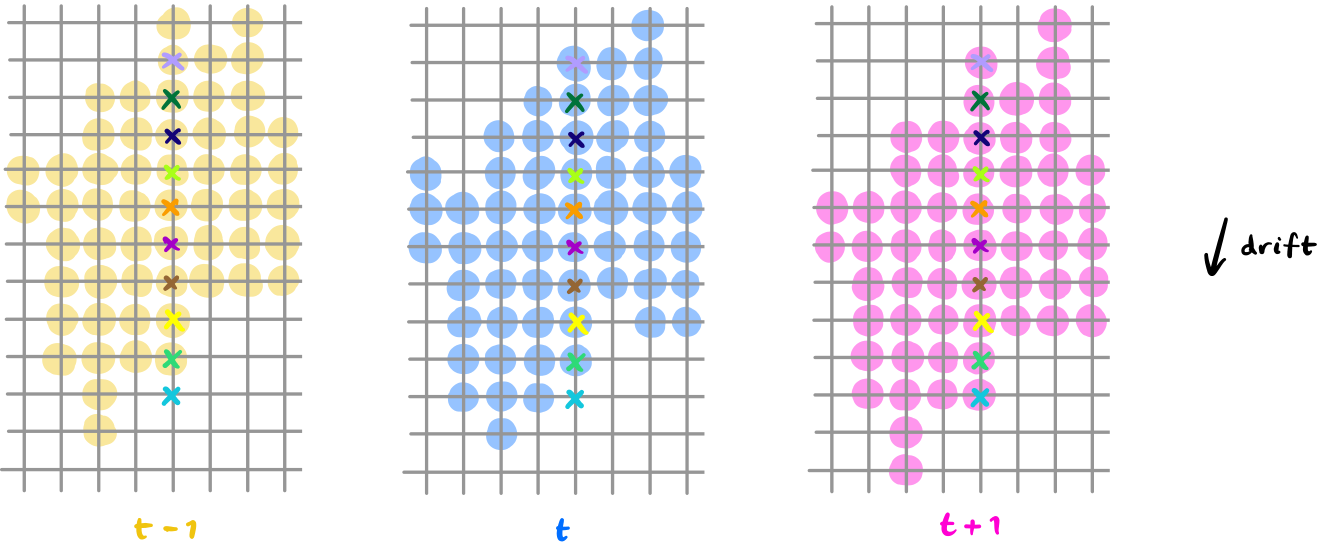}
    \caption{%
        The square lattice layout of the Floquetified planar colour code, with qubits on vertices,
        at consecutive timesteps $t-1$, $t$ and $t+1$.
        Vertices marked with a coloured cross correspond to
        qubit world-lines of the same colour in \Cref{fig:floquetified_planar_colour_code_zx}.
        Vertices highlighted yellow, blue or pink are \textit{active} at the respective timestep $t-1$, $t$ or $t+1$.
        Here \textit{active} means the qubit is initialised for the first time on or before this timestep,
        and is measured out for the last time on or after this timestep.
        If we were to plot the active qubits across all timesteps,
        we would see that the code drifts downwards and slightly to the left.}
    \label{fig:floquetified_planar_colour_code_drift}
\end{figure}

\begin{figure}
    \centering
    \includegraphics[width=387pt]{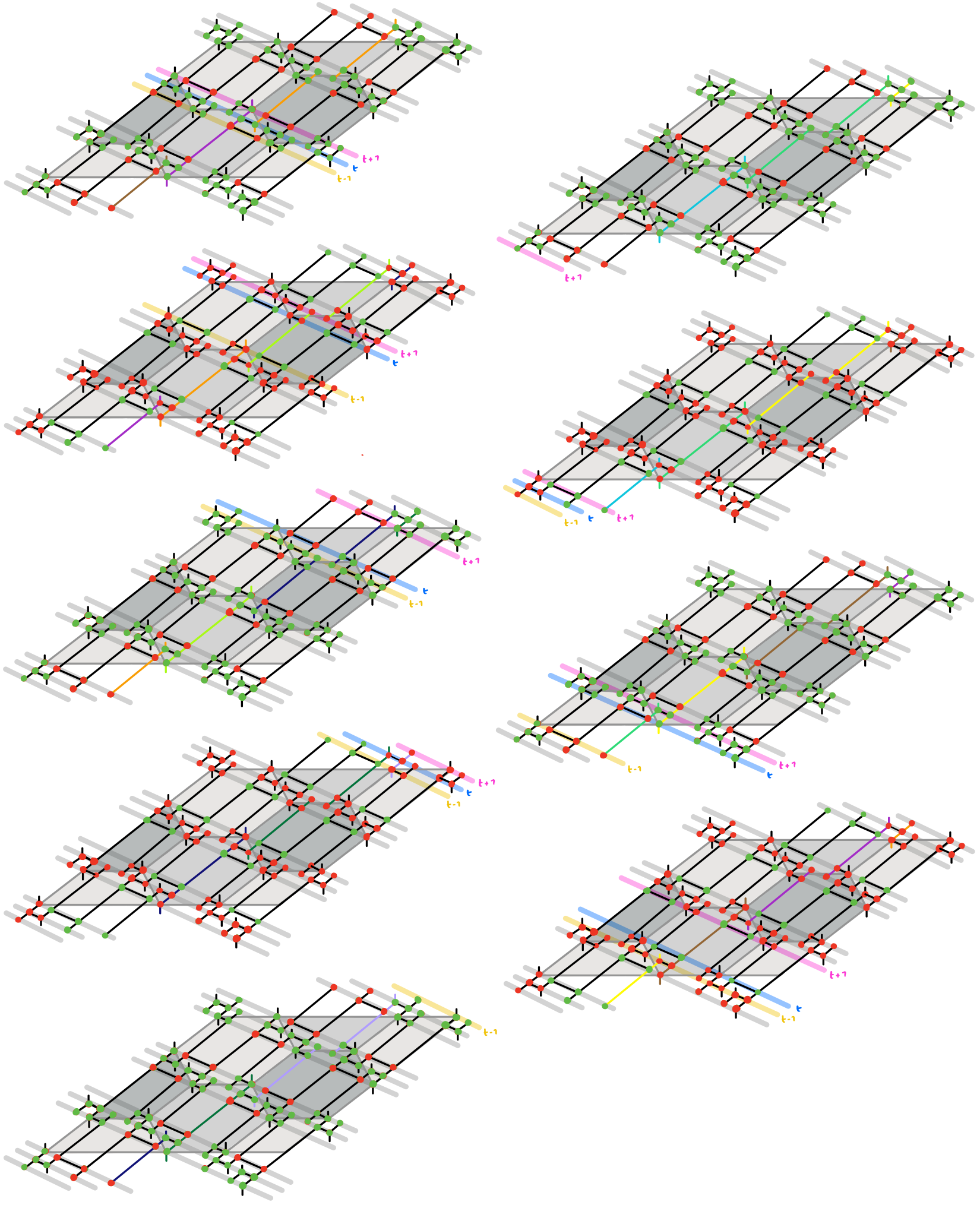}
    \caption{%
        ZX-diagram of nine rounds of the planar colour code.
        This figure should be read columnwise up the page; first the left column, then the right.
        Much like in \Cref{fig:colour_code_rewritten},
        vertical ZX-wires between consecutive layers have been omitted for clarity.
        Instead we state in words that, for a vertex $v$ of the underlying honeycomb lattice,
        the small vertical upwards wire closest to $v$ going upwards from one layer
        is connected to the small vertical wire closest to $v$ protruding downwards from the next layer.
        Certain wires have been coloured - these indicate the world-lines of different qubits.
        Specifically, these are the world-lines of qubits that in a single column
        on the square lattice of the Floquetified code,
        as denoted by coloured crosses in \Cref{fig:floquetified_planar_colour_code_drift}.
        Grey highlighted bars again denote timesteps of the Floquetified code;
        three of them have been coloured yellow, blue and pink respectively,
        denoting three consecutive timesteps $t-1$, $t$ and $t+1$.}
    \label{fig:floquetified_planar_colour_code_zx}
\end{figure}

The sequence of measurements that a particular qubit undergoes
will be almost the same as that described in \Cref{sec:floquetifying_colour_code},
with the exception that qubits near a boundary can be measured out earlier or initialised later than usual.
Additionally, as a consequence, a pairwise measurement that would ordinarily have been performed is omitted
if one of the two qubits it would have applied to has been measured out early or initialised late.
Detectors and logicals can again be found via Pauli webs,
by drawing them on the ZX-diagram for the planar colour code and seeing how they map to the Floquetification.
In the bulk, they will look as described in \Cref{sec:floquetifying_colour_code}.

If we wish to prevent the Floquetified code from drifting,
we could instead force our parallelogrammatic colour code to regularly
expand outwards from its top and right boundaries,
and contract inwards from its bottom and left boundaries,
such that this cancels out the drift that results from Floquetifying it.
The expansion is performed by preparing pairs of qubits in Bell states,
then entangling them into the code by measuring the stabilizers of a bigger colour code.
Similarly, contraction is done by measuring out pairs of qubits in the Bell basis.
We must be a little careful here to maintain the code's timelike distance;
when measuring the stabilizers of the new, bigger code during an expansion phase,
we must repeatedly measure these stabilizers for $d$ rounds, where $d$ is the distance of the original code.
Contraction, however, can be done instantaneously.
The distance by which we need to expand is determined by how fast the world-lines of qubits in the Floquetified code
move forwards and rightwards through the ZX-diagram for the underlying colour code,
as in \Cref{fig:floquetified_planar_colour_code_zx} or the right hand side of \Cref{fig:colour_code_rewritten}.
Let \textit{rows} and \textit{columns} in the honeycomb lattice be defined as
the pink and orange lines respectively in the following diagram:

\begin{equation}\label{eq:honeycomb_rows_columns}
    \includegraphics[width=200pt, valign=c]{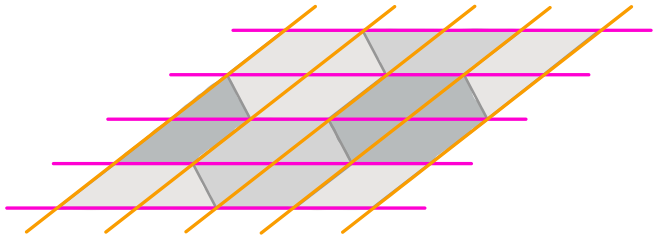}
\end{equation}

Suppose we now have a parallelogrammatic colour code with distance $d$.
The geometry of the parallelogram forces $d$ to be even.
Such a code occupies $\frac{3d}{2} - 1$ rows and $\frac{3d}{2} - 1$ columns in the honeycomb lattice.
After every $d$ rounds of stabilizer measurements,
the code needs to have moved $4d$ rows forwards and $d$ columns rightwards in the honeycomb lattice
in order to cancel out the drift the Floquetification process would bring.
It's most convenient to expand and contract the colour code by multiples of $3$ rows or columns,
so let's also assume that $d$ is an even multiple of $3$.
Thus, during an expansion phase, the code occupies $\frac{3d}{2} - 1 + 4d$ rows and $\frac{3d}{2} - 1 + d$ columns.
The distance of the code during such a phase is therefore
$\frac{11d}{3}$ in the forwards direction and $\frac{5d}{3}$ in the rightwards direction.
We sketch this in \Cref{fig:floquetified_planar_colour_code_cancelling_drift}.

The extra complication that this idea entails is that we now have also timelike boundaries to Floquetify.
These are points at which pairs of qubits either become Bell states and get entangled into the code,
or get measured out in the Bell basis.
We could just do the same thing in our Floquetified code, but while the pairs of qubits to be entangled/measured
are all neighbours in the honeycomb lattice, they are not all neighbours in the square lattice of the Floquetified code.
This is shown in \Cref{fig:floquetified_planar_colour_code_measuring_out}.
Fortunately, as shown in that \Cref{fig:floquetified_planar_colour_code_local_readout_circuit},
we can construct a circuit on groups of four qubits which achieves the same thing
but respects the nearest neighbour connectivity of the square lattice.

\begin{figure}
    \centering
    \includegraphics[width=250pt]{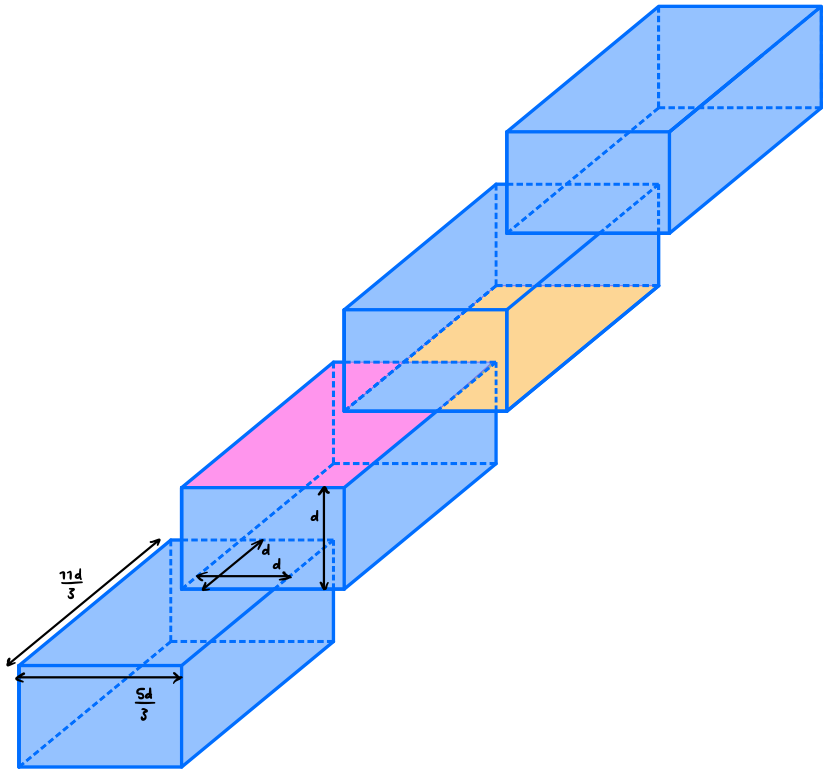}
    \caption{%
        An abstraction of a ZX-diagram of a parallelogrammatic colour code
        which expands and contracts in the fashion specified in the text.
        In this figure, time goes directly upwards.
        The distance of the original colour code is denoted $d$.
        Two timelike boundaries are shaded;
        at the pink one, pairs of qubits are measured in the Bell basis,
        while at the orange one, Bell pairs are entangled into the code.}
    \label{fig:floquetified_planar_colour_code_cancelling_drift}
\end{figure}

\begin{figure}
    \centering
    \begin{subfigure}{\textwidth}
        \centering
        \includegraphics[width=0.9\textwidth, valign=c]{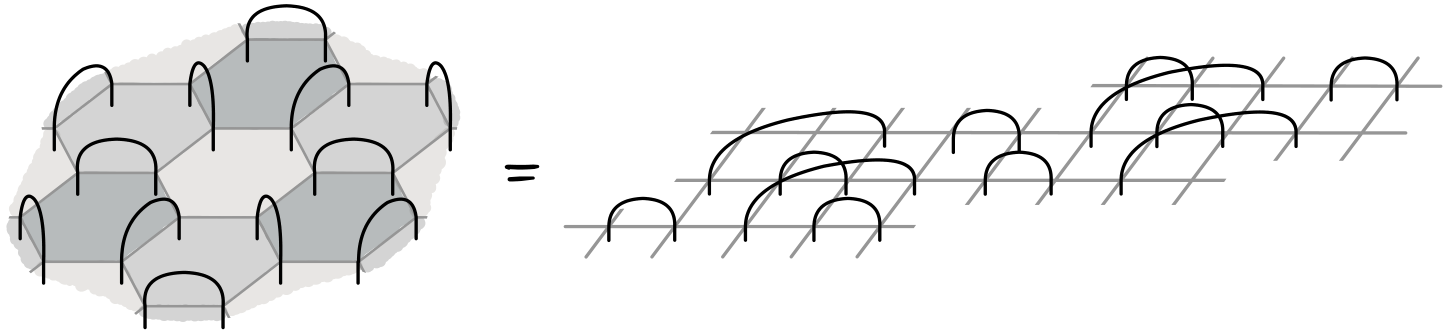}
        \caption{\ }
        \label{fig:floquetified_planar_colour_code_measuring_out}
    \end{subfigure}
    \begin{subfigure}{0.15\textwidth}
        \centering
        \includegraphics[width=30pt, valign=c]{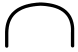}
        \caption{\ }
        \label{fig:floquetified_planar_colour_code_destructive_bell_measurement}
    \end{subfigure}
    \begin{subfigure}{0.15\textwidth}
        \centering
        \includegraphics[width=30pt, valign=c]{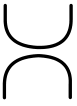}
        \caption{\ }
        \label{fig:floquetified_planar_colour_code_non_destructive_bell_measurement}
    \end{subfigure}
    \begin{subfigure}{0.6\textwidth}
        \centering
        \includegraphics[width=200pt, valign=c]{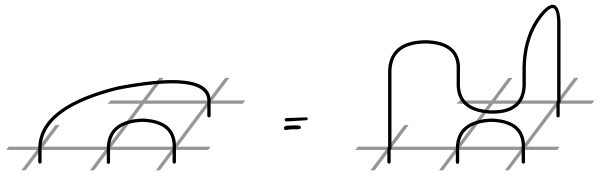}
        \caption{\ }
        \label{fig:floquetified_planar_colour_code_local_readout_circuit}
    \end{subfigure}
    \caption{
        \textbf{(a)}
        The Bell basis measurements on a timelike boundary of the colour code,
        such as the pink one in \Cref{fig:floquetified_planar_colour_code_cancelling_drift},
        and their image in the square lattice Floquetification.
        \textbf{(b)}
        A destructive Bell basis measurement (or rather, post-selection) in the ZX-calculus.
        \textbf{(c)}
        A non-destructive Bell basis measurement (or rather, post-selection) in the ZX-calculus.
        \textbf{(d)}
        Two equivalent ZX-diagrams for performing a pair of Bell basis measurements on four qubits.
        Whereas the diagram on the left involves a measurement between two non-neighbouring qubits on the square lattice,
        the diagram on the right respects this connectivity constraint,
        but costs one extra Bell basis measurement.
    }
    \label{fig:floquetified_planar_colour_code_timelike_boundaries}
\end{figure}

\begin{figure}[p]
    \centering
    \includegraphics[width=350pt]{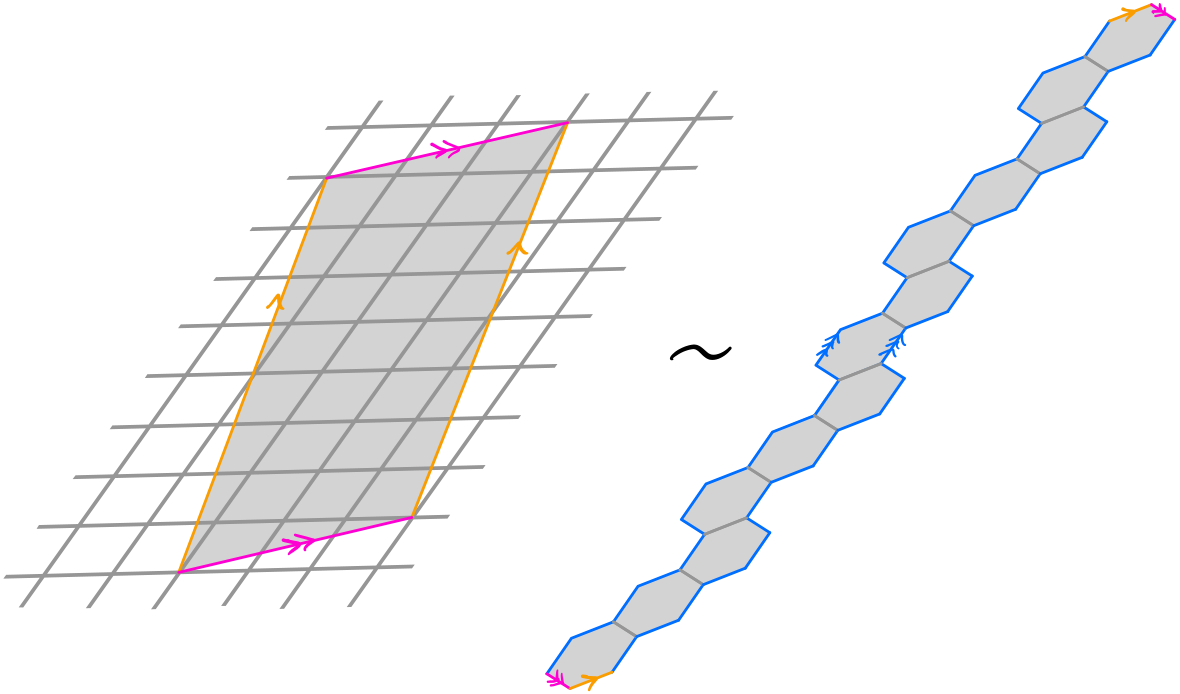}
    \caption{%
        Imposing periodic boundary conditions on a single rectangle of the Floquetified colour code bulk,
        as in \Cref{fig:floquetified_colour_code_measurement_schedule},
        defines a Floquet code with no logical qubits.
        As a patch of the Floquetified colour code bulk,
        this rectangle corresponds to the strip of colour code bulk on the right.
        Since this strip is not three-colourable,
        imposing toric boundary conditions on it prevents it from supporting a colour code.}
    \label{fig:floquetified_toric_colour_code_tile}
\end{figure}

\begin{figure}[p]
    \centering
    \includegraphics[width=\linewidth]{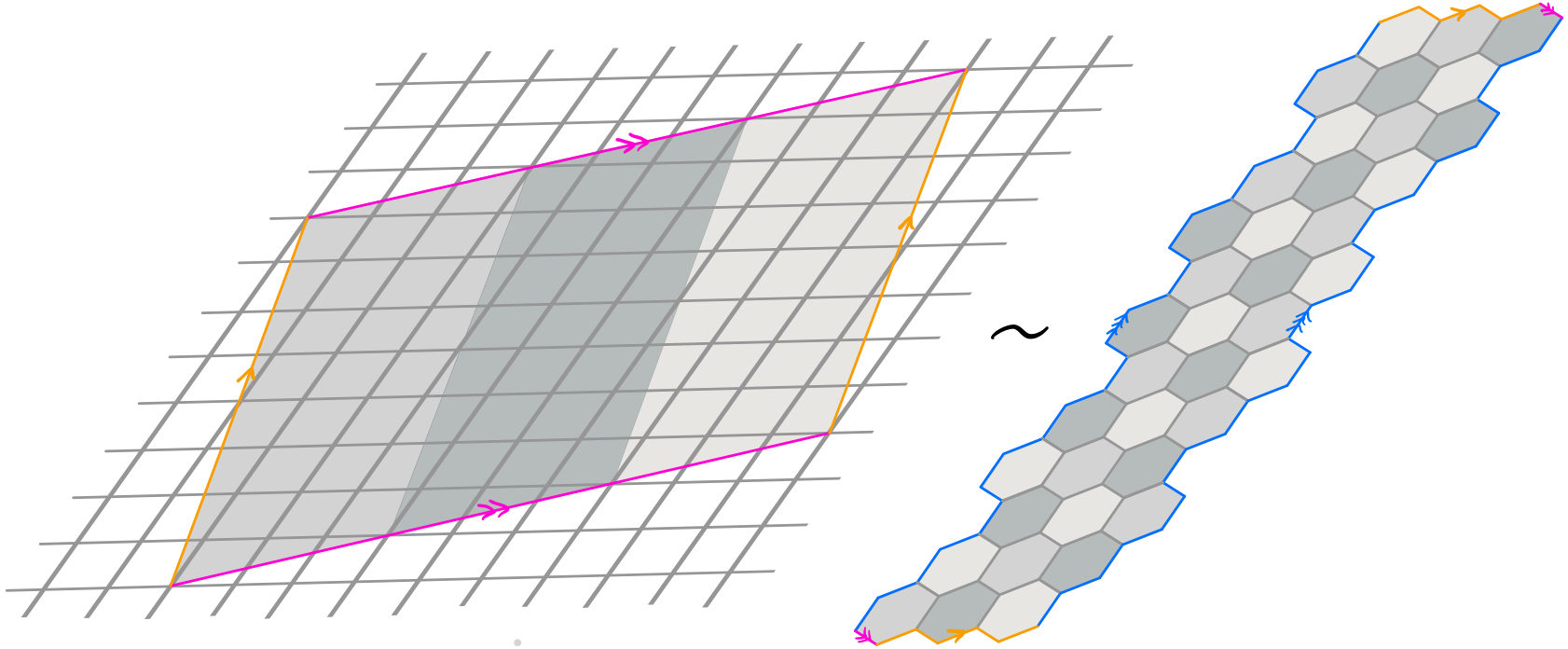}
    \caption{%
        If we instead tile together a multiple of three columns of these rectangles,
        we define a Floquet code with four logical qubits -
        the same number as for the toric colour code.
        The left diagram of this figure is thus the support of
        the smallest member of our family of toric Floquetified colour codes.
        As a patch of the Floquetified colour code bulk,
        this corresponds to the wider strip of colour code bulk on the right.
        Unlike in \Cref{fig:floquetified_toric_colour_code_tile}, this strip is three-colourable,
        and thus supports a colour code when toric boundary conditions are imposed on it.}
    \label{fig:floquetified_toric_colour_three_tiles}
\end{figure}

An alternative option is
to impose periodic boundary conditions on a patch of Floquetified colour code bulk.
On the right hand side of \Cref{fig:floquetified_colour_code_measurement_schedule}
we drew a
grey rectangle,
and said that by tiling copies of these rectangles across the square lattice,
we can define the whole bulk of the Floquetified colour code.
An obvious candidate, then, for the patch of bulk on which to impose a toric geometry, is this rectangle.
But it turns out that we can't just tile the torus any way we like with these rectangles;
in order to actually encode any logical information,
we must choose a tiling in which the number of columns of such rectangles is a multiple of three.
That is, if we choose a tiling in which this is not the case,
then we find that the logical Pauli group $\lpg{t}$ becomes trivial after a full period of measurements.
When we choose a tiling in which it is the case, on the other hand,
we find that $\lpg{t} \cong \PPP_4$ after establishment, just as for the toric colour code.
One explanation for this restriction is the fact that the underlying colour code is three-coloured;
indeed, analogously, one can't define a colour code on a toric honeycomb lattice that isn't three-colourable.
We sketch this correspondence in
\Cref{fig:floquetified_toric_colour_code_tile,fig:floquetified_toric_colour_three_tiles}.

    \section{Measurement and post-selection}\label{sec:measurement_vs_post_selection}

Throughout this paper, we've used post-selection instead of measurement in all of our ZX-diagrams.
Here we justify this.
ISG codes detect and correct errors using detectors, as defined in \Cref{subsubsec:detectors}.
The logical information they encode is characterised by their logical Pauli group $\lpg{t}$.
Suppose we draw the ZX-diagram for a given ISG code, and actually use measurement rather than post-selection.
That is, we parametrise spiders representing measurements by $\frac{1-m}{2}\pi$,
where $m$ is the measurement outcome
(we showed an example of this in \Cref{eq:zx_CSS_measurements} for CSS measurements).
The set of all detecting regions in this ZX-diagram
corresponds exactly to the set of all detectors for the code.
Likewise the set of all operating regions corresponds exactly to the logical Pauli group $\lpg{t}$.
Thus these two types of Pauli webs capture
all of the error-correcting and information-encoding capabilities of an ISG code.

The admissible Pauli webs for a Pauli spider are independent of whether that spider's phase is $0$ or $\pi$.
Hence detecting regions and operating regions are independent of this too.
In particular, they are unaffected by whether a spider representing a measurement has phase
$\frac{1-1}{2}\pi = 0$ or $\frac{1-(-1)}{2}\pi = \pi$.
Thus from an error correction perspective there is no loss of generality in assuming all the
measurements have outcome $m=1$, say - i.e.\ in using post-selection, rather than measurement.
In other contexts, of course, this is not justifiable.

This is hinted at in \Ccite[Appendix A]{MattTimeDynamics}
and stated even more clearly in \Ccite[III. Checks]{ZxFaultTolerancePsiQuantum} in their own language.
Craig Gidney has also tweeted about the idea of a `Pauli-free' ZX-calculus~\cite{PauliFreeZxCalculusTweet}
suitable for use in error correction,
in which a spider with phase $\alpha$ is equivalent to one with phase $\alpha + k\pi$, for any $k \in \ZZ$.

    \section{`Colour code' vs `color code'}

For fear of having their British passport revoked,
one third of the authors of this paper insisted on using `colour' instead of `color' throughout.
We apologise for any distress caused.
    \section{`Floquet': how to say it}\label{sec:how_to_say_it}
It rhymes with `okay'!
\end{document}